\documentclass[aps
	,pra
	,letterpaper
    ,twocolumn
	,superscriptaddress
	,amsmath
	,amssymb
	,amsfonts
	,eqsecnum
	,nofootinbib
	,longbibliography
	,nobalancelastpage
	,notitlepage]{revtex4-2}
\usepackage{graphicx}
\usepackage{capt-of}
\usepackage{amsthm}
\usepackage{tabulary}
\usepackage{mathtools}
\usepackage{bm}
\usepackage{braket}
\usepackage{datetime}
\usepackage{stackrel}
\usepackage{stmaryrd}
\usepackage{dsfont}
\usepackage[all]{xy}
\usepackage[hats]{quantum_local}
\usepackage[english]{babel}
\usepackage[normalem]{ulem}

\usetikzlibrary{calc,matrix}

\usepackage{citestyle}

\makeatletter
\@ifundefined{lemma}{\newtheorem{lemma}[theorem]{Lemma}}{}
\@ifundefined{corollary}{\newtheorem{corollary}[theorem]{Corollary}}{}
\makeatother

\begin{document}

\newcommand{\pcc}{_{\oplus^*}}
\newcommand{\bcc}{_{\boxplus^*}}
\newcommand{\sspace}{\mathcal{S}^{\Gamma}_{2n}}
\newcommand{\gammaspace}{\Delta^\Gamma_{2n}}
\newcommand{\fockbargn}{\mc{F}_n}
\newcommand{\symcayn}{\mathrm{Sp}_{2n}^{\Gamma}}
\newcommand{\symcayN}{\mathrm{Sp}_{4n}^{\Gamma}}
	\newcommand{\symplusn}{\mathrm{Sp}_{2n}^+}
	\newcommand{\sympluscayn}{\mathrm{Sp}_{2n}^{+\Gamma}}
	\newcommand{\sympluscayN}{\mathrm{Sp}_{4n}^{+\Gamma}}
	\newcommand{\gchannels}{\mathcal{G}}
	\newcommand{\gchannelsTP}{\mathcal{G}_{\mathrm{TP}}}
	\newcommand{\symplecticcomplexn}{\mathrm{Sp}_{2n}(\complex)}
	\newcommand{\symplecticn}{\mathrm{Sp}_{2n}}
	\newcommand{\symplecticN}{\mathrm{Sp}_{4n}}
	\newcommand{\kmat}{\sigma_z}
\newcommand{\amat}{\mc{A}}
\newcommand{\smat}{\mc{A}}
\newcommand{\matricesn}{M^n(\complex)}
\newcommand{\siegeln}{\Sigma_n}
\newcommand{\siegelN}{\Sigma_{2n}}
\newcommand{\siegeldiskn}{\Delta_n}
\newcommand{\siegeldiskN}{\Delta_{2n}}

\title{Gaussian dynamics in the double Siegel disk}

\author{Giacomo Pantaleoni}
\affiliation{School of Physics \& Astronomy, Monash University, Clayton, VIC 3800, Australia}
\author{Nicolas C. Menicucci}
\affiliation{Centre for Quantum Computation and Communication Technology, School of Science, RMIT University, Melbourne, VIC 3000, Australia}

\date{\today~at~\currenttime}

\begin{abstract}
		We show that deterministic multimode Gaussian channels admit a symmetric-space description. Passing from the n-mode Siegel disk to a doubled version of that space lets general Gaussian dynamics act by a linear-fractional (Mobius) transformation on a single matrix parameter. This doubled disk naturally parametrizes Gaussian kernels in the Fock-Bargmann representation, and contains an explicit physical subset corresponding to valid mixed Gaussian states. Starting from the standard X,Y parametrization of a deterministic Gaussian channel, we construct a normalized oscillator-semigroup element whose fractional action reproduces the channel update on that subset; Gaussian unitaries appear as the symplectic, isometric special case. This gives a bridge between covariance-matrix channel theory and the adjacency-matrix or symmetric-space picture, preserves a simple composition law given by matrix multiplication of the acting blocks, and suggests a direct route to graphical update rules beyond pure states.
\end{abstract}
\maketitle

	\section{Introduction}
	The set of all Gaussian pure states can be viewed as a space of symmetric matrices~\cite{simon_gaussian_1988}. Two such symmetric spaces are the Siegel upper half-plane and the Siegel disk~\cite{folland_harmonic_1989,menicucci_graphical_2011,gabay_passive_2016}. Gaussian dynamics of pure states is then described by tracking the evolution of these symmetric matrices under the action of linear fractional transformations---special transformations built from the blocks of a symplectic matrix that parametrizes the Gaussian unitary (via the metaplectic representation)~\cite{simon_gaussian_1988}. The half-plane and the disk (symmetric spaces) are related by the same linear transformation that maps quadrature operators into bosonic annihilation and creation operators: The half-plane describes a state from a field-quadrature point of view, while the disk description is more closely related to Fock space.

Symmetric spaces are of interest to at least two separate research areas. On one hand, the mathematical physics literature~\cite{folland_harmonic_1989,hilgert_note_1989,howe_oscillator_1988,fabec_holomorphic_2006,derezinski_quantization_2020} has considered the action of Gaussian transformations on symmetric spaces as an application of the metaplectic representation and the oscillator semigroup, which generalizes the former to non-unitary (typically non-deterministic) single-Kraus dynamics. Symmetric spaces emerge when studying the representation of these (semi-)groups on Gaussian states. Typical questions in this context relate to complications introduced by the semigroup structure of non-unitary Gaussian transformations on \emph{pure} Gaussian states~\cite{howe_oscillator_1988,folland_harmonic_1989}. Namely, the composition law, relationship to phase-space representations of non-unitary operations, and exploration of the (semi)-group structure (including subsemigroups) of the oscillator semigroup are all themes investigated by this line of research.

	On the other hand, the quantum optics community~\cite{simon_gaussian_1988,menicucci_graphical_2011,gabay_passive_2016} has been using symmetric spaces to describe specific quantum optical systems and continuous-variable protocols, especially in quantum information.
	Representative quantum-information work exploiting Gaussian graph/adjacency-matrix descriptions of multimode Gaussian states includes Refs.~\cite{alexander_universal_2018,walshe_streamlined_2021,xin_qumode_2023,descamps_noisy_2024}. When the dynamics of interest is fully Gaussian and restricted to pure Gaussian states, one might be tempted to move to a phase-space description such as the Wigner picture and rely on the fact that the evolution reduces to matrix operations on the covariance matrix. This is useful, but not entirely efficient. Under Gaussian unitaries, we do not need the full covariance matrix of a pure state: its symmetric matrix suffices, and unitaries act on it by simple matrix operations. In particular, the composition law is still given by matrix multiplication.

They are also expressive: as states are identified with symmetric matrices, it is possible to interpret them as adjacency matrices corresponding to graphs, which can be represented graphically in a framework~\cite{menicucci_graphical_2011} that essentially generalizes discrete-variable graph states~\cite{vandennest_graphical_2004}. The formalism is used to construct protocols for universal quantum computation that leverage Gaussian states~\cite{walshe_streamlined_2021,alexander_universal_2018}. It is a flexible tool available to continuous-variable quantum information. References~\cite{xin_qumode_2023,descamps_noisy_2024,nokkala_complex_2024} provide some examples of its use in current research.

We note in passing that discrete-variable graph states (more in general, stabilizer states) can be cast in the same framework, by considering an appropriately discretized version of the symplectic group. For example, the discrete version of these symmetric spaces and linear fractional transformations on them was used to describe the evolution of graph states~\cite{vandennest_graphical_2004,vandennest_efficient_2004}. Symmetric spaces can therefore provide precise analogies between stabilizer states and Gaussian states, and the Clifford group and the Gaussian group.

	Despite the usefulness and the mathematical elegance of Siegel spaces, a prominent shortcoming of the formalism is the inability to describe mixed Gaussian states or general (deterministic, CPTP) Gaussian channels---at best it captures single-Kraus (non-deterministic, trace-nonincreasing) Gaussian operations on pure states (implemented by the oscillator semigroup), although the framework is usually presented in the even more restrictive regime of fully unitary dynamics alone.

This is the problem that we solve in the present work by providing a generalization of the formalism that is at least as powerful as the traditional covariance matrix-based approach to non-unitary Gaussian dynamics. In doing so, we will also take the opportunity to stress that, even without the mixed-state extension, symmetric spaces already allow one to consider simple Gaussian non-unitary dynamics (albeit not the whole picture) with no need for extension. This latter observation is not novel, although rarely fully spelled out in the continuous-variable quantum information literature. To the author's knowledge, prior to this work, the task of connecting the dots in the literature faced by a quantum information theorist interested in representing a non-unitary single-Kraus Gaussian operation, was somewhat non-trivial.

After reviewing the formalism of the upper-half plane (arguably the more popular symmetric space in the context of quantum optics), we choose to focus our attention on the Siegel disk instead~\cite{menicucci_graphical_2011,gabay_passive_2016}. As we will show, the Siegel disk is related to the Fock-Bargmann kernel elements~\cite{bargmann_hilbert_1961,bargmann_representations_1962}, which are Gaussian for Gaussian states and operations. The Fock-Bargmann representation has been the object of recent literature under various names, and it is essential in the study of, in fact, \emph{non}-Gaussian states~\cite{chabaud_holomorphic_2021,walschaers_nongaussian_2021,miatto_fast_2020,yao_recursive_2022,chabaud_resources_2023} and Gaussian boson sampling~\cite{hamilton_gaussian_2017,arrazola_using_2018, kruse_detailed_2019,bulmer_boundary_2022,grier_complexity_2022, deng_solving_2023}. The Fock-Bargmann representation of states is closely related to the Fock space representation, to the extent that the Fock-Bargmann basis can be interpreted as a realization of number states on a Hilbert space of holomorphic wavefunctions. As such, the Fock-Bargmann Hilbert space is sometimes simply referred to as ``Fock space''\cite{folland_harmonic_1989,zhu_analysis_2012}.

This paper is structured as follows. In \cref{sec:definitions}, we establish the covariance-matrix conventions and symplectic notation used throughout. In \cref{sec:fractional-transformations}, we review the pure-state symmetric-space picture and the linear fractional transformations that encode Gaussian dynamics, and we connect the upper-half-plane and disk pictures via the Cayley transform. In \cref{sec:siegel-disk}, we develop the mixed-state and channel extension in the disk representation, culminating in the channel-level disk fractional update rule. We conclude in \cref{sec:conclusion-outlook} with an outlook and open problems.

\section{Basic definitions and conventions}\label{sec:definitions}

We work with \(n\)-mode systems. We write \(\hat{q}\) (resp.\ \(\hat{p}\)) for the column vector of position (resp.\ momentum) operators, and we stack them into the \(2n\)-vector
\(
\hat{r} \coloneqq (\hat{q}^{T},\hat{p}^{T})^{T}
\).
We fix the displacement-operator convention
\(
D(\xi)\coloneqq e^{i \xi^{T}\hat{r}}
\),
and, unless stated otherwise, we restrict attention to zero-mean states and treat displacements as an irrelevant gauge.

\begin{definition}[Real symplectic group]\label{def:symplectic-group}
	Let \(n\ge 1\) and define the standard symplectic form
	\begin{align}
		\Omega \coloneqq
		\begin{pmatrix}
			0 & I_{n} \\
			- I_{n} & 0
		\end{pmatrix}
		\,.
	\end{align}
	The real symplectic group \(\symplecticn\) is the set of all \(2n\times 2n\) real matrices \(S_{r}\) such that
	\begin{align}\label{eq:symplecticndef}
		S_{r}^{T} \Omega S_{r} = \Omega
		\,.
	\end{align}
\end{definition}

Pauli matrices will be ubiquitous throughout, although their size will depend on context. Thus, we define
\begin{definition}[Lifted Pauli matrices]\label{def:lifted-paulis}
	Let \(\sigma_{x},\sigma_{y},\sigma_{z}\in M_{2}(\complex)\) denote the standard Pauli matrices. For any block size \(m\ge 1\), define their lifted versions by
	\begin{align}
		\sigma_{\mu}^{[m]} \coloneqq \sigma_{\mu}\otimes I_{m}\in M_{2m}(\complex),
		\qquad
		\mu\in\{x,y,z\}
		\,.
	\end{align}
	Equivalently, the lifted Pauli matrices are the \(2m\times 2m\) block matrices
	\begin{align}
	\label{eq:sigmablockdef}
		\sigma_{x}^{[m]}
		 &\coloneqq
		\begin{pmatrix}
			0    & I_{m} \\
			I_{m} & 0
		\end{pmatrix}
		\,,
		\\
		\sigma_{y}^{[m]}
		 &\coloneqq
		\begin{pmatrix}
			0     & - i I_{m} \\
			i I_{m} & 0
		\end{pmatrix}
		\,,
		\\
		\sigma_{z}^{[m]}
		 &\coloneqq
		\begin{pmatrix}
			I_{m} & 0 \\
			0     & - I_{m}
		\end{pmatrix}
		\,.
	\end{align}
	Throughout the manuscript, the relevant value of \(m\) can always be inferred from context, and we will often omit the superscript \([m]\).
\end{definition}

To further suppress clutter, in block-matrix expressions, a scalar entry \(a\in\complex\) is understood as \(a I_{m}\), where \(m\) is the block size implied by context. This convention also includes simply \(1\) in place of an appropriately sized identity \(I_m\).

When working with \(2n\times 2n\) phase-space matrices, we will often identify the symplectic form \(\Omega\) with a lifted Pauli matrix by writing
\begin{align}
	\sigma_{y}^{[n]} = - i \Omega
	\,.
\end{align}
When no ambiguity can arise, we will suppress the superscript and write \(\sigma_{y}\) in place of \(\sigma_{y}^{[n]}\).
For equalities such as \cref{eq:symplecticndef}, one may equivalently work with \(\Omega\) or \(\sigma_{y}^{[n]}\). However, the factor of \(-i\) is not a cosmetic choice: it makes \(\sigma_{y}^{[n]}\) Hermitian, which is essential once we consider semigroup conditions and matrix inequalities later in the manuscript.

We choose this convention because each Pauli matrix will make an appearance throughout this article, playing different roles, especially when giving standard presentations for (semi-)groups. In addition, it allows readers to track how some of the changes of coordinates we perform manifest as a rotation of Pauli matrices among themselves.

\begin{definition}[Gaussian state]\label{def:gaussian-state}
	A zero-mean \(n\)-mode Gaussian state is fully described by its \(2n\times 2n\) covariance matrix
	\begin{align}
		\Sigma_{jk} \coloneqq \frac{1}{2}\tr\!\pqty{\rho\{\hat r_{j},\hat r_{k}\}}
		\,.
	\end{align}
	By Williamson's decomposition, \(\Sigma\) can be written as
	\begin{align}
		\label{eq:covariance_matrix}
		\Sigma
		=
		\frac{1}{2}
		S_{r}
		D
		S_{r}^{T}
		\,,
	\end{align}
	where \(S_{r}\in\symplecticn\) and
	\(
	D=
	\begin{psmallmatrix}
		\nu & 0\\
		0 & \nu
	\end{psmallmatrix}
	\),
	with \(\nu=\mathrm{diag}(\nu_{1},\dots,\nu_{n})\) and \(\nu_{i}\ge 1\)~\cite{simon_quantumnoise_1994,serafini_quantum_2017,adesso_continuous_2014}.
\end{definition}

The uncertainty principle implies \(\nu_{i}\ge 1\). When \(\nu_{i}=1\) for all \(i\), the state is pure.
A Gaussian unitary is a unitary that maps every Gaussian state to a Gaussian state; by the Stone--von Neumann theorem, these are exactly the metaplectic unitaries (up to displacements, which we suppress throughout). More generally, a Gaussian \emph{operation} is a completely positive, trace-nonincreasing map that preserves the set of Gaussian states; its trace-preserving subclass are the Gaussian channels. Following standard terminology (e.g.\ Serafini~\cite{serafini_quantum_2017}), we will sometimes refer to trace-preserving Gaussian channels as \emph{deterministic} and trace-nonincreasing Gaussian operations as \emph{non-deterministic}.

To describe how Gaussian states transform under symplectic operations, we use the metaplectic representation.

\begin{definition}[Metaplectic representation and group]\label{def:metaplectic}
	For each \(S_{r}\in\symplecticn\), the (projective) metaplectic representation assigns a unitary \(U_{S_{r}}\) (unique up to an overall sign) such that
	\begin{align}\label{eq:metaplecticdefint}
		D(\xi) U_{S_{r}} = U_{S_{r}} D(S_{r}^{T} \xi)
		\,,
	\end{align}
	for all \(2n\)-dimensional real vectors \(\xi\).
	The set of all such unitaries (under composition) forms the metaplectic group, a double cover of \(\symplecticn\).
\end{definition}

With this convention, if a state \(\rho\) evolves in the Schr\"odinger picture as \(\rho' = U_{S_{r}} \rho U^{\dag}_{S_{r}}\), then its covariance matrix transforms as
\begin{align}
	\label{eq:sigmaprime}
	\Sigma' = S_{r} \Sigma S_{r}^{T}
	\,.
\end{align}
The composition law for symplectic operations is matrix multiplication. The covariance matrix \(\Sigma\) appears in the characteristic function (directly) and in the Wigner function (through \(\Sigma^{-1}\)).

For later use (particularly in \cref{sec:fractional-transformations}), we collect the block structure of symplectic matrices. A \emph{real} symplectic matrix \(S_{r}\) can be decomposed into four \(n\times n\) blocks:
\begin{align}
	\label{eq:ABCD}
	S_{r}
	 & =
	\begin{pmatrix}
		A & B \\
		C & D
	\end{pmatrix}
	\,,
\end{align}
and condition \cref{eq:symplecticndef} then reads
\begin{align}
	A^{T} C
	=
	C^{T} A
	\,,\quad
	B^{T} D
	=
	D^{T} B
	\,,\quad
	A^{T} D
	-
	C^{T} B
	=
	1
	\label{eq:sympblocks}
	\,.
\end{align}

As anticipated, we will prefer the Siegel disk to the upper half-plane. When working in the disk picture, it is more convenient to use a complex version of \(\Sigma\) by conjugating with a fixed matrix that we define as

\begin{definition}[Cayley matrix]\label{def:cayley-matrix}
	The \emph{Cayley matrix} is the \(2\times 2\) matrix
	\begin{align}
			\Gamma
		\coloneqq
		\frac{1}{\sqrt{2}}
		\begin{pmatrix}
			1 & - i \\
			1 & i   \\
		\end{pmatrix}
		\,.
	\end{align}
\end{definition}

Conjugation by \(\Gamma\) is the operation that transports real phase-space objects (symplectic matrices, covariance matrices) into the complex form natural to the disk picture. We give this operation a name for convenience.

\begin{definition}[Adjoint-by-Cayley (ABC)]\label{def:abc}
	Given a real \(2n\times 2n\) matrix \(X_{r}\), we define its \emph{adjoint-by-Cayley} (ABC) counterpart as
	\begin{align}
		\label{eq:cayley_matrix}
			X
			\coloneqq
			\Gamma
			X_{r}
			\Gamma^\dag
		\,.
	\end{align}
\end{definition}
so that the \emph{complex covariance matrix}, \(\sigma\), is defined as the ABC real covariance matrix
\begin{align}
	\label{eq:complex_covariance_matrix}
		\sigma
	=
	\Gamma \Sigma \Gamma^{\dag}
	=
	\frac{1}{2}
		S
	(\nu \oplus \nu)
		S^{\dag}
	\,,
\end{align}
	where $S = \Gamma S_{r} \Gamma^{\dag}$ is an element of the ABC symplectic group \(\symcayn \cong \mathrm{Sp}(n, \mathbb{R})\) (the isomorphism being \(\Gamma\)), and
\( \Gamma (\nu \oplus \nu) \Gamma^{\dag} = \nu \oplus \nu \). The diagonal matrix \(\nu\) must satisfy \(\nu \ge 1\), and can be regarded as a set of temperatures. Note that the definition of \( \sigma \) is also a matter of conventions: Other works consider its complex conjugate instead \cite{adesso_continuous_2014,yao_recursive_2022}, since they work with the complex conjugate of \( \Gamma \).
The matrix $\sigma$ transforms under the Gaussian unitary identified by \(S\) as
\begin{align}
	\label{eq:covariance_transformation}
		\sigma' & =  S \sigma S^{\dag} \,.
\end{align}
Note that the Cayley matrix induces the Fock-metaplectic representation as an intertwiner between displacement operators expressed as exponentials of creation and annihilation operators. One can see this by rewriting~\cref{eq:metaplecticdefint} in terms of bosonic annihilation and creation operators. The Cayley matrix \(\Gamma\) allows one to pass from position/momentum to annihilation/creation, as well as from a real symplectic matricex \(S_{r}\) to an ABC \(S\) matrix.

\section{Pure Gaussian Dynamics in Symmetric Spaces}
\subsection{Kinematics}
	A pure Gaussian state is a state whose wavefunction is a multivariate Gaussian function. In the position representation, where a position eigenstate is \(\ket{x}\), Gaussian states are parametrized with an adjacency matrix as~\cite{simon_gaussian_1988,menicucci_graphical_2011}
\begin{align}
	\label{eq:gaussian_pure_pos}
		\bra{x} \ket{Z}
	=
		\pqty{\frac{\det(\operatorname{Im} Z)}{\pi^{n}}}^{\frac{1}{4}}
		e^{\frac{i}{2} x^T Z x}
	\,
\end{align}
	where \(Z\) is a symmetric matrix with positive definite imaginary part. (Symmetric matrices suffice to describe all Gaussian states, as the product $x^T Z_0 x$ kills off the antisymmetric part of any matrix $Z_0$.) Positive definitess of the imaginary part ensures that the states are square integrable functions. The set of matrices satisfying these conditions is called the \emph{Siegel upper half-plane}:
\begin{definition}[Siegel upper half-plane]\label{def:siegel-uhp}
	\begin{align}\label{eq:siegelplanedef}
		\siegeln =
		\{
			Z \in \matricesn \big|
			Z = Z^{T}
		,\
			\text{Im} {Z} > 0
		\}
		\,,
	\end{align}
\end{definition}
which therefore we identify with the set of pure gaussian states themselves.

\begin{center}
	\begin{minipage}{\linewidth}
	\centering
	\begin{tikzpicture}[x=1.1cm,y=1.1cm,>=stealth,every node/.style={font=\scriptsize}]
		\node[font=\scriptsize] at (-1,2.2) {\(\Sigma_{1}\)};
		\fill[gray!8] (-1.45,0) rectangle (1.45,1.85);
		\draw[thick,dashed] (-1.45,0) -- (1.45,0);
		\draw[thick,->] (0,0) ++(90:1.05) arc (90:40:1.05)
			node[midway,above] {\(S_r\)};
		\fill (0,1.05) circle (1.2pt) node[above left] {\(Z_{0}=i\)};
		\fill ({cos(40)*1.05},{sin(40)*1.05}) circle (1.2pt) node[right] {\(Z_{1}\)};
		\draw[->] (-1.55,0) -- (1.55,0) node[below] {\(\reals\)};
		\draw[->] (0,-0.3) -- (0,1.95) node[left] {\(i\reals\)};
	\end{tikzpicture}
		\captionof{figure}{Upper half-plane intuition (\(n=1\)). The Siegel upper half-plane is \(\siegeln=\{Z=Z^{T},\ \operatorname{Im}Z>0\}\); for one mode it reduces to \(\{z\in\complex\mid \operatorname{Im}z>0\}\). The vacuum corresponds to \(Z_{0}=i\) (i.e.\ \(iI_n\) in general). A Gaussian unitary \(S_r\in\symplecticn\) acts by a M\"obius-type fractional transformation \(Z\mapsto\phi_{S_r}(Z)\), which preserves \(\operatorname{Im}Z>0\).}
	\label{fig:uhp-intuition}
	\end{minipage}
\end{center}

The quantities we deal with in this paper are more closely related to the Fock-Bargmann \cite{bargmann_representations_1962} wavefunction of Gaussian states, also known as stellar function \cite{chabaud_stellar_2020,chabaud_resources_2023}, rather than the wavefunction in position representation. It is useful to define it as an inner product with coherent states as it makes the connection with the Husimi function, which we require to establish for our purposes, more transparent. (A more standard definition can be found in~\cite{folland_harmonic_1989}.) If \(\ket{n}\) is a Fock state, and a coherent state is
\begin{align}\label{eq:coherent}
	\ket{\zeta} = e^{-\half \abs{\zeta}^{2}} \sum_{n=0}^{\infty} \frac{\zeta^{n}} {\sqrt{n!}} \ket{n}
	\,,
\end{align}
then the Fock-Bargmann wavefunction is obtained by projecting a state onto a coherent state and removing the Gaussian envelope.

\begin{definition}[Fock-Bargmann wavefunction]\label{def:fock-bargmann}
	The Fock-Bargmann wavefunction is the entire function \(\psi\) that can be constructed from the Fock-state expansion
	\begin{align}
		\label{eq:fock_purestate}
		\ket{\psi} & =
		\sum_{n} \psi_n \ket{n}
	\end{align}
	from the inner product with a coherent state~\cite{ali_coherent_2014}:
	\begin{align}\label{eq:fockbdef}
		\psi(\zeta)
		\coloneqq
		e^{\half \abs{\zeta}^{2}} \bra{\zeta^{*}} \ket{\psi}
		=
		\sum_{n} \frac{\zeta^n}{\sqrt{n!}} \psi_n
		\,.
	\end{align}
\end{definition}
The Fock-Bargmann wavefunction is thus an expansion with respect to the Fock coefficients,
$\psi(\zeta) = \sum_{n} \psi_{n} u_{n}(\zeta)$,
in terms a basis of entire functions
$u_{n} (\zeta) = \frac{\zeta^n}{\sqrt{n!}} $.
The scalar product of two wavefunctions $\psi$ and $\phi$, in this Hilbert space, must be taken with respect to the measure \(d\mu(\zeta)\), as
\begin{align}\label{eq:fbinnerproduct}
	\braket{\psi} {\phi} & = \int d \mu(\zeta) \psi^{*} (\zeta) \phi(\zeta)
	\\
	d \mu                & = \pi^{-n} e^{ -\abs{\zeta}^{2} } d\zeta
	\,,
\end{align}
where $d\zeta$ is the $2n$-dimensional Lebesgue measure in $\complex^{n}$.
With respect to the inner product of this Hilbert space, $\fockbargn$, the eigenstates are normalized in the sense that \(\bra{\zeta} \ket{\omega} = e^{\zeta \omega^{*}}\), so that
\begin{align}\label{eq:fbdeltaexpansion}
	\psi(\zeta) = \int d \mu(\omega) e^{\zeta \omega^{*}} \psi(\omega)
	\,.
\end{align}
The isometry that maps \(L^{2}(\reals^{n})\) into \(\fockbargn\) is the Bargmann transform\cite{bargmann_hilbert_1961,bargmann_representations_1962}. We will not need its explicit form, but suffices it to say that it maps Gaussian wavefunctions in position representation to Gaussian wavefunctions in Fock-Bargmann representation~\cite{bargmann_hilbert_1961,hilgert_note_1989}, which are parametrized as 
\begin{align}
	\label{eq:gaussian_fock-bargmann}
		\bra{\zeta} \ket{K}
	=
	\det\pqty{1 - K K^{*}}^{\frac{1}{4}}
		e^{ \frac 1 2 \zeta^{T} K \zeta }
	\,,
\end{align}
with \(\zeta\) \(n\)-dimensional complex variable.
	The requirement that the above is an entire function means that $\norm {K} < 1$, where $\norm{K}$ is the operator norm of $K$. This is equivalent to
	$K^{*} K < 1$.
In addition, \( K^{T} = K \). The set of matrices satisfying these conditions is the Siegel disk.

\begin{definition}[Siegel disk]\label{def:siegel-disk}
	The \emph{Siegel disk} is
	\begin{align}\label{eq:siegeldiskdef}
		\siegeldiskn =
		\{
			K \in \matricesn \big|
			K = K^{T}
		\,,
			K^{*} K < 1
		\}
		\,.
	\end{align}
\end{definition}

We can think of this space as a disk of matrices of radius one, a generalization of the Poincar\'e disk.
Since the Bargmann transform is an isometry, or, alternatively, since the Cayley transform is a bijection, the disk is equivalent to the half-plane and its elements represent all Gaussian states.

\subsection{Dynamics}\label{sec:fractional-transformations}

We briefly go back to the half-plane, before abandoning it in favor of the disk.
The adjacency matrix \(Z\) of a pure Gaussian state transforms under a symplectic operation via a matrix-valued generalization of a M\"obius transformation, called a \emph{linear fractional transformation}.

	\begin{definition}[Linear fractional transformation]\label{def:fractional-transformation}
	A \emph{linear fractional transformation} with respect to a \(2n\times2n\)-dimensional matrix \(T\) written in \(n\times n\) block form as
	\begin{align}
		T =
		\begin{pmatrix}
			A & B \\
			C & D
		\end{pmatrix}
		\,,
	\end{align}
		is the map~\cite{folland_harmonic_1989,simon_gaussian_1988,menicucci_graphical_2011}
	\begin{align}
		\label{eq:linearfractionalreal}
		\phi_{T}(Z)
		 & =
		( D Z + C )
		( B Z  + A )^{-1}
		\,.
	\end{align}
	For a generic block matrix \(T\), \(\phi_T\) is a partially-defined rational map (it is only defined where the inverse exists). In this manuscript, \(T\) will always belong to specific groups/semigroups (symplectic or Siegel-domain preserving) chosen precisely so that \(\phi_T\) preserves the relevant Siegel domain (half-plane or disk), making it a well-defined dynamical update on the corresponding state coordinate.
\end{definition}

	The two key properties of fractional transformations induced by symplectic matrices are that they preserve the Siegel upper half-plane and act transitively on it. These are classic results (see, e.g., Ref.~\cite{folland_harmonic_1989}), but later on, we will still sketch out the proofs of the equivalent statements in the disk.

\begin{theorem}[Symplectic action preserves the Siegel upper half-plane]\label{thm:symplectic-preserves-uhp}
	If \(S_{r}\in\symplecticn\), then \(\phi_{S_{r}}(\siegeln)\subseteq\siegeln\). Equivalently, \(\phi_{S_{r}}\) maps pure Gaussian states to pure Gaussian states in the upper half-plane representation. See Ref.~\cite{folland_harmonic_1989}.
\end{theorem}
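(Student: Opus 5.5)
The plan is to encode \(Z\) as the \(2n\times n\) "graph" matrix \(\begin{psmallmatrix}1\\ Z\end{psmallmatrix}\), let \(S_{r}\) act on it linearly, and transport the two defining properties of \(\siegeln\) (symmetry and \(\operatorname{Im}Z>0\)) through the symplectic form \(\Omega\). Concretely, with the block decomposition \cref{eq:ABCD} and the ordering convention of \cref{def:fractional-transformation}, I set
\begin{align}
	\begin{pmatrix} P\\ Q\end{pmatrix}
	\coloneqq
	S_{r}\begin{pmatrix} 1\\ Z\end{pmatrix}
	=
	\begin{pmatrix} A+BZ\\ C+DZ\end{pmatrix}
	\,,
\end{align}
so that, whenever \(P\) is invertible, \(\phi_{S_{r}}(Z)=QP^{-1}\).

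The key identities come from \cref{eq:symplecticndef}. Since \(S_{r}\) is real, it preserves both the bilinear pairing \(X^{T}\Omega Y\) and the sesquilinear pairing \(X^{\dag}\Omega Y\). For \(Y=\begin{psmallmatrix}1\\ Z\end{psmallmatrix}\) we have \(\Omega Y=\begin{psmallmatrix} Z\\ -1\end{psmallmatrix}\). This gives
\begin{align}
	P^{T}Q-Q^{T}P &= Z-Z^{T}=0\,,\\
	P^{\dag}Q-Q^{\dag}P &= Z-Z^{*}=2i\operatorname{Im}Z\,,
\end{align}
where the second line uses \(Z^{\dag}=Z^{*}\) for symmetric \(Z\). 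Equivalently, one can expand these identities using the block relations \cref{eq:sympblocks}, but the \(\Omega\)-invariance route avoids any block bookkeeping.

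The main obstacle is showing that \(P=A+BZ\) is invertible, i.e.\ that \(\phi_{S_{r}}\) is actually defined on all of \(\siegeln\). I will handle it with the second identity. If \(Pv=0\) for some \(v\in\complex^{n}\), then sandwiching the identity between \(v^{\dag}\) and \(v\) gives
\begin{align}
	2i\,v^{\dag}(\operatorname{Im}Z)v=(Pv)^{\dag}Qv-(Qv)^{\dag}Pv=0\,.
\end{align}
Positive definiteness of \(\operatorname{Im}Z\) then forces \(v=0\), so \(P\) is invertible.

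With \(W\coloneqq QP^{-1}\), I conjugate both identities by \(P^{-1}\). The first becomes \(P^{-T}(P^{T}Q-Q^{T}P)P^{-1}=W-W^{T}=0\), so \(W\) is symmetric. The second becomes \(W-W^{\dag}=2iP^{-\dag}(\operatorname{Im}Z)P^{-1}\). Because \(W\) is symmetric, \(W-W^{\dag}=W-W^{*}=2i\operatorname{Im}W\). Hence \(\operatorname{Im}W=P^{-\dag}(\operatorname{Im}Z)P^{-1}\), which is a congruence of a positive definite matrix by an invertible one and is therefore positive definite. Thus \(W\in\siegeln\), which proves \(\phi_{S_{r}}(\siegeln)\subseteq\siegeln\).
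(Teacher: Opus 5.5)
Your proof is correct. It is the same stacked-notation (Siegel) argument the paper has in mind. The paper defers this theorem to Folland, but its appendix proof of the disk analogue (\cref{thm:sympplus-preserves-disk}) uses the same three steps: invertibility of the denominator from the Hermitian form, symmetry from isotropy under the bilinear form, and positivity by congruence with the inverse denominator. The only difference is cosmetic: because \(S_r\) is real, the single form \(\Omega\) plays both the bilinear and the sesquilinear role, so the congruence \(\operatorname{Im}\phi_{S_r}(Z)=P^{-\dag}(\operatorname{Im}Z)P^{-1}\) is an exact identity rather than an inequality.
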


\begin{theorem}[Transitivity/onto property]\label{thm:uhp-transitive}
	For any \(Z,Z'\in\siegeln\), there exists \(S_{r}\in\symplecticn\) such that \(Z'=\phi_{S_{r}}(Z)\). In particular, the symplectic action on \(\siegeln\) is transitive (equivalently, fractional transformations with respect to symplectic matrices are onto). See Ref.~\cite{folland_harmonic_1989}.
\end{theorem}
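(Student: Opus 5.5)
The plan is to reduce transitivity to reaching a single base point, the vacuum \(Z_{0}=iI_{n}\), using explicit symplectic matrices, and then to use that the fractional action is compatible with matrix multiplication.

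\emph{Step 1: the action is well defined and compatible with products.} Observe that with the convention of \cref{eq:linearfractionalreal},
\begin{align}
	T\begin{pmatrix} 1 \\ Z\end{pmatrix}
	=
	\begin{pmatrix} A+BZ \\ C+DZ\end{pmatrix},
\end{align}
so \(Z\) is represented by the ``graph'' \((1;Z)\), or more generally by any pair \((U;V)\) with \(Z=VU^{-1}\). In this representation \(\phi_{T}\) is just left multiplication by \(T\) followed by renormalisation to the form \((1;\cdot)\).

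Hence \(\phi_{T}\circ\phi_{T'}=\phi_{TT'}\) wherever both sides are defined, and \(\phi_{I}=\mathrm{id}\). To see that \(A+BZ\) is invertible for \(S_{r}\in\symplecticn\) and \(Z\in\siegeln\), set \(U=A+BZ\) and \(V=C+DZ\). Sandwich \(S_{r}^{T}\Omega S_{r}=\Omega\) between \((1;Z)^{\dagger}\) and \((1;Z)\). Using \(Z^{\dagger}=\bar Z\) (by symmetry), this gives
\begin{align}
	U^{\dagger}V-V^{\dagger}U=Z-\bar Z=2i\operatorname{Im}Z .
\end{align}
If \(Uv=0\) for some \(v\), then \(v^{\dagger}(\operatorname{Im}Z)v=0\), which forces \(v=0\). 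So \(U\) is invertible. Together with \cref{thm:symplectic-preserves-uhp}, this shows that \(\phi_{S_{r}}\) is a genuine group action of \(\symplecticn\) on \(\siegeln\), with \(\phi_{S_{r}^{-1}}\) inverse to \(\phi_{S_{r}}\).

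\emph{Step 2: reach any point from \(iI_{n}\).} Write \(Z=X+iY\) with \(X,Y\) real symmetric and \(Y>0\). Two explicit families of symplectic matrices do the job; in each case the membership check is immediate from \cref{eq:sympblocks}.
\begin{itemize}
	\item Take \(A=Y^{-1/2}\), \(B=C=0\), \(D=Y^{1/2}\). Then \(A^{T}D=1\), so the matrix is symplectic, and \(\phi(Z)=Y^{1/2}ZY^{1/2}\), which sends \(iI_{n}\mapsto iY\).
	\item Take \(A=D=1\), \(B=0\), \(C=X\). This is symplectic because \(X\) is symmetric, and \(\phi(Z)=Z+X\).
\end{itemize}
Their product \(S_{Z}\) (translation after scaling) therefore satisfies \(\phi_{S_{Z}}(iI_{n})=X+iY=Z\).

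\emph{Step 3: conclude.} For arbitrary \(Z,Z'\in\siegeln\), take \(S_{r}=S_{Z'}S_{Z}^{-1}\). By Step 1,
\begin{align}
	\phi_{S_{r}}(Z)=\phi_{S_{Z'}}\bigl(\phi_{S_{Z}^{-1}}(Z)\bigr)=\phi_{S_{Z'}}(iI_{n})=Z'.
\end{align}

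The only step with real content is Step 1, namely invertibility of \(BZ+A\) and the composition law for this particular (transposed-looking) ordering of blocks. The identity \(U^{\dagger}V-V^{\dagger}U=2i\operatorname{Im}Z\) handles invertibility. The graph representation \((1;Z)\) fixes the ordering and confirms that the action is a homomorphism rather than an anti-homomorphism.
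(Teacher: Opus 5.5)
Your proof is correct and takes essentially the same route as the paper: construct an explicit symplectic matrix carrying the vacuum \(iI_n\) to an arbitrary point, then take \(S_{Z'}S_Z^{-1}\) and apply the composition law. For the half-plane the paper only indicates this through the vacuum-to-\(Z\) formula and cites Folland, while the disk analogue (\cref{thm:symcayn-preserves-disk-transitive}) is proved this way with a disk ``boost'' \(S_K\). The differences are minor: you use a scaling-then-translation factorization in half-plane coordinates, and you prove invertibility of \(BZ+A\) via \(U^{\dagger}V-V^{\dagger}U=2i\operatorname{Im}Z\) instead of the paper's \(\sigma_z\)-form contradiction argument in \cref{app:thm-sympplus-preserves-disk}.
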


More specifically, there exists a linear fractional transformation that maps the vacuum $i$ into any $Z \in \siegeln$,
	\begin{align}
		\label{eq:fromvacuumz}
		Z
		=
	\pqty{D i + C}
		\pqty{B i + A}^{-1}
			\,.
		\end{align}
Here \(i\) denotes \(iI_{n}\): inserting \(Z=iI_{n}\) into \cref{eq:gaussian_pure_pos} gives \(\braket{x}{i}=\pi^{-n/4}e^{-\frac12 x^{T}x}\), the standard vacuum wavefunction (equivalently, \(\Sigma=\tfrac12 I_{2n}\) in the real-quadrature convention of \cref{eq:covariance_matrix}).

	We will sometimes use an alternative description of fractional transformations in stacked notation~\cite{menicucci_graphical_2011}. Concretely, we stack two \(n \times n\) matrices (\(P\) and \(Q\)) into a \(2n \times n\) matrix \(\pqty{P^{T},Q^{T}}^{T}\), but we regard two such stacks as the same object if they differ by a change of representative (right multiplication) by an invertible \(n\times n\) matrix. Thus, we write
\begin{align}
	\begin{bmatrix}
		P \\ Q
	\end{bmatrix}
	\coloneqq
	\Bqty{
		\begin{pmatrix}
			P H
			\\
			Q H
		\end{pmatrix}
		\ \Big|\ 
		H\in \mathrm{GL}(n)
	}
	\,,
	\label{eq:stackednotation}
\end{align}
	and any choice of \(H\) represents the same stacked object; this is the standard projective/homogeneous-coordinate description of the relevant Grassmannian quotient (see, e.g., Ref.~\cite{freitas_revisiting_2004}).
In this notation, we can think of a fractional transformation as a matrix multiplication from left of \((P^{T}, Q^{T})^T\) by a \(2n \times 2n \) matrix \(S_{r}\), modulo matrix multiplication from right. Namely,
\begin{align}
	\label{eq:LABEL}
	\begin{pmatrix}
		A
		 &
		B
		\\
		C
		 &
		D
	\end{pmatrix}
	\begin{bmatrix}
		1 \\
		Z \\
	\end{bmatrix}
	=
	\begin{bmatrix}
		1
		\\
		\pqty{D Z + C}
		\pqty{B Z + A}^{-1}
	\end{bmatrix}
	\,
\end{align}
maps the representative $\bqty{1, Z}^{T}$ to the representative whose \(Z\) matrix undergoes the fractional transformation~\cref{eq:linearfractionalreal}. Namely,
\(
S_r\begin{bmatrix}1\\ Z\end{bmatrix}
=\begin{bmatrix}A+BZ\\ C+DZ\end{bmatrix}
=\begin{bmatrix}1\\ \phi_{S_r}(Z)\end{bmatrix}(BZ+A)
\),
and right-multiplication by \(BZ+A\in\mathrm{GL}(n)\) is exactly the representative-change in \cref{eq:stackednotation} (this is also where invertibility enters). In the stacked picture, it is clear that the composition law for fractional transformations is matrix multiplication. We state this as a Lemma,
\begin{lemma}[Composition law for fractional transformations]\label{lem:fractional-composition-law}
	Let \(S_1,S_2\in\symplecticcomplexn\). Whenever both sides are defined (i.e., the relevant block inverses exist), the induced linear fractional transformations satisfy
	\begin{align}
		\label{eq:compositionlaw}
		\phi_{S_2}
		\circ
		\phi_{S_1}
		=
		\phi_{S_2 S_1}
		\,.
	\end{align}
\end{lemma}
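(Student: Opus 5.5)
The plan is to argue in the stacked (homogeneous-coordinate) picture of \cref{eq:stackednotation}, where a fractional transformation is just left multiplication modulo right multiplication by $\mathrm{GL}(n)$. Note that nothing in the argument uses the symplectic condition; it holds for any block matrices for which the relevant inverses exist. Write $S_1=\begin{psmallmatrix}A_1&B_1\\C_1&D_1\end{psmallmatrix}$ and $S_2=\begin{psmallmatrix}A_2&B_2\\C_2&D_2\end{psmallmatrix}$. Fix $Z$ in the domain where both sides are defined. This means that $M_1\coloneqq B_1Z+A_1$ is invertible, and that, setting $Z_1\coloneqq\phi_{S_1}(Z)$, the matrix $M_2\coloneqq B_2Z_1+A_2$ is invertible.

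The first step is to record the identity displayed just before the lemma, valid for any block matrix $T$ and any $Z$ with $BZ+A$ invertible:
\begin{align}
T\begin{pmatrix}1\\ Z\end{pmatrix}=\begin{pmatrix}1\\ \phi_T(Z)\end{pmatrix}(BZ+A)\,.
\end{align}
Applying it with $T=S_1$ gives $S_1\begin{psmallmatrix}1\\ Z\end{psmallmatrix}=\begin{psmallmatrix}1\\ Z_1\end{psmallmatrix}M_1$. Multiplying on the left by $S_2$ and applying the identity again with $T=S_2$ and argument $Z_1$ yields
\begin{align}
S_2S_1\begin{pmatrix}1\\ Z\end{pmatrix}=\begin{pmatrix}1\\ \phi_{S_2}(Z_1)\end{pmatrix}M_2M_1\,.
\end{align}

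The second step is to read off the blocks of the left side. Writing $S_2S_1=\begin{psmallmatrix}A&B\\C&D\end{psmallmatrix}$, the top block of the left side is $BZ+A$, and the equation above shows it equals $M_2M_1$. That product is invertible, so $\phi_{S_2S_1}(Z)$ is automatically defined. The bottom block gives $DZ+C=\phi_{S_2}(Z_1)M_2M_1$. Therefore
\begin{align}
\phi_{S_2S_1}(Z)=(DZ+C)(BZ+A)^{-1}=\phi_{S_2}(\phi_{S_1}(Z))\,,
\end{align}
which is the claim.

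The only point requiring care, which I expect to be the main subtlety rather than a real obstacle, is the domain bookkeeping. Invertibility of the composite denominator must follow from, and not be assumed in addition to, the invertibility of $M_1$ and $M_2$; the factorization $BZ+A=M_2M_1$ handles exactly this. One should also note that the converse fails in general: $\phi_{S_2S_1}$ may be defined at points where $\phi_{S_1}$ is not. This is why the statement is phrased "whenever both sides are defined."
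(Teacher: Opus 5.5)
Your proof is correct and follows the paper's stacked-notation argument. You apply $S_1$ to the stack $(1, Z)^{T}$, absorb $B_1Z+A_1$ as a change of representative, then apply $S_2$ and use associativity. Your explicit factorization $BZ+A=M_2M_1$ of the composite denominator adds something the paper's sketch leaves implicit: it shows that $\phi_{S_2S_1}(Z)$ is automatically defined whenever $\phi_{S_2}(\phi_{S_1}(Z))$ is.
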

This is well known, but sketching out the proof presents a good opportunity to illustrate how to work with stacked notation. 
	\begin{proof}[Proof (Sketch)]
		In stacked notation, \(Z\) is represented by \(\begin{bmatrix}1\\ Z\end{bmatrix}\) up to right multiplication by \(\mathrm{GL}(n)\). If \(S_1=\begin{psmallmatrix}A_1&B_1\\ C_1&D_1\end{psmallmatrix}\), then
		\begin{align}
			S_1\begin{bmatrix}1\\ Z\end{bmatrix}
		=
		\begin{bmatrix}A_1+B_1Z\\ C_1+D_1Z\end{bmatrix}
		=
		\begin{bmatrix}1\\ \phi_{S_1}(Z)\end{bmatrix}(B_1Z+A_1),
	\end{align}
		where right multiplication by \((B_1Z+A_1)\in \mathrm{GL}(n)\) is a representative change. Applying \(S_2=\begin{psmallmatrix}A_2&B_2\\ C_2&D_2\end{psmallmatrix}\) and using associativity,
		\begin{align}
			S_2S_1\begin{bmatrix}1\\ Z\end{bmatrix}
			&\sim
			S_2\begin{bmatrix}1\\ \phi_{S_1}(Z)\end{bmatrix}
			\\
			&=
			\begin{bmatrix}1\\ \phi_{S_2}(\phi_{S_1}(Z))\end{bmatrix}
			\pqty{B_2\phi_{S_1}(Z)+A_2}
			\,,
		\end{align}
			so \(S_2S_1\) induces the same fractional update as \(\phi_{S_2}\circ\phi_{S_1}\), i.e. \(\phi_{S_2}\circ\phi_{S_1}=\phi_{S_2S_1}\) wherever all inverses exist.
		\end{proof}
Using stacked notation is a clever shortcut devised by Siegel to simplify the proof of the two basic theorems presented above.

The relevance of the Siegel upper half-plane to physics is explained by the relationship between the action of the metaplectic representation on Gaussian states parametrized in \(\siegeln\). If $S_{r}$ is a real symplectic matrix, then states evolve according the metaplectic representation \(U_{S}\) (which is unitary) as
\begin{align} \label{eq:fraconpositionrep}
	\op U_{S_r} \ket{Z} & = \ket{Z' = \phi_{S_r} (Z)}
	\,.
\end{align}
This can be verified by Gaussian integration and the metaplectic representation with our conventions~\cite{folland_harmonic_1989,derezinski_quantization_2020}.

Gaussian unitaries are represented by the metaplectic group \(\mathrm{Mp}(2n,\reals)\), which double-covers \(\symplecticn\). Accordingly, a given symplectic matrix \(S_r\) has two metaplectic lifts whose unitaries differ by a global sign, \(\pm U_{S_r}\), which is physically irrelevant. Separately, the fractional action on \(Z\) depends only on \(S_r\) modulo \(\{\pm 1\}\) (since \(\phi_{S_r}=\phi_{-S_r}\)), so the induced update \(Z\mapsto\phi_{S_r}(Z)\) is well-defined on the symmetric-space coordinate.

The equality \(\phi_{S_r}=\phi_{-S_r}\) is a feature of the symmetric-space coordinate: the central element \(-I\in\symplecticn\) acts trivially on the Siegel upper half-plane, so the fractional action factors through \(\symplecticn/\{\pm I\}\).
By contrast, the metaplectic ``\(\pm\)'' refers to the fact that a \emph{fixed} \(S_r\in\symplecticn\) has two metaplectic lifts whose unitaries differ by an overall phase (in particular, a sign). These two sources of ``\(\pm\)'' live in different places and should not be conflated.

\subsubsection{Non-unitary dynamics in the Siegel upper half-plane}
Gaussian unitaries are not the most general physically acceptable transformations that map Gaussian pure states to Gaussian pure states: there are also single-Kraus (typically non-deterministic, trace-nonincreasing) Gaussian operations. The corresponding set of matrices acting on \(\siegeln\) is the oscillator semigroup, which we will realize below as a Siegel-domain preserving semigroup of fractional transformations. To specify this set of transformations precisely, we must first define

\begin{definition}[Complex symplectic group]\label{def:complex-symplectic-group}
	The complex symplectic group \(\symplecticcomplexn\) is the set of all \(2n\times 2n\) complex matrices \(S_{c}\) satisfying
	\begin{align}\label{eq:complexsymplecticdef}
		S_{c}^{T} \sigma_{y} S_{c} = \sigma_{y}
		\,.
	\end{align}
	Equivalently, since \(\sigma_{y}^{[n]}=-i\Omega\), this is the usual condition \(S_{c}^{T}\Omega S_{c}=\Omega\).
\end{definition}

To show that a subset of this group preserves the upper half-plane, we leverage the fact that the complex symplectic group preserves symmetry.
Symmetry of \(Z\) has a convenient formulation in stacked notation:
\begin{align}
	\label{eq:zsiegelreq1}
	\begin{bmatrix}
		1 \\
		Z
	\end{bmatrix}^{T}
	\sigma_{y}
	\begin{bmatrix}
		1 \\
		Z
	\end{bmatrix}
	=
	0
	\,,
\end{align}
which is verified by expanding \cref{eq:zsiegelreq1} and noting that it gives a multiple of \(Z^{T}-Z\). Thus, 

\begin{lemma}[Complex symplectic action preserves symmetry]\label{thm:complex-symplectic-preserves-symmetry}
	Write \(S_{c}\in\symplecticcomplexn\) in \(n\times n\) blocks as \(S_{c}=\begin{psmallmatrix}A&B\\ C&D\end{psmallmatrix}\). If \(Z\in M_{n}(\complex)\) is symmetric and \(BZ+A\) is invertible (so that \(\phi_{S_{c}}(Z)\) is well-defined), then \(Z' \coloneqq \phi_{S_{c}}(Z)\) is symmetric.
\end{lemma}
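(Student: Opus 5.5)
The plan is to work entirely in stacked notation and use the characterization \cref{eq:zsiegelreq1} of symmetry. A matrix \(Z\) is symmetric exactly when the representative \(\begin{bmatrix}1\\ Z\end{bmatrix}\) is isotropic for the bilinear form defined by \(\sigma_{y}\). Here "isotropic" means \(M^{T}\sigma_{y}M=0\) for the \(2n\times n\) matrix \(M=\begin{psmallmatrix}1\\ Z\end{psmallmatrix}\). This condition does not depend on the representative: replacing \(M\) by \(MH\) with \(H\in\mathrm{GL}(n)\) gives \(H^{T}(M^{T}\sigma_{y}M)H\), which vanishes if and only if \(M^{T}\sigma_{y}M\) does. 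So isotropy is a property of the stacked object \(\begin{bmatrix}P\\ Q\end{bmatrix}\) itself.

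The key steps run as follows.

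\textbf{Step 1.} Starting from symmetric \(Z\), set \(M=\begin{psmallmatrix}1\\ Z\end{psmallmatrix}\). By \cref{eq:zsiegelreq1}, \(M^{T}\sigma_{y}M=0\).

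\textbf{Step 2.} Apply \(S_{c}\). Using \cref{eq:complexsymplecticdef},
\begin{align}
	(S_{c}M)^{T}\sigma_{y}(S_{c}M)=M^{T}S_{c}^{T}\sigma_{y}S_{c}M=M^{T}\sigma_{y}M=0
	\,.
\end{align}
Note that only the transpose appears, with no complex conjugation. This is what makes the complex (rather than real or unitary) symplectic condition the right one here.

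\textbf{Step 3.} As in the proof of \cref{lem:fractional-composition-law}, write
\begin{align}
	S_{c}M=\begin{pmatrix}1\\ Z'\end{pmatrix}(BZ+A)
	\,,
\end{align}
with \(Z'=\phi_{S_{c}}(Z)\). This step uses the hypothesis that \(BZ+A\) is invertible. Set \(H=(BZ+A)\). Then
\begin{align}
	0=H^{T}\begin{pmatrix}1\\ Z'\end{pmatrix}^{T}\sigma_{y}\begin{pmatrix}1\\ Z'\end{pmatrix}H
	\,,
\end{align}
and multiplying by \((H^{T})^{-1}\) on the left and \(H^{-1}\) on the right shows that \(\begin{psmallmatrix}1\\ Z'\end{psmallmatrix}\) is isotropic.

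\textbf{Step 4.} Invoke \cref{eq:zsiegelreq1} in the reverse direction to conclude \(Z'^{T}=Z'\).

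The only point needing care is that \cref{eq:zsiegelreq1} is a genuine equivalence. I would check this by computing explicitly:
\begin{align}
	\begin{pmatrix}1\\ Z\end{pmatrix}^{T}\sigma_{y}\begin{pmatrix}1\\ Z\end{pmatrix}=-i(Z-Z^{T})
	\,,
\end{align}
up to a sign fixed by the convention \(\sigma_{y}^{[n]}=-i\Omega\). This vanishes if and only if \(Z\) is symmetric, so both directions hold.

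I do not expect any real obstacle: the argument is purely algebraic, and invertibility of \(BZ+A\) is given by hypothesis.
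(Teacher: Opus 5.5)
Your proof is correct and follows the paper's argument: you encode symmetry of \(Z\) as isotropy of \(\begin{bmatrix}1\\ Z\end{bmatrix}\) under \(\sigma_y\), note that this isotropy is preserved by \(S_c^{T}\sigma_y S_c=\sigma_y\), and observe that right multiplication by the invertible \(BZ+A\) does not affect whether the form vanishes (the paper spells out the same steps in the appendix proof of \cref{thm:sympplus-preserves-disk}). Your explicit check is also sign-exact, \(\begin{psmallmatrix}1\\ Z\end{psmallmatrix}^{T}\sigma_y\begin{psmallmatrix}1\\ Z\end{psmallmatrix}=-i(Z-Z^{T})\), so no sign hedge is needed.
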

\begin{proof}[Proof (Sketch)]
		Using stacked notation, symmetry of \(Z\) is equivalent to the isotropy condition \cref{eq:zsiegelreq1}.
		This isotropy is preserved under the stacked action of \(S_{c}\) because \(S_{c}^{T}\sigma_{y}S_{c}=\sigma_{y}\), and because changing representatives by right multiplication does not affect whether the quadratic form vanishes.
	\end{proof}

However, the imaginary part of \(Z'\) in this case is not necessarily greater than zero (note that this was always true if \(S \in \symplecticn\), i.e., for Gaussian unitaries). We can give a simple example of a complex symplectic matrix that induces a fractional transformation mapping half-plane elements to other half-plane elements. Consider the matrix
\begin{align}\label{eq:sepsilon}
	S_{\epsilon} =
	\begin{pmatrix}
		1          & 0 \\
		i \epsilon & 1
	\end{pmatrix}
	\,,
\end{align}
which is a complex symplectic matrix associated with the non-unitary evolution under the Gaussian Kraus operator $n(\epsilon) e^{-\epsilon \q^{2}}$ (multiplication by a Gaussian, with \(n(\epsilon)\) such that the final state is normalized). This is a non-unitary operation that damps a wavefunction and renormalizes it as $n(\epsilon) e^{-\epsilon \q^{2}} \psi(x) = n(\epsilon) e^{-\epsilon x^{2}} \psi(x)$. On Gaussian states, the operation gives
$e^{-\epsilon \q^{2}}\ket{Z} = \ket{\phi_{S_{\epsilon}}(Z)} = \ket{Z + i\epsilon}$. Clearly, if $\epsilon>0$, $Z + i \epsilon$ still lies in the upper half-plane for any $Z$. On the other hand, negative values of $\epsilon$  may map some states into the \emph{lower} half-plane. As such, the corresponding transformation is not physical. But \(S_{-\epsilon}\) is the inverse of \(S_{\epsilon}\), hence the latter only belongs to a semigroup.
For example, in the one-mode case, if \(Z=i/2\) and \(\epsilon=-1\) then \(\phi_{S_{\epsilon}}(Z)=Z+i\epsilon=-i/2\), which lies outside the upper half-plane.
Interpreting \(M_{\epsilon}\propto e^{-\epsilon \q^{2}}\) as a Gaussian Kraus operator, it defines a single-Kraus Gaussian operation \(\rho\mapsto M_{\epsilon}\rho M_{\epsilon}^{\dag}\), which is generally trace-nonincreasing (non-deterministic). The explicit prefactor \(n(\epsilon)\) corresponds to conditioning on this Kraus outcome and renormalizing the resulting state; at the level of density matrices this conditioning is nonlinear, but it induces a well-defined update on the (pure-state) half-plane representative \(Z\) that we track here. In these conventions the physical constraint is \(\epsilon\ge0\), ensuring \(Z\mapsto Z+i\epsilon\) preserves \(\operatorname{Im}Z>0\).

This motivates restricting the set of complex symplectic matrices to those that preserve the upper half-plane. The appropriate restriction is a Siegel-domain preserving semigroup \cite{hilgert_note_1989,derezinski_quantization_2020}.

\begin{definition}[Siegel-domain preserving semigroup (upper half-plane)]\label{def:contraction-semigroup-uhp}
	Define
	\begin{align}\label{eq:sympplusdef}
		\symplusn \coloneqq \{T \in \symplecticcomplexn\ |\ T^{\dag} \sigma_{y} T \ge \sigma_{y} \}
		\,,
	\end{align}
	where the inequality is a positive-semidefinite (matrix) inequality (recall that \(\sigma_{y}\) is Hermitian by convention).
\end{definition}
Although just write \(\symplusn\) omitting the field, it is helpful to remember that it is defined as a subsemigroup of the \emph{complex} symplectic group \(\symplecticcomplexn\) via \cref{eq:sympplusdef}. A simple figure to help keep track of notation and set inclusions at a glance is \cref{fig:symplusn-inclusion}.

The direction of the inequality is a convention: replacing \(\sigma_{y}\) by \(-\sigma_{y}\) swaps \(\ge\) and \(\le\). What matters is that \(\symplusn\) is exactly the set of complex symplectic matrices whose associated linear fractional transformation \(\phi_{T}\) preserves the Siegel upper half-plane.
\begin{center}
	\begin{minipage}{\linewidth}
	\centering
	\begin{tikzpicture}[every node/.style={font=\scriptsize,align=center},>=stealth]
		\node[draw,rounded corners,minimum width=5.4cm,minimum height=2.9cm,inner sep=6pt] (outer) {};
		\node[anchor=north] at (outer.north) {\(\symplecticcomplexn\)};

		\node[draw,rounded corners,fill=gray!12,very thick,minimum width=4.2cm,minimum height=1.85cm,inner sep=5pt] (inner) at (outer.center) {\(\symplusn\)};
		\draw[->] (inner.south) -- ++(0,-0.65) node[below] {\(\partial\symplusn=\symplecticn\)};
	\end{tikzpicture}
		\captionof{figure}{Schematic inclusion \(\symplecticn\subset\symplusn\subset\symplecticcomplexn\). The complex symplectic group \(\symplecticcomplexn\) consists of \(2n\times2n\) complex matrices satisfying \(S^{T}\sigma_{y}S=\sigma_{y}\). The Siegel-domain preserving semigroup is \(\symplusn=\{T\in\symplecticcomplexn\mid T^{\dag}\sigma_{y}T\ge\sigma_{y}\}\); its induced fractional transformations preserve the Siegel upper half-plane \(\siegeln=\{Z=Z^{T},\ \operatorname{Im}Z>0\}\). The boundary \(\partial\symplusn=\symplecticn\) (where the inequality saturates) recovers Gaussian unitaries.}
	\label{fig:symplusn-inclusion}
	\end{minipage}
\end{center}

\begin{theorem}[Siegel-domain preserving semigroup preserves the Siegel upper half-plane]\label{thm:sympplus-preserves-uhp}
	If \(T\in\symplusn\), then \(\phi_{T}(\siegeln)\subseteq\siegeln\).
\end{theorem}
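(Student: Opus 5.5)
The plan is to work entirely in the stacked notation of \cref{eq:stackednotation}, exactly as in the proof of \cref{lem:fractional-composition-law}, and to translate the positivity of \(\operatorname{Im}Z\) into a Hermitian form condition on the stack. For symmetric \(Z\), write \(W=\begin{psmallmatrix}1\\ Z\end{psmallmatrix}\). A direct block computation with \(\sigma_{y}=\begin{psmallmatrix}0&-i\\ i&0\end{psmallmatrix}\) gives
\begin{align}
	W^{\dag}\sigma_{y}W = -iZ + iZ^{\dag} = -i(Z-Z^{*}) = 2\operatorname{Im}Z \,,
\end{align}
where we used \(Z^{\dag}=Z^{*}\) for symmetric \(Z\). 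So \(Z\in\siegeln\) is equivalent to the symmetry condition \cref{eq:zsiegelreq1} together with \(W^{\dag}\sigma_{y}W>0\). This is the Hermitian companion of the bilinear isotropy condition.

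Next, let \(T=\begin{psmallmatrix}A&B\\ C&D\end{psmallmatrix}\in\symplusn\) and \(Z\in\siegeln\). The main obstacle is showing that \(\phi_{T}(Z)\) is defined at all, i.e.\ that \(H\coloneqq A+BZ\) is invertible; for real symplectic \(T\) this is classical, but here it must come out of the semigroup inequality. I would argue by contradiction. Suppose \(Hv=0\) for some \(v\neq0\). Then \(TWv=\begin{psmallmatrix}0\\ y\end{psmallmatrix}\) with \(y=(C+DZ)v\). The form \(\sigma_{y}\) vanishes on any vector with zero top block, so \((TWv)^{\dag}\sigma_{y}(TWv)=0\). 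On the other hand, \cref{eq:sympplusdef} gives
\begin{align}
	(TWv)^{\dag}\sigma_{y}(TWv) \ge v^{\dag}W^{\dag}\sigma_{y}Wv = 2\,v^{\dag}(\operatorname{Im}Z)v > 0 \,,
\end{align}
which is a contradiction. Hence \(H\in\mathrm{GL}(n)\), and \(TW=\begin{psmallmatrix}1\\ Z'\end{psmallmatrix}H\) with \(Z'=\phi_{T}(Z)\).

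Symmetry of \(Z'\) then follows from \cref{thm:complex-symplectic-preserves-symmetry}, since \(T\in\symplecticcomplexn\). For positivity, write \(W'=\begin{psmallmatrix}1\\ Z'\end{psmallmatrix}\) and apply the first computation to the symmetric matrix \(Z'\):
\begin{align}
	H^{\dag}(2\operatorname{Im}Z')H = H^{\dag}W'^{\dag}\sigma_{y}W'H = W^{\dag}T^{\dag}\sigma_{y}TW \ge W^{\dag}\sigma_{y}W = 2\operatorname{Im}Z > 0 \,.
\end{align}
Because \(H\) is invertible, congruence preserves positive definiteness, so \(\operatorname{Im}Z'>0\) and therefore \(Z'\in\siegeln\).

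The whole argument uses only the defining inequality of \(\symplusn\) and the complex symplectic condition. The only delicate point is the invertibility step, which is where positive-definiteness (rather than semidefiniteness) of \(\operatorname{Im}Z\) is essential.
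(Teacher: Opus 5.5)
Your proof is correct and complete. Each step checks out: the identity $W^{\dag}\sigma_y W = 2\operatorname{Im}Z$; the observation that the Hermitian form $\sigma_y$ vanishes on vectors with zero top block, which contradicts the strict lower bound $2v^{\dag}(\operatorname{Im}Z)v>0$ and so forces $A+BZ$ to be invertible; symmetry via Lemma~\ref{thm:complex-symplectic-preserves-symmetry}; and positivity of $\operatorname{Im}Z'$ by congruence with $H$. Your inequality direction also agrees with the paper's example $S_\epsilon$, for which $S_\epsilon^{\dag}\sigma_y S_\epsilon-\sigma_y=2\epsilon\oplus 0$.

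The route, however, is not the paper's. For this half-plane statement the paper gives no argument of its own. It invokes Koufany's Theorem~6.1, which is Olshanskii's characterization of the compression semigroup of a Hermitian symmetric domain as the elements of $G_{\complex}$ mapping the domain into itself, specialized to $\siegeln$ and $\symplecticcomplexn$. What you wrote is instead the half-plane transcription of the paper's own stacked-notation proof of the disk analogue (Theorem~\ref{thm:sympplus-preserves-disk}, Appendix~\ref{app:thm-sympplus-preserves-disk}), with $\sigma_y$ in place of $\sigma_z$. The only structural change is that top-zero vectors are null for $\sigma_y$ rather than negative for $\sigma_z$, and that still suffices because the lower bound is strict.

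Your approach gives a self-contained, elementary proof that uses only the defining inequality of $\symplusn$ and complex symplecticity. This is exactly the isolated lemma the paper's footnote says it could not find stated explicitly in the literature. The cited Lie-semigroup theory delivers more than the theorem asks for: the converse (domain-preserving elements of $\symplecticcomplexn$ lie in $\symplusn$) and uniformity over all Hermitian symmetric domains. The price is heavy machinery.

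A third option also works. Conjugation by $\Gamma$ carries $\symplusn$ onto the disk semigroup. Via the composition law, the statement can then be pulled back from the disk theorem.
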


This is a special case of the general Lie-semigroup theory of Hermitian symmetric domains: the (Olshanskii) compression semigroup in the complexified group \(G_{\complex}\) can be characterized as the set of elements mapping the domain into itself. To the best of our knowledge, the crispest single statement of this identification is Koufany's Theorem~6.1~\cite{koufany_jordan_2006}, attributed there to Olshanskii~\cite{olshanskii_complex_1991}; specializing to the Siegel domain \(\siegeln\) and \(G_{\complex}=\symplecticcomplexn\) yields the claim.\footnote{In the oscillator-semigroup literature, the same domain-preserving property is often treated as folklore and is implicit in the construction (see, e.g., Refs.~\cite{howe_oscillator_1988,hilgert_note_1989}). For example, Folland's development of the (normalized) oscillator semigroup is organized around Gauss kernels parametrized by a Siegel domain, and would be difficult to reconcile with the Gaussian-state update picture if the induced holomorphic action failed to preserve the domain; however, we have not found a single isolated lemma/proposition there stating \(\phi_T(\siegeln)\subseteq\siegeln\) explicitly in semigroup form.}
The boundary case \(\partial\symplusn=\symplecticn\) recovers Gaussian unitaries and is classical~\cite{folland_harmonic_1989}.

These operations are realized on Hilbert space by Gaussian Kraus operators in the \emph{oscillator semigroup}. Concretely, in the position basis each element is a Gauss-kernel operator, whose integral kernel is determined by a symmetric matrix \(\mathcal Z\). Since the kernel depends on both input and output variables \((x,y)\in\reals^{2n}\), the parameter \(\mathcal Z\) naturally lives in the Siegel domain of doubled dimension (i.e., the same definition as \(\Sigma_{n}\) but with \(n\mapsto 2n\)), which we denote by \(\siegelN\).

\begin{definition}[Normalized oscillator semigroup]\label{def:normalized-oscillator-semigroup}
	Following~\cite{folland_harmonic_1989}, the \emph{normalized oscillator semigroup} is the set of Gauss-kernel operators
	\begin{align}\label{eq:oscsemigrouphalfplane}
		\Omega^{0} = \Bqty{\hat V_{\mathcal Z}\ |\ \mathcal Z \in \siegelN}
		\,.
	\end{align}
	Here \( \hat V_{\mathcal Z} \) is a Gauss-kernel operator, whose matrix element is
	\begin{align}\label{eq:gaussmateluhp}
		\bra{x} \hat V_{\mathcal Z} \ket{y} =
		\exp(\frac{i}{2} (x^{T}, y^{T}) \mathcal Z (x^{T}, y^{T})^{T})
		\,,
	\end{align}
	and we fix a canonical overall scalar normalization.
\end{definition}
Elements of \(\Omega^{0}\) are Gaussian Kraus operators, hence they implement single-Kraus Gaussian operations on pure states (with conditional renormalization). This goes beyond the unitary metaplectic/symplectic action (recovered on the boundary), but it is still not large enough to represent general deterministic Gaussian channels, which involve sums (or integrals) over Kraus operators.
Equation \cref{eq:gaussmateluhp} simply fixes conventions in a concrete basis; later we will mostly use the same semigroup \(\Omega^{0}\) in the Fock--Bargmann representation, where Hilgert's parametrization makes the relation to disk fractional transformations transparent~\cite{hilgert_note_1989}. The normalization in \cref{eq:oscsemigrouphalfplane} fixes an overall scalar so that composition is well behaved~\cite{folland_harmonic_1989,derezinski_quantization_2020}; it is unrelated to trace normalization (in particular, typically \(\tr(\hat V_{\mathcal Z})\neq 1\)).
In the terminology above, these are non-deterministic Gaussian operations: they are naturally described by Kraus operators (or instruments), and their ``renormalized'' action on normalized states corresponds to conditioning on an outcome.

With this normalization fixed, \(\Omega^{0}\) projects onto \(\symplusn\) with the usual \(\pm1\) ambiguity, in direct analogy with the metaplectic/symplectic double cover (see, e.g., Refs.~\cite{howe_oscillator_1988,hilgert_note_1989}). In this way, \(\symplusn\) provides the fractional action on \(\siegeln\), while \(\Omega^{0}\) provides its Hilbert-space realization; \cref{fig:oscsemigroup-schematic} summarizes the relationship.
\begin{center}
	\begin{minipage}{\linewidth}
	\centering
		\begin{tikzpicture}[>=stealth,every node/.style={font=\scriptsize,align=center},
			row sep=1.6cm, column sep=1.4cm]
				\node (siegel2) at (0,0) {\(\siegelN\)};
				\node (omega) at (2,0) {\(\Omega^{0}\)};
				\node (symplus) at (4,0) {\(\symplusn\)};
				\node (siegel) at (6.5,0) {\(\siegeln\cong\symplecticn/K(n)\)};
			\node (mp) at (2,-1.6) {\(\mathrm{Mp}(2n,\reals)\)};
			\node (sp) at (4,-1.6) {\(\symplecticn\)};
				\draw[<->] (siegel2) -- node[above]{\(\cong\)} (omega);
				\draw[->>] (omega) -- node[above]{\(\pm\)} (symplus);
				\draw[->] (symplus) -- node[above]{\(\curvearrowright\)} (siegel);
				\draw[->>] (mp) -- node[above]{\(\pm\)} (sp);
				\draw[->] (sp) -- node[above right,inner sep=1pt]{\(\curvearrowright\)} (siegel);
				\draw[->,dashed] (symplus.south) -- node[left]{\(\partial\)} (sp.north);
		\end{tikzpicture}
				\captionof{figure}{Oscillator-semigroup dictionary (upper half-plane). Elements of the normalized oscillator semigroup \(\Omega^{0}\) are Gaussian Kraus operators (Gauss-kernel operators) parametrized by the doubled Siegel domain \(\siegelN=\Sigma_{2n}\). Modulo the standard \(\pm\) ambiguity, \(\Omega^{0}\) projects to the matrix semigroup \(\symplusn\subset\symplecticcomplexn\). The notation \(G\curvearrowright X\) denotes an action: each \(T\in G\) determines an endomorphism \(\phi_T\colon X\to X\), and evaluating it gives the update \(Z\mapsto \phi_T(Z)\). In our case \(X=\siegeln\) and \(\phi_T\) is the linear fractional map induced by \(T\). This is the pure-state precursor of the channel-level picture developed later.}
	\label{fig:oscsemigroup-schematic}
	\end{minipage}
	\end{center}
	Our main result generalizes this picture from single-Kraus (non-deterministic) Gaussian operations on pure states to \emph{deterministic} Gaussian channels acting on density matrices. In the disk formulation, points of \(\siegeldiskN\) (the disk counterpart of \(\siegelN\)) naturally parametrize oscillator-semigroup elements; a key step is to identify the subset corresponding to physical density operators.

\subsection{The Siegel Disk}\label{sec:siegel-disk}
The central object of interest in this article is the adjacency matrix associated to states in Fock-Bargmann representation. This matrix lives in the Siegel disk, $\siegeldiskn$, defined in \cref{eq:gaussian_fock-bargmann}. Each half-plane representative of a pure gaussian state corresponds to a disk representative. They are related by the Cayley transform,\footnote{Strictly speaking there is no such thing as \emph{the} Cayley transform: many linear fractional coordinate changes go by that name. In the Siegel-space literature, however, there is a standard ``relevant'' one intertwining the upper half-plane and disk pictures, and we will follow this convention and refer to it as \emph{the} Cayley transform when no ambiguity can arise. Since the fixed matrix \(\Gamma\) is the one that implements this coordinate change (by conjugation), we refer to it as the \emph{Cayley matrix}---Cayley as in ``Cayley transform'', not as an attribution that Arthur Cayley would necessarily insist on.} a linear fractional transformation with respect to the Cayley matrix,
\begin{align}
	K
	=
	\phi_{\Gamma}(Z)
	\label{eq:kfromz}
	\,,
\end{align}
which is the effect of the Bargmann transform on the adjacency matrix of a state in position representation. Here \(\phi_{\Gamma}\) is a change of coordinates between the half-plane and disk representatives (not a dynamical map). An adjacency matrix $K \in \siegeldiskn$ belongs to the disk if and only $Z = \phi^{-1}_{\Gamma}(K)$ belongs to the  half-plane, $\siegeln$. We can find the transformation law for $K$ with respect to a gaussian unitary associated to $S_{r} \in \symplecticn$ by inspecting the Cayley transform of $Z \in \siegeln$ under the action of $S_{r}$. Using the composition law, \cref{eq:compositionlaw}, we have
$
	\phi_{\Gamma}(\phi_{S_{r}}(Z))
	=
	\phi_{\Gamma S_{r} \Gamma^{\dag}} \pqty{K} $.
Thus, if \(Z \in \siegeln\) transforms under the fractional transformation induced by \(S_{r} \in \symplecticn \), $K \in \siegeldiskn$ transforms under the fractional transformation induced by its ABC version $S = \Gamma S_{r} \Gamma^{\dag}$.

\begin{center}
	\begin{minipage}{\linewidth}
	\centering
	\begin{tikzpicture}[x=1.3cm,y=1.3cm,>=stealth,every node/.style={font=\scriptsize}]
		\node[font=\scriptsize] at (0,1.35) {\(\Delta_{1}\)};
		\fill[gray!8] (0,0) circle (1);
		\draw[thick,dashed] (0,0) circle (1);
		\fill (0,0) circle (1.2pt) node[below left] {\(K_{0}=0\)};
		\fill (0.65,0.3) circle (1.2pt) node[above left] {\(K_{1}\)};
		\draw[thick,->] (0,0) -- (0.65,0.3) node[midway,above] {\(S\)};
		\draw[thin,|-|] (0,-1.15) -- node[below]{\(1\)} (1,-1.15);
	\end{tikzpicture}
		\captionof{figure}{Disk intuition (\(n=1\)). The Siegel disk is \(\siegeldiskn=\{K=K^{T},\ I-K^{\dag}K>0\}\); for one mode it reduces to the unit disk \(\{k\in\complex\mid |k|<1\}\). The vacuum corresponds to \(K_{0}=0\). A Gaussian unitary \(S\in\symcayn\) acts by a M\"obius-type fractional transformation \(K\mapsto\phi_{S}(K)\), which preserves \(\siegeldiskn\) and is related to the upper half-plane action by the Cayley transform.}
	\label{fig:disk-intuition}
	\end{minipage}
\end{center}
	For commutative-diagram presentations of the Cayley transform intertwining the (unitary) symplectic action on the upper half-plane and disk, see, e.g., Refs.~\cite{menicucci_graphical_2011,gabay_passive_2016,freitas_revisiting_2004,koufany_jordan_2006}.

In terms of the ABC matrices $S$, fractional transformations in the disk are
\begin{align}
	\label{eq:linearfractional_K}
	K' & = \phi_{S}(K)
	\nonumber                                               \\
	S  & \in \symcayn = \Gamma (\symplecticn) \Gamma^{\dag}
	\,,
\end{align}
where we introduced the ABC symplectic group \(\symcayn\). A presentation for this group is
\begin{align}
			\label{eq:defsympcomp}
			S \sigma_{z} S^\dag & = \sigma_{z}
		\\
			\label{eq:defblockym}
			S \sigma_{x}        & = \sigma_{x} S^{*}
			\,,
	\end{align}
	where \(\sigma_{x}\) and \(\sigma_{z}\) are the first and third Pauli matrices.
\begin{definition}[ABC symplectic group]\label{def:symcayn}
	The ABC symplectic group \(\symcayn\) is the Cayley-conjugate of the real symplectic group:
	\begin{align}
		\symcayn \coloneqq \Gamma \symplecticn \Gamma^{\dag}
		\,.
	\end{align}
	Equivalently, it is the set of all \(2n\times 2n\) complex matrices \(S\) satisfying \cref{eq:defsympcomp,eq:defblockym}. 
\end{definition}
Despite consisting of complex matrices, \(\symcayn\) is just a change-of-coordinates copy of the \emph{real} symplectic group \(\symplecticn=\mathrm{Sp}_{2n}(\reals)\). This is why we use the superscript \(\Gamma\) rather than writing it as another ``\(\mathrm{Sp}_{2n}(\complex)\)'' object, and it helps distinguish \(\symcayn\) from the genuinely complex semigroup \(\symplusn\subseteq\mathrm{Sp}_{2n}(\complex)\) introduced in \cref{def:contraction-semigroup-uhp}.
As in the upper half-plane picture, it is often practical to represent Gaussian transformations in block form. Any ABC real matrix is in the form
\begin{align}
	S
	 &
	=
	\begin{pmatrix}
		\alpha  & \beta    \\
		\beta^* & \alpha^* \\
	\end{pmatrix}
	\,.
\end{align}
[This last statement is equivalent to~\cref{eq:defblockym}.]
In block form, the presentation of \(\symcayn\) reads
\begin{align}
	\label{eq:alphabetaone}
	\alpha ^\dag \alpha - \beta ^T \beta ^*
	 & =
	1
	\\
	\label{eq:alphabetatwo}
	\alpha ^T \beta ^*
	 & =
	\beta ^\dag \alpha
	\,,
\end{align}
or, alternatively
\begin{align}
	\label{eq:alphabetaonealt}
	\alpha \alpha ^\dag - \beta \beta ^\dag
	 & =
	1
	\\
	\label{eq:alphabetatwoalt}
	\alpha \beta^T
	 & =
	\beta \alpha^T
	\,.
\end{align}
We can write the formula in \cref{eq:linearfractional_K} explicitly as
\begin{align}
	\label{eq:linearfractional}
	\phi_{S}(K)
	=
	( \alpha^{*} K + \beta^{*} )
	( \beta K  + \alpha )^{-1}
	\,,
	\end{align}
	where $\alpha$ and $\beta$ are a symplectic pair of blocks that satisfy \cref{eq:alphabetaone,eq:alphabetatwo,eq:alphabetaonealt,eq:alphabetatwoalt}. From now on the symbols $\alpha$ and $\beta$ are always understood as a symplectic pair in this sense. Also note that the fractional transformation thus defined still follows the ``DCBA'' rule.

	In perfect analogy with the Siegel upper half-plane picture, the (unitary) ABC symplectic action preserves the disk and acts transitively on it. We sketch the salient parts of the argument, as the stacked-notation manipulations will be useful later on. For completeness, we state the formal result later as \cref{thm:symcayn-preserves-disk-transitive} and give a complete proof in Appendix \cref{app:thm-symcayn-preserves-disk-transitive}. Using stacked notation, the conditions required by $K$ to belong to the disk are as follows:
\begin{align}
	\label{eq:siegelreq1}
	\begin{bmatrix}
		1 \\
		K \\
	\end{bmatrix}^{T}
	\sigma_{y}
	\begin{bmatrix}
		1 \\
		K \\
	\end{bmatrix}
	 & =
	0
	\\
	\label{eq:siegelreq2}
	\begin{bmatrix}
		1 \\
		K \\
	\end{bmatrix}^{\dag}
	\sigma_{z}
	\begin{bmatrix}
		1 \\
		K \\
	\end{bmatrix}
	 & > 0
	\,.
\end{align}
To prove that any \(S\in\symcayn\) maps the Siegel disk into itself, we show that the stacked representative \(S\begin{bmatrix}1\\ K\end{bmatrix}\) determines a matrix \(K'=\phi_S(K)\) that again satisfies \cref{eq:siegelreq1,eq:siegelreq2}. Indeed, we find
\begin{align}
	\label{eq:siegelreqproof1}
	\begin{bmatrix}
		1 \\
		K \\
	\end{bmatrix}^{T}
	S ^{T}
	\sigma_{y}
	S
	\begin{bmatrix}
		1 \\
		K \\
	\end{bmatrix}
	 & =
	0
	\\
	\label{eq:siegelreqproof2}
	\begin{bmatrix}
		1 \\
		K \\
	\end{bmatrix}^{\dag}
	S ^{\dag}
	\sigma_{z}
	S
	\begin{bmatrix}
		1 \\
		K \\
	\end{bmatrix}
	 & > 0
	\,.
\end{align}
The inequality is immediately satisfied as
$
	S ^{\dag}
	\sigma_{z}
	S
	= \sigma_{z}
$,
and \cref{eq:siegelreq2} holds for \(K'\). The equality is also satisfied since
$
	S ^{T} \sigma_{y} S
	=
	i (S ^{\dag} \sigma_{z} S)^{*} \sigma_{x}
	=
	\sigma_{y}
$, and \cref{eq:siegelreq1} holds for \(K'\).
This essentially proves that fractional transformations map the disk into itself (we have omitted some technicalities that we will address later).

Transitivity on the Siegel disk means that any \(K_{1}\) and \(K_{2}\in\siegeldiskn\) differ by a fractional transformation with respect to \(\symcayn\). In particular, for any \(K\in\siegeldiskn\) there exists \(S\in\symcayn\) such that \(\phi_S(0)=K\); an explicit choice is given in the proof of \cref{thm:symcayn-preserves-disk-transitive} in Appendix \cref{app:thm-symcayn-preserves-disk-transitive}.

A state $\ket{K}$ with Fock-Bargmann wavefunction as in~\cref{eq:gaussian_fock-bargmann} is Gaussian if and only if $K \in \siegeldiskn$.
The action of Gaussian unitaries on the corresponding Fock-Bargmann wavefunction is given by the Fock-metaplectic representation. The Siegel disk, like the upper half-plane, allows us to avoid the explicit representation of the Fock-metaplectic group and to think of Gaussian operations as fractional transformations on a symmetric space instead.
\begin{definition}[Fock-metaplectic representation]\label{def:fock-metaplectic}
	The (projective) Fock-metaplectic representation assigns to each \(S\in\symcayn\) a unitary \(\op U_{S}\) (unique up to an overall phase) acting on the Fock--Bargmann Hilbert space such that, on pure Gaussian states,
	\begin{align}\label{eq:actiononstate}
		\op U_{S} \ket {K} \propto \ket{{\phi_{S}(K)}}
		\,,
	\end{align}
		where the normalization is unnecessary since we will only track the induced action \(\phi_{S}\). See, e.g., Refs.~\cite{howe_oscillator_1988,hilgert_note_1989,simon_gaussian_1988}.
\end{definition}

Among the transformations in the Siegel disk we can find the maximal compact subgroup of the symplectic group, corresponding to phase-space rotations. When $\beta = 0$, \cref{eq:alphabetaone} gives $\alpha \alpha^\dag = 1$, so that the set of these transformations is $U(n) \cong K(n)$---the well-known subgroup~\cite{simon_quantumnoise_1994} of the symplectic group that leaves the vacuum invariant. The corresponding fractional transformation is
$
	\phi_{\alpha} (K)
	=
	\alpha
	K
	\alpha^{T}
$.

We move on to the non-unitary evolution of pure states. As in the half-plane case, the disk picture admits a class of single-Kraus Gaussian operations described by fractional transformations \(\phi_{T}\), where \(T\) belongs to a currently unspecified matrix set. The quickest way to see this, as well as identify the set, is to evolve an adjacency matrix \(K\) by \(T\) and require that \(K'=\phi_{T}(K)\) remain in \(\siegeldiskn\): it suffices to re-check \cref{eq:siegelreqproof1,eq:siegelreqproof2}. Both the equality and the inequality can be satisfied if \(T^{T} \sigma_{y} T = \sigma_{y}\) and \(T^{\dag} \sigma_{z} T \ge \sigma_{z}\). This motivates the following definition.
		\begin{definition}[Siegel-disk preserving semigroup]\label{def:sympluscayn}
			Define
			\begin{align}\label{eq:sympcomplexdef}
			\sympluscayn \coloneqq
			\{ T \in \symplecticcomplexn \ \big|\ T^{\dag} \sigma_{z} T \ge \sigma_{z} \}
			\,.
		\end{align}
					It induces fractional transformations mapping \(\siegeldiskn\) into itself, with boundary \(\symcayn\). See, e.g., Refs.~\cite{hilgert_note_1989,derezinski_quantization_2020}.
			\end{definition}
				\begin{center}
					\begin{minipage}{\linewidth}
					\centering
				\begin{tikzpicture}[every node/.style={font=\scriptsize,align=center}]
					\node[draw,rounded corners,minimum width=5.4cm,minimum height=2.9cm,inner sep=6pt] (outer) {};
					\node[anchor=north] at (outer.north) {\(\symplecticcomplexn\)};

					\node[draw,rounded corners,fill=gray!12,very thick,minimum width=4.2cm,minimum height=1.85cm,inner sep=5pt] (inner) at (outer.center) {\(\sympluscayn\)};
					\draw[->,>=stealth] (inner.south) -- ++(0,-0.65) node[below] {\(\partial\sympluscayn=\symcayn\)};
				\end{tikzpicture}
					\captionof{figure}{Disk-preserving semigroup. The Siegel-disk preserving semigroup is \(\sympluscayn=\{T\in\symplecticcomplexn\mid T^{\dag}\sigma_{z}T\ge\sigma_{z}\}\) (with \(T^{T}\sigma_{y}T=\sigma_{y}\) automatically since \(\sympluscayn\subset\symplecticcomplexn\)). Its induced fractional transformations map the Siegel disk \(\siegeldiskn=\{K=K^{T},\ I-K^{\dag}K>0\}\) into itself, describing single-Kraus Gaussian operations on pure states in disk coordinates. The boundary \(\partial\sympluscayn=\symcayn\) (equality \(S^{\dag}\sigma_{z}S=\sigma_{z}\)) recovers Gaussian unitaries in the ABC/disk realization.}
					\label{fig:sympluscayn-inclusion}
					\end{minipage}
				\end{center}
	
\begin{theorem}[Siegel-disk preserving semigroup preserves the Siegel disk]\label{thm:sympplus-preserves-disk}
			If \(T\in\sympluscayn\), then \(\phi_{T}(\siegeldiskn)\subseteq\siegeldiskn\).
\end{theorem}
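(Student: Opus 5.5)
We argue in stacked notation, as in the unitary case around \cref{eq:siegelreqproof1,eq:siegelreqproof2}, with one extra ingredient: invertibility of the lower-right denominator. Write \(T=\begin{psmallmatrix}a&b\\ c&d\end{psmallmatrix}\) in \(n\times n\) blocks, matching the disk convention \(\phi_T(K)=(dK+c)(bK+a)^{-1}\) (for \(T\in\symcayn\) this is \cref{eq:linearfractional} with \(a=\alpha\), \(b=\beta\), \(c=\beta^{*}\), \(d=\alpha^{*}\)). Fix \(K\in\siegeldiskn\) and set
\begin{align}
	\begin{pmatrix} P \\ Q \end{pmatrix}
	\coloneqq
	T\begin{pmatrix} 1 \\ K \end{pmatrix}
	=
	\begin{pmatrix} a+bK \\ c+dK \end{pmatrix}
	\,.
\end{align}
The plan is to show three things: \(P\) is invertible, \(K'\coloneqq QP^{-1}\) is symmetric, and \(1-K'^{\dag}K'>0\).

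\emph{Step 1 (positivity before normalizing).} Using \(T^{\dag}\sigma_z T\ge\sigma_z\) and \cref{eq:siegelreq2}, we get
\begin{align}
	P^{\dag}P-Q^{\dag}Q
	=
	\begin{pmatrix} 1 \\ K \end{pmatrix}^{\dag} T^{\dag}\sigma_z T \begin{pmatrix} 1 \\ K \end{pmatrix}
	\ge
	1-K^{\dag}K
	>0
	\,.
\end{align}
\emph{Step 2 (invertibility), which I expect to be the only real obstacle.} It follows directly from Step 1. If \(Pv=0\) for some \(v\neq 0\), then \(v^{\dag}(P^{\dag}P-Q^{\dag}Q)v=-\norm{Qv}^{2}\le 0\), contradicting strict positivity. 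Hence \(P=bK+a\in\mathrm{GL}(n)\), and \(\phi_T(K)\) is well defined for every \(K\in\siegeldiskn\). This is precisely the technicality glossed over in the unitary sketch. The strict inequality inherited from \(K\) is what makes it work, since \(T\) itself only satisfies a non-strict inequality.

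\emph{Step 3 (symmetry).} Since \(T\in\symplecticcomplexn\), \cref{thm:complex-symplectic-preserves-symmetry} applies directly: given invertibility of \(P\), \(K'\) is symmetric. Concretely, the isotropy condition \cref{eq:siegelreq1} is preserved because \(T^{T}\sigma_yT=\sigma_y\), and it is unaffected by right multiplication by \(P^{-1}\). One should check that the block convention of the disk map is compatible with the isotropy form \(\sigma_y\). The isotropy form reduces to \(K^{T}-K\) with either ordering of the stack, so this is fine.

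\emph{Step 4 (disk condition).} Congruence by the invertible \(P^{-1}\) preserves strict positivity, so
\begin{align}
	1-K'^{\dag}K'
	=
	(P^{-1})^{\dag}\bigl(P^{\dag}P-Q^{\dag}Q\bigr)P^{-1}
	>0
	\,,
\end{align}
i.e.\ \(K'^{\dag}K'<1\). Together with symmetry, this gives \(K'\in\siegeldiskn\).

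This argument does not need any transitivity or Cayley-transform transport from \cref{thm:sympplus-preserves-uhp}. One could alternatively deduce the result from \cref{thm:sympplus-preserves-uhp} by checking that \(\Gamma\,\symplusn\,\Gamma^{\dag}=\sympluscayn\) (using \(\Gamma\sigma_y\Gamma^{\dag}=\pm\sigma_z\) and preservation of \(\sigma_y\) under the complex-symplectic condition) and then applying \cref{lem:fractional-composition-law}. The direct argument is cleaner and self-contained.
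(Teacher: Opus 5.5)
Your proposal is correct and follows the paper's proof (main-text sketch plus appendix) essentially step for step. The shared steps are: the stacked inequality $(TX)^{\dagger}\sigma_z(TX)\ge X^{\dagger}\sigma_z X=1-K^{\dagger}K>0$; invertibility of the denominator by contradiction, since a kernel vector would give the value $-\|Qv\|^{2}\le 0$; symmetry via preservation of the $\sigma_y$-isotropy condition under $T^{T}\sigma_y T=\sigma_y$; and the disk condition by congruence with the inverse of the denominator.
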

\begin{proof}
	This proof uses stacked notation with a useful subtlety: symmetry uses the \emph{transpose} symplectic condition \(T^{T}\sigma_{y}T=\sigma_{y}\), while disk membership uses the \emph{adjoint} inequality \(T^{\dag}\sigma_{z}T\ge\sigma_{z}\) (and \(\sigma_{y}^{T}=-\sigma_{y}\) whereas \(\sigma_{z}^{\dag}=\sigma_{z}\); see \cref{def:lifted-paulis}).

		Let \(K\in\siegeldiskn\) and write \(T=\begin{psmallmatrix}A&B\\ C&D\end{psmallmatrix}\). Symmetry of \(K'=\phi_T(K)\) follows from \(T\in\symplecticcomplexn\) as in \cref{thm:complex-symplectic-preserves-symmetry}. For the disk inequality, set \(X\coloneqq\begin{bmatrix}1\\ K\end{bmatrix}\), so that \(X^{\dag}\sigma_z X = 1-K^{\dag}K>0\). Since \(T^{\dag}\sigma_z T\ge\sigma_z\), we have \((TX)^{\dag}\sigma_z(TX)\ge X^{\dag}\sigma_z X>0\). Writing \(TX=X'(BK+A)\) with \(X'=\begin{bmatrix}1\\ K'\end{bmatrix}\) (stacked action), this implies \(BK+A\) is invertible and \(X'^{\dag}\sigma_z X'>0\), i.e. \(1-K'^{\dag}K'>0\). A complete stacked-notation proof (including invertibility) is given in \cref{app:thm-sympplus-preserves-disk}.
\end{proof}
We therefore view \(\sympluscayn\) as the semigroup of fractional transformations that maps \(\siegeldiskn\) into itself. The group \(\symcayn\) is the boundary of this semigroup. This is slightly less immediate than the corresponding statement \(\partial\symplusn=\symplecticn\), so we sketch a short argument. At the boundary, the inequality saturates, i.e. \(S^{\dag}\sigma_{z}S=\sigma_{z}\), hence
\[
	S^{-1}=\sigma_{z}S^{\dag}\sigma_{z}
	\,.
\]
Combining this with the complex symplectic condition \(S^{T}\sigma_{y}S=\sigma_{y}\) and \(\sigma_{y}=i\sigma_{x}\sigma_{z}\) gives \(S^{T}\sigma_{x}=\sigma_{x}S^{\dag}\), and taking adjoints yields \(S\sigma_{x}=\sigma_{x}S^{*}\), i.e. \(S\in\symcayn\). Conversely, any \(S\in\symplecticcomplexn\) satisfying \(S\sigma_{x}=\sigma_{x}S^{*}\) obeys \(S^{\dag}\sigma_{z}S=\sigma_{z}\), so \(\symcayn\) is exactly the boundary of \(\sympluscayn\).

For completeness, we also state the classic unitary (boundary) statement in the disk picture:
\begin{theorem}[ABC symplectic action preserves the Siegel disk and is transitive]\label{thm:symcayn-preserves-disk-transitive}
	If \(S\in\symcayn\), then \(\phi_{S}(\siegeldiskn)\subseteq\siegeldiskn\). Moreover, for any \(K_{1},K_{2}\in\siegeldiskn\), there exists \(S\in\symcayn\) such that \(K_{2}=\phi_{S}(K_{1})\) (transitivity).
\end{theorem}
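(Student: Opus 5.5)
The plan has two parts: preservation, then transitivity. For preservation I would simply specialize \cref{thm:sympplus-preserves-disk}. Every \(S\in\symcayn\) is complex symplectic, because \(S=\Gamma S_r\Gamma^\dag\) with \(S_r\) real symplectic. Conjugating the relation \(S_r^T\Omega S_r=\Omega\) by \(\Gamma\) gives \(S^T\sigma_y S=\sigma_y\), using the identity \(S^T\sigma_y S=i(S^\dag\sigma_z S)^*\sigma_x\) already noted in the text. Moreover \(S^\dag\sigma_z S=\sigma_z\), which is \cref{eq:defsympcomp}, so in particular \(S^\dag\sigma_z S\ge\sigma_z\). Hence \(\symcayn\subseteq\sympluscayn\), and \(\phi_S(\siegeldiskn)\subseteq\siegeldiskn\) follows, invertibility of \(\beta K+\alpha\) included.

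For completeness I would also make the invertibility argument explicit in this case. Write \(X=\begin{bmatrix}1\\ K\end{bmatrix}\), so that \((SX)^\dag\sigma_z(SX)=1-K^\dag K>0\). Expanding, the top block \(\beta K+\alpha\) (call it \(M\)) satisfies \(M^\dag M-N^\dag N=1-K^\dag K>0\), where \(N\) is the bottom block. Therefore \(M^\dag M>0\), so \(M\) is invertible. Dividing on both sides by \(M\) then gives \(1-K'^\dag K'>0\).

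For transitivity I would reduce to the vacuum. It suffices to show that for every \(K\in\siegeldiskn\) there is \(S_K\in\symcayn\) with \(\phi_{S_K}(0)=K\). Given this, \(S=S_{K_2}S_{K_1}^{-1}\) lies in \(\symcayn\) because \(\symcayn\) is a group, and \cref{lem:fractional-composition-law} yields \(\phi_S(K_1)=K_2\); all inverses exist by the preservation part. For the construction I would mimic the Poincaré-disk boost. Set
\begin{align}
	\alpha=(1-KK^*)^{-1/2},\qquad \beta^*=K(1-K^*K)^{-1/2},
\end{align}
and take \(S_K=\begin{psmallmatrix}\alpha&\beta\\ \beta^*&\alpha^*\end{psmallmatrix}\). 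Then \(\phi_{S_K}(0)=\beta^*\alpha^{-1}\) by \cref{eq:linearfractional}, and this reduces to \(K\) via the push-through identity \(K(1-K^*K)^{-1/2}=(1-KK^*)^{-1/2}K\). The identity itself follows from \(K f(K^*K)=f(KK^*)K\), applied to a power series or functional calculus of \(f(x)=(1-x)^{-1/2}\), which is valid since \(\|K\|<1\).

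The main obstacle is verifying the block conditions \cref{eq:alphabetaonealt,eq:alphabetatwoalt} for this choice, which is where the symmetry \(K=K^T\) must be used. I would use \(K^\dag=K^*\) and \(\alpha^\dag=\alpha\), which make \((1-KK^*)^*=1-K^*K\) and \(\alpha^*=(1-K^*K)^{-1/2}\). For \(\alpha\alpha^\dag-\beta\beta^\dag=1\), I would compute \(\beta=K^*(1-KK^*)^{-1/2}\) (again by push-through) and obtain \(\alpha^2-K^*\alpha^2K\)-type expressions. These collapse to \((1-KK^*)^{-1}-KK^*(1-KK^*)^{-1}=1\) after carefully tracking conjugates. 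For \(\alpha\beta^T=\beta\alpha^T\), both sides reduce to \((1-KK^*)^{-1/2}K^*\ldots\) forms that agree by push-through and \(K^T=K\). If the conjugate bookkeeping becomes messy, the fallback is to verify \(S_K^\dag\sigma_z S_K=\sigma_z\) together with \(S_K\sigma_x=\sigma_x S_K^*\); the latter holds by construction of the block form, and the presentation \cref{eq:defsympcomp,eq:defblockym} then gives \(S_K\in\symcayn\).
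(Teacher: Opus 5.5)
Your preservation argument (via \(\symcayn\subseteq\sympluscayn\) and \cref{thm:sympplus-preserves-disk}) and your reduction of transitivity to a vacuum-to-\(K\) element followed by \(S_{K_2}S_{K_1}^{-1}\) are the same as the paper's. \textbf{The gap is in your explicit boost.} You build \(\alpha\) from \(KK^*\) but \(\beta^*\) from \(K^*K\), and for \(n\ge 2\) these do not fit together. By your own push-through identity, \(\beta^*=K(1-K^*K)^{-1/2}=(1-KK^*)^{-1/2}K=\alpha K\), so
\[
\phi_{S_K}(0)=\beta^*\alpha^{-1}=\alpha K\alpha^{-1},
\]
which equals \(K\) only if \(K\) commutes with \(KK^*\).

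The symplectic check fails for the same reason. With \(\beta=K^*(1-KK^*)^{-1/2}\) one gets \(\beta\beta^\dag=K^*(1-KK^*)^{-1}K=(1-K^*K)^{-1}K^*K\), not \(KK^*(1-KK^*)^{-1}\) as you anticipate. Hence
\[
\alpha\alpha^\dag-\beta\beta^\dag=1+(1-KK^*)^{-1}-(1-K^*K)^{-1},
\]
which equals \(1\) if and only if \(K\) is normal.

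Complex symmetric matrices in the disk need not be normal. Take \(K=c\begin{psmallmatrix}1&i\\ i&-1\end{psmallmatrix}\) with \(0<|c|<\tfrac12\). It is symmetric, nonzero, and nilpotent, and \(K^*K\) has eigenvalues \(0\) and \(4|c|^2\), so \(K\in\siegeldiskn\). Yet \(KK^*\neq K^*K\), and \(K(KK^*)=0\neq 4|c|^2K=(KK^*)K\). For this \(K\), your \(S_K\) is not in \(\symcayn\) and does not send \(0\) to \(K\).

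Your fallback does not rescue this. In block form, \(S_K^\dag\sigma_zS_K=\sigma_z\) is equivalent to the same relations you were trying to verify. The one-mode Poincar\'e intuition hides the problem because scalars commute.

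The repair is to use a single Gram matrix throughout, as the paper does: set \(\alpha=(1-K^\dag K)^{-1/2}=(1-K^*K)^{-1/2}\) and \(\beta^*\coloneqq K\alpha\). Then \(\phi_{S_K}(0)=\beta^*\alpha^{-1}=K\) with no push-through needed. The first relation is \(\alpha^\dag\alpha-\beta^T\beta^*=\alpha(1-K^\dag K)\alpha=1\). The second is \(\beta^\dag\alpha=(K^*\alpha^*)^\dag\alpha=\alpha^TK\alpha=\alpha^T\beta^*\), using \(K^T=K\). With this choice the rest of your argument goes through unchanged.
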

The main algebraic identities are sketched earlier around \cref{eq:siegelreqproof1,eq:siegelreqproof2}; we tie up the remaining details in Appendix \cref{app:thm-symcayn-preserves-disk-transitive}.

Thus, in the disk picture, fractional transformations in general realize single-Kraus Gaussian operations on pure states
\begin{align}\label{eq:nonunitaryevolution}
	K' & = \phi_{T} (K)
	\nonumber           \\
	T  & \in \sympluscayn
	\,.
\end{align}

Finally, we discuss the oscillator semigroup in the Fock--Bargmann / Siegel disk picture. This is not a new object: it is the same normalized oscillator semigroup \(\Omega^{0}\) from \cref{def:normalized-oscillator-semigroup}, now written in a different basis (and following Hilgert's explicit parametrization~\cite{hilgert_note_1989})---in exactly the same sense that the Siegel disk is not a different symmetric space from the upper half-plane, but merely the same one in different coordinates. In the Fock--Bargmann representation, a Gaussian Kraus operator $\op K \in \mc{B}(\fockbargn)$ is a bounded linear operator acting on states as
\begin{align}
	\label{eq:Ttensor}
	\bra{\zeta} \op K \ket{\psi} = \int d\mu(\omega) K(\zeta,\omega^{*}) \psi(\omega)
	\,,
\end{align}
where $K(\zeta,\omega^{*}) = \bra{\zeta} \op K \ket{\omega}$ is the FB matrix element of $\op K$ and $\psi(\omega) = \bra{\omega} \ket{\psi}$ is the wavefunction in FB representation. We can parametrize the set of \emph{Gauss-kernel} operators by a symmetric matrix
\begin{align}\label{eq:xmatrix}
	\amat =
	\begin{pmatrix}
		A     & B \\
		B^{T} & D
	\end{pmatrix}
	\,,
\end{align}
where $A$ and $D$ are complex symmetric matrices, and each submatrix in the above is $n \times n$. For each $\amat$ there is a family of Gauss-kernel operators whose matrix element is
\begin{align}\label{eq:gausskernels}
	K_{\amat}(\zeta, \omega^{*})
	 & = c \exp(
	\half \pqty{\zeta^{T} A \zeta + 2 \zeta^{T} B \omega^{*} + \omega^{\dag} D \omega^{*}}
	)
	\\ &=
	c \exp(\half
	\begin{pmatrix}
			\zeta^{T} & \omega^{\dag}
		\end{pmatrix}
	\amat
	\begin{pmatrix}
			\zeta \\ \omega^{*}
		\end{pmatrix}
	)
	\,,
\end{align}
with \(c\) a nonzero complex number. The set of linear operators induced by kernels in this form is the oscillator semigroup in the disk~\cite{hilgert_note_1989}. The crucial observation in connecting this semigroup and fractional transformations is that, after fixing the overall scalar normalization (normalized oscillator semigroup), it becomes a double-cover of \(\sympluscayn\), and the semigroup homomorphism between the latter and the former is~\cite{hilgert_note_1989}
\begin{definition}[Oscillator semigroup (disk picture)]\label{def:oscillator-semigroup-disk}
	The oscillator semigroup in the disk picture is the semigroup of linear operators induced by Gauss kernels of the form \cref{eq:gausskernels}. Its normalized version is (projectively) related to the disk-preserving semigroup \(\sympluscayn\) via the homomorphism \cref{eq:homosym}; see, e.g., Refs.~\cite{hilgert_note_1989,howe_oscillator_1988}.
\end{definition}
\begin{align}
	\label{eq:homosym}
	T(\amat) =
	\begin{pmatrix}
		B^{-T}     & B^{-T} D       \\
		- A B^{-T} & B - A B^{-T} D \\
	\end{pmatrix}
	=
	\begin{pmatrix}
		\alpha & \beta  \\
		\gamma & \delta
	\end{pmatrix}
	\,,
\end{align}
where $T \in \sympluscayn$ defines the fractional transformations \(\phi_{T}\) acting on \( \siegeldiskn \).
	The minus sign relative to Ref.~\cite{hilgert_note_1989} is due to the fact that they define Gaussian kernels with a minus sign in the exponent. The inverse, \( \amat (T) \), always exists, and it is
\begin{align}
	\label{eq:homosyminv}
	\amat(T) =
	\begin{pmatrix}
		\gamma \alpha^{-1} & \alpha^{-T}         \\
		\alpha^{-1}        & - \alpha^{-1} \beta \\
	\end{pmatrix}
	\,.
\end{align}
Given the kernel \((\det B)^{-1/2} K_{\amat}(\zeta,\omega^{*})\), where \(\amat \in \siegeldiskN\), and a pure Gaussian state \cref{eq:gaussian_fock-bargmann}, one can verify by Gaussian integration that the induced fractional transformation on Gaussian states is indeed \(\phi_{T(\amat)}\). Because of the two-to-one correspondence between \(\siegeldiskN\) and \(\symcayn\), we may think of fractional transformations on pure Gaussian states as elements of a \(2n\)-dimensional disk (operations) acting on an \(n\)-dimensional one (states).

At the state level, $\op K_{\amat}$ is a representation of the oscillator semigroup with a simple action on Gaussian states:
\begin{align}\label{eq:actionongaussianosc}
	\op K_{\amat(T)} \ket{K} \propto \ket{\phi_{T }(K)}
	\,.
\end{align}
We recover Gaussian unitaries in the disk by letting \(S\) lie on the boundary, \(S\in\symcayn\). In that case, the assignment \(S\mapsto \op K_{\amat(S)}\) (equivalently \(S\mapsto \op U_{S}\)) is a projective representation of \(\symcayn\), i.e., the Fock-metaplectic representation, whose transformation law on pure Gaussian states is \cref{eq:actiononstate}. More precisely, the normalized oscillator semigroup is a double cover of \(\sympluscayn\), so choosing an operator over a given \(S\in\sympluscayn\) is only defined up to an overall sign; restricting to the boundary recovers the familiar \(\pm1\) ambiguity of the metaplectic lift of the symplectic group. This subtlety bears no consequences for our purposes.

\section{Identifying the double disk representative}
We have seen that pure states admit a Siegel-space representation, be it the disk or the plane. The plane description is built from the wavefunction of Gaussian pure states, while the disk description uses their Fock-Bargmann wavefunction.

The oscillator semigroup illustrates how a subset of quantum operations---operators (or kernels) with both input and output variables, rather than state vectors---can be associated to a larger disk. In what follows, we extend this picture to accommodate a more general situation where gaussianity-preserving maps act on mixed Gaussian states.
We can carry out this program by finding mixed Gaussian states (resp.\ Gaussian operations/channels) as a subset of a \(2n\)-dimensional (resp.\ \(4n\)-dimensional) disk.

For pure states, \( K \in \siegeldiskn \) was the adjacency matrix of the Gaussian Fock-Bargmann wavefunction. One generalization of this is to consider the FB \emph{kernel} of a mixed state, which is a Gaussian function~\cite{hilgert_note_1989} with an adjacency matrix \( \amat \in \siegeldiskN \)
\begin{align}\label{eq:defstellar}
	\bra{ \zeta } \rho_{\amat} \ket{ \zeta }
	=
	\exp(\half
	\begin{pmatrix}
			\zeta^{T} & \zeta^{\dag}
		\end{pmatrix}
	\amat
	\begin{pmatrix}
			\zeta \\ \zeta^{*}
		\end{pmatrix})
	\,.
\end{align}
All we need to do now is to restrict \(\amat\) to a subset of \(\siegeldiskN\) for which \(\rho_{\amat}\) is a bona fide density matrix (Hermitian, positive, and trace-normalized).
The condition \(\amat\in\siegeldiskN\) ensures that the corresponding Gaussian-kernel operator is trace-class. While one can enforce trace normalization and positivity directly at the kernel level (via Gaussian integrals), we will instead impose these conditions using the covariance-matrix characterization and transport them to \(\amat\) using the fractional coordinate change \(\sigma\leftrightarrow \amat\) introduced below. This lets us sidestep explicit integration, one of the main benefits of the framework.
For completeness: trace-class/boundedness and trace formulas for Gaussian kernels are standard consequences of Gaussian integration (see, e.g., Ref.~\cite{hilgert_note_1989} and the general ``quantization of Gaussians'' trace/norm formulas in Ref.~\cite{derezinski_quantization_2020}). Moreover, Hermiticity \(\rho_{\amat}^{\dag}=\rho_{\amat}\) is equivalent to the kernel symmetry \(K_{\amat}(\zeta,\omega^{*})=\overline{K_{\amat}(\omega,\zeta^{*})}\), which in matrix form becomes \(\amat\sigma_{x}=\sigma_{x}\amat^{*}\).

Instead, here we relate the definition of \( \amat \) to others given in recent work \cite{yao_recursive_2022,miatto_fast_2020,kruse_detailed_2019}. In these works, the authors define a certain matrix (essentially, \( \amat \)) and build procedures based on its manipulation that are useful to solve problems in continuous-variable quantum information. Each of these works conveniently relate their definitions to the Husimi function. With the shorthand \( z = (\zeta^{T}, \zeta^{\dag})^{T}\), the Husimi function for a gaussian state is
\begin{align}\label{eq:husimigauss}
		Q_{\sigma_{Q}}(z) = \pi^{-n}\det\pqty{\sigma_{Q}}^{-1/2}\exp\pqty{ -\frac{1}{2} z^{\dag} \sigma_{Q}^{-1} z }
		\,,
	\end{align}
and we recall the relationship between the FB kernel and the Husimi function of a state,
\begin{align}
	K (\zeta,\zeta^{*}) &= \pi^{-n} e^{\abs{\zeta}^{2}} Q(\zeta^{*})
	\\
	&= \pi^{-n} \exp[
		\frac{1}{2} \abs{ z }^{2}
		-\frac{1}{2} z^{T} \sigma_{Q}^{-1} z^{*}]
	\,.
\end{align}
With the same shorthand for \(z\), the FB kernel in \cref{eq:defstellar} is
\(  \exp(\frac{1}{2} z^{T} \amat z) \), implying
\(  \amat \sigma_{x} = 1 - \sigma_{Q}^{-1} \), or
\begin{align}\label{eq:sigmaqamat}
	\amat^{*} = \sigma_{x} \pqty{ 1 - \sigma_{Q}^{-1} }
	\,,
\end{align}
which is the the complex conjugate of the definitions found in refs. \cite{yao_recursive_2022,kruse_detailed_2019}.
Now, the relationship between our covariance matrix \( \sigma \) and \( \sigma_{Q} \) is (see \cite{yao_recursive_2022}, and note that their definition of complex covariance matrix differs by a complex conjugate from ours),
\(	\sigma_{Q} = \sigma^{*} + 1/2 \)
which gives
\begin{align}\label{eq:accm}
	\amat = \sigma_{x} \pqty{ 1 - \pqty{ \sigma + 1/2 }^{-1} }
	= \sigma_x \pqty{ \sigma- 1/2 } \pqty{ \sigma+ 1/2 }^{-1} \,.
\end{align}

In conclusion, not only the adjacency matrix \( \amat \) emerges as a natural mixed-state generalization of a well-known object---the adjacency matrix of a pure gaussian state in Fock representation---but it is also worthwhile to point out that it was recently brought back to prominence by questions of recent practical interest.

Now that we have identified the double-disk representative in terms of the Fock-Bargmann kernels and Husimi function, we complete the picture and give a constructive description of which features of the pure-state disk construction are retained when trying to move to the complete picture that includes mixedness. In particular, we show that it is possible to maintain the fractional transformation / Siegel space dichotomy even when considering channels.

\section{The Double Siegel Disk}%
At this point it is useful to zoom out and state the guiding principle explicitly. What we are doing is the familiar ``up the hierarchy'' doubling of tensor order that one already uses in finite dimensions: pure states are vectors (one index), mixed states are operators (two indices), and channels are maps on operators (four indices, e.g.\ via a Choi-type identification). In other words, moving from state vectors to density matrices forces a doubling, and moving from density matrices to channels forces another doubling. This concept is neatly laid out in Ref.~\cite{yao_recursive_2022}. In our setting the same pattern is visible at the level of kernels: in the Fock--Bargmann picture, a (Gaussian) state is encoded by a Gaussian function; a mixed state is encoded by a (Gaussian) kernel with two coherent-state labels; and a channel can be encoded by a kernel with four labels (a ``super-kernel'') acting by composition/integration. The rest of the paper mirrors the standard finite-dimensional story almost verbatim: states live as a proper subset of all order-2 Gaussian kernels, channels live as a proper subset of all order-4 Gaussian super-kernels (cut out by CP/TP constraints), the unitary subgroup sits on the boundary, and single-Kraus Gaussian operations appear as the semigroup version of the unitary action.
Gaussian pure states are $K$ matrices in $\siegeldiskn$, upon which single-Kraus Gaussian operations act. We have two interpretations of these operations: either as fractional transformations induced by $\sympluscayn$ (with boundary \(\symcayn\) giving the unitaries), or as Gaussian kernels parametrized by the \emph{double Siegel disk} \(\siegeldiskN\). One may switch from one interpretation to the other with the homomorphism in \cref{eq:homosym,eq:homosyminv}, ignoring the two-to-one correspondence. Since gaussian density matrices are represented by Fock-Bargmann kernels associated to elements of a Siegel disk of dimension \(2n\), as in~\cref{eq:defstellar}, they will be a subset of the \(2n\)-dimensional disk, $\sspace \subset \siegeldiskN$.

The program is as follows. First, we define some notation that is convenient to describe matrices that we will encounter often throughout the rest of the paper. Second, we check how the pure-state disk and single-Kraus Gaussian operations are embedded in the double disk we have just identified. Third, we characterize all mixed states in the disk and discuss their Williamson decomposition. Finally, we will dedicate the rest of the paper to the discussion of channels and their description as a subset of the fractional transformations of the double disk.

\subsection{Notation}%
Here we set some notation for special types of block matrices that will be useful to describe the embedding of the groups and semigroups encountered so far into the larger set of fractional transformations we will be dealing with. Given a square matrix \(\zeta\), we will often encounter its direct sum with \(\zeta^{*}\), so we define
\begin{definition}[\(\pcc\) notation]\label{def:pcc}
\begin{align}\label{eq:notation}
	\zeta \pcc & = \zeta \oplus \zeta^{*}
	\,.
\end{align}
\end{definition}
With this in mind, ABC matrices are
\begin{definition}[ABC block form]\label{def:abc-block-form}
\begin{align}\label{eq:ctmatrix}
	T =
	\begin{pmatrix}
		\zeta    & \eta      \\
		\eta^{*} & \zeta^{*}
	\end{pmatrix}
	= \zeta\pcc + \eta \pcc \sigma_{x}
	\,,
\end{align}
\end{definition}
and satisfy $\sigma_{x} T \sigma_{x} = T^{*}$.
We can apply this type of direct sum multiple times, but we will never need to go beyond two. Namely, a double \(\bullet\pcc\) operator reads
\begin{definition}[Double-\(\pcc\) convention]\label{def:double-pcc}
\begin{align}\label{eq:doublestack}
	(\zeta\pcc)\pcc = (\zeta\oplus\zeta^{*})\pcc = \zeta \oplus \zeta^{*}\oplus \zeta^{*}\oplus \zeta
	\,.
\end{align}
\end{definition}
We define a map $T \rightarrow T \bcc$ that embeds a $2n\times 2n$ block matrix
\begin{definition}[\(\bcc\) embedding]\label{def:bcc-embedding}
\begin{align}\label{eq:blockmatrixgeneral}
	T =
	\begin{pmatrix}
		\gamma & \delta \\
		\zeta  & \eta
	\end{pmatrix}
	\,,
\end{align}
into the $4n \times 4n$ matrix:
\begin{align}\label{eq:blockmap}
	T\bcc =
	\begin{pmatrix}
		\gamma_{\pcc} & \delta_{\pcc} \\
		\zeta_{\pcc}  & \eta_{\pcc}
	\end{pmatrix}
\end{align}
\end{definition}
by applying \(\bullet\pcc\) to each block. These are matrices where each of the four blocks are block-diagonal themselves.

We will encounter different representations of the ABC symplectic group $\symcayn$. \Cref{eq:blockmap} gives one such representation when
\begin{align}\label{eq:rep1}
	\symcayn \ni S
	 & = \alpha\pcc + \beta\pcc \sigma_{x}
	\nonumber                                                \\
	\rightarrow S\bcc
	 & = (\alpha\pcc)\pcc + (\beta\pcc)\pcc \sigma_{x}
	\,,
\end{align}
where the Pauli matrix here is $4 n \times 4 n$-dimensional, twice as large as in \cref{eq:sigmablockdef}, an abuse of notation that will be ubiquitous and avoids unnecessary cluttering. Concretely, this is the lifted Pauli matrix \(\sigma_{x}^{[2n]}=\sigma_{x}\otimes I_{2n}\).

Yet another shorthand that we just used in~\cref{eq:rep1} is to reserve the symbols $\alpha$ and $\beta$ for pairs of matrices that satisfy the symplectic relations in the sense \cref{eq:alphabetaone,eq:alphabetatwo}.

\subsection{Single-Kraus Gaussian operations in \(\siegeldiskN\)}%
\label{sec:oscill-semigr-sieg}
Before identifying the set of mixed states, we show how $\siegeldiskn$ and its fractional transformations are embedded in $\siegeldiskN$. That is, we identify the set of pure states \(\siegeldiskn\) inside a larger disk, \(\siegeldiskN\), and the maps that leave \(\siegeldiskn\) invariant.
The density matrix of a pure state is $\rho_{K} = \ket{K} \bra{K}$, where \(K \in \siegeldiskn\). By writing down its FB wavefunction (or stellar function), we can show that the corresponding adjacency matrix $\amat \in \siegeldiskN$ is the block-diagonal matrix $\amat = K\pcc$. Indeed, writing down the stellar function, we immediately find
\begin{align}\label{eq:amatpure}
	\Gamma_{K} = \bra{\zeta} \rho_{K} \ket{\zeta}
	=
	e^{\zeta^{T} K \zeta + \zeta^{\dag} K^{*} \zeta^{*}}
	\,,
\end{align}
giving $\amat$ as claimed. The action of fractional transformations on pure states is also straightforward: for \(T\in\sympluscayn\), the corresponding Gaussian Kraus operator acts on pure Gaussian states by \cref{eq:actionongaussianosc}, and the induced update on \(\amat\) is
\begin{align}
	\label{eq:double_pure1}
	\amat' & =
	\bqty{\phi_{ T}(K)} \pcc
	\,, \quad
	T \in \sympluscayn
	\,,
\end{align}
i.e., single-Kraus Gaussian operations on pure states simply appear as two copies of a fractional transformation acting on each block of $\amat'$.
With the notation we have set up in \cref{eq:blockmap}, we can write this down as a fractional transformation on the state representative of the larger disk, $\amat$:
\begin{align}
	\label{eq:double_pure2}
	\amat' & =
	\phi_{T\bcc}(\amat)
	\,, \quad
	T \in \sympluscayn
	\,.
\end{align}
At this stage, this only holds when \(\amat\) is a pure state, but it is enough to describe the gaussian dynamics of pure states under the normalized oscillator semigroup (equivalently, the matrix semigroup \(\sympluscayn\)) within $\siegeldiskN$ and justifies the introduction of the embedding \(T\bcc\) in the previous section.

Now that we have settled this preliminary question, we move on to show that the double disk is large enough to accommodate all gaussian \emph{mixed states}.

\subsection{The state space}
The double disk \(\siegeldiskN\) parametrizes Gaussian kernels (oscillator-semigroup elements) in the Fock--Bargmann picture. Gaussian density matrices form a proper subset, which we will identify explicitly below and call the state space \(\sspace\subset\siegeldiskN\). This situation is fundamentally different from what we had in $\siegeldiskn$, where all the relevant states---gaussian pure states---filled the disk entirely. One way to fully identify $\sspace$ is to define it as the set of density matrices among gaussian kernels \cref{eq:gausskernels}. A simpler way to find $\sspace$ is to leverage the definition of gaussian states in terms of their complex covariance matrix and verify how the properties of covariance matrices induce properties on Siegel-disk representatives of the states. This is the approach we follow. Amongst other things, this approach allows us to translate Williamson's decomposition to the context of the disk in a straightforward manner. The resulting factorization of states can be thought of as the normal form of a disk representative. See Fig.~\ref{fig:state-space-nesting} for a schematic representation of the nested inclusion of the hierarchy of spaces that we will deal with in what follows.

The key technical bridge between the covariance-matrix picture and the double-disk picture is a fixed linear fractional \emph{change of coordinates} \(\sigma\leftrightarrow \amat\): it packages the (Hermitian) complex covariance matrix \(\sigma\) of a mixed Gaussian state into a symmetric matrix \(\amat\) living in the double disk. This coordinate change is \emph{not} the Cayley transform that relates the Siegel upper half-plane and the Siegel disk (implemented by \(\Gamma\)); rather, it is simply a convenient M\"obius-type map on covariance matrices, chosen so that affine covariance updates transport cleanly to fractional transformations on \(\amat\).
	\begin{center}
		\begin{minipage}{\linewidth}
		\centering
		\begin{tikzpicture}[every node/.style={font=\scriptsize,align=center},>=stealth]
				\node[draw,rounded corners,minimum width=6.0cm,minimum height=4.0cm,inner sep=6pt] (outer) {};
			\node[anchor=north west,xshift=4pt,yshift=-2pt,fill=white,inner sep=1pt] at (outer.north west) {\(\siegeldiskN\)};

				\node[draw,rounded corners,fill=gray!8,minimum width=5.0cm,minimum height=3.0cm,inner sep=5pt] (gamma) at (outer.center) {};
			\node[anchor=north west,xshift=4pt,yshift=-2pt,fill=gray!8,inner sep=1pt] at (gamma.north west) {\(\gammaspace\)};

				\node[draw,rounded corners,fill=gray!16,minimum width=4.0cm,minimum height=2.0cm,inner sep=4pt] (states) at (gamma.center) {};
			\node[anchor=north west,xshift=4pt,yshift=-2pt,fill=gray!16,inner sep=1pt] at (states.north west) {\(\sspace\)};

				\node[draw,dashed,rounded corners,fill=white,minimum width=2.8cm,minimum height=0.85cm,inner sep=3pt] (pure) at (states.center) {};
				\node at (pure.center) {\(\siegeldiskn\)};
			\end{tikzpicture}
				\captionof{figure}{Nesting of the doubled disk and physical subsets. The double disk \(\siegeldiskN\) is the Siegel disk in dimension \(2n\) (complex symmetric \(2n\times2n\) matrices \(\amat\) with \(I-\amat^{\dag}\amat>0\)). The ABC subset \(\gammaspace=\Delta^\Gamma_{2n}\) imposes the conjugation constraint \(\sigma_x\amat=\amat^{*}\sigma_x\). The physical state space \(\sspace=\mathcal S^\Gamma_{2n}\subset\gammaspace\) further imposes the uncertainty-principle constraint (equivalently \(W\succeq 0\) in \(\amat=\begin{psmallmatrix}K&W\\ W^{*}&K^{*}\end{psmallmatrix}\)). Pure states embed as \(\amat=K\pcc\) with \(K\in\siegeldiskn=\{K=K^{T},\ I-K^{\dag}K>0\}\).}
			\label{fig:state-space-nesting}
			\end{minipage}
		\end{center}

We first state a lemma about this fractional coordinate change for a generic \emph{covariance-like} matrix. Let \(\tau\) be Hermitian and positive definite, and assume it is ABC in the sense that \(\sigma_{x}\tau=\tau^{*}\sigma_{x}\). Define
\begin{align}
			\label{eq:aofsigma}
		\amat
		 & =
		\sigma_x
		\pqty{ \tau- 1/2 }
		\pqty{ \tau+ 1/2 }^{-1}
		\,,
	\end{align}
	which is a linear fractional transformation in our DCBA convention: \(\amat=\phi_{\Lambda}(\tau)\) with
	\begin{align}\label{eq:cov-to-disk-matrix}
		\Lambda
		&\coloneqq
		\begin{pmatrix}
			\half & 1\\
			-\half\sigma_x & \sigma_x
		\end{pmatrix},
		\qquad
		\Lambda^{\!-1}
		=
		\begin{pmatrix}
			1 & -\sigma_x\\
			\half & \half\sigma_x
		\end{pmatrix},
	\end{align}
and note that the case of a physical complex covariance matrix corresponds to \(\tau=\sigma\) (in which case \(\sigma\) additionally satisfies the uncertainty principle).
First, if \(\tau\) is ABC, then so is \(\amat\). We can show this by multiplying from both sides by \(\sigma_{x}\) in \cref{eq:aofsigma}. We get
	\begin{align}
		\sigma_x \amat \sigma_x
		 & =
	\pqty{ \tau - 1/2 }
	\pqty{ \tau + 1/2 }^{-1}
	\sigma_x
	\\ &=
	\sigma_x
	\sigma_x
	\pqty{\tau - 1/2}
	\sigma_x
	\sigma_x
	\pqty{ \tau + 1/2}^{-1}
	\sigma_x
	\\ &=
	\sigma_x
	\pqty{\tau^* - 1/2 }
	\pqty{ \tau^* + 1/2 }^{-1}
	=
			\amat^*
			\,.
		\end{align}

			We single out the ABC subset of the double disk by the defining relation \(\amat \sigma_{x} = \sigma_{x} \amat^{*}\), which we will use repeatedly to eliminate complex conjugations in later derivations:
	\begin{definition}[\(\Delta^{\Gamma}_{2n}\) (ABC subset)]\label{def:gammaspace}
	\begin{align}
		\label{eq:gammaspaceeq}
		\gammaspace = \{
		\amat \in \siegeldiskN\ |\ \amat \sigma_{x} = \sigma_{x} \amat^{*}
		\}
		\,.
	\end{align}
	\end{definition}
	From \cref{eq:aofsigma} and \(\tau = \tau^{\dag}\), it follows that \(\sigma_{x} \amat\) is hermitian:
	\begin{align}
		\label{eq:sspaceder2}
		\sigma_{x} \amat
		 & =
	\pqty{ \tau - 1/2 }
	\pqty{ \tau + 1/2 }^{-1}
	\\ &=
	\pqty{ \tau^{\dag} - 1/2 }
	\pqty{ \tau^{\dag} + 1/2 }^{-1}
	\,,
	\\
	 & =
	\pqty{
		\pqty{ \tau + 1/2 }^{-1}
		\pqty{ \tau - 1/2 }
	}^{\dag}
		= (\sigma_{x} \amat)^{\dag}
		\,,
		\end{align}
		where we used that \((\tau\pm\tfrac12)\) are polynomials in \(\tau\), hence commute with each other and with \((\tau+\tfrac12)^{-1}\) (equivalently: diagonalize \(\tau\) and verify entrywise).
		One implication is that, since, $ (\sigma_{x} \amat)^{\dag} = (\amat^{*} \sigma_{x})^{\dag} = \sigma_{x} \amat^{T} $,
			we must have $\amat=\amat^{T}$. Another implication is that the spectrum of \(\sigma_{x} \amat\) is real. Finally, positive definiteness of \(\tau\) ensures that \(\amat \in \siegeldiskN\), since
				\begin{align}\label{eq:ainsiegel}
				\amat^{*} \amat
				=
				\pqty{\sigma_{x} \amat}^{2}
			=
			\pqty{(\tau - 1/2)
				(\tau + 1/2)^{-1}}^{2}
				< 1
				\,.
			\end{align}
			Here the last inequality is in the positive-definite order; a brief diagonalization/functional-calculus justification is given in the proof of \cref{lem:tau-positive-implies-disk}.
\begin{lemma}[Fractional coordinate map sends \(\tau\) into the ABC subset]\label{lem:tau-positive-implies-disk}
		Let \(\tau\) be a \(2n\times 2n\) matrix that is Hermitian, positive definite, and ABC (\(\sigma_{x}\tau=\tau^{*}\sigma_{x}\)). Let \(\amat\) be its image under the coordinate change \cref{eq:aofsigma} (equivalently, \(\amat=\phi_{\Lambda}(\tau)\)). Then \(\amat\in\gammaspace\subset\siegeldiskN\).
\end{lemma}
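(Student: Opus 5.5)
The proof has three parts: well-definedness, membership in \(\siegeldiskN\) (symmetry and strict contraction), and the ABC condition. Two algebraic observations drive everything: \(\sigma_x^2=1\), and \(\tau\) is Hermitian positive definite, so functional calculus applies to it.

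\emph{Well-definedness.} Since \(\tau>0\), every eigenvalue of \(\tau+\tfrac12\) is at least \(\tfrac12\). Hence \(\tau+\tfrac12\) is invertible and \(\amat=\phi_{\Lambda}(\tau)\) is defined.

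\emph{Functional-calculus form.} Let \(f(t)=(t-\tfrac12)/(t+\tfrac12)\). Then \(\sigma_x\amat=(\tau-\tfrac12)(\tau+\tfrac12)^{-1}=f(\tau)\), since both factors are rational functions of \(\tau\) and so commute. Diagonalize \(\tau=U\,\mathrm{diag}(t_j)\,U^{\dag}\) with \(U\) unitary and \(t_j>0\). Then
\begin{align}
	f(\tau)=U\,\mathrm{diag}(f(t_j))\,U^{\dag}.
\end{align}
This matrix is Hermitian. Because \(t_j>0\), each eigenvalue satisfies \(|f(t_j)|<1\); indeed \(|t-\tfrac12|<t+\tfrac12\) for every \(t>0\).

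\emph{ABC property.} Conjugating \cref{eq:aofsigma} by \(\sigma_x\) and using \(\sigma_x\tau\sigma_x=\tau^{*}\), as in the computation preceding \cref{def:gammaspace}, gives \(\sigma_x\amat\sigma_x=\amat^{*}\). Multiplying on the left by \(\sigma_x\) yields \(\amat\sigma_x=\sigma_x\amat^{*}\).

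\emph{Symmetry.} Hermiticity of \(\sigma_x\amat\) gives
\begin{align}
	\sigma_x\amat=(\sigma_x\amat)^{\dag}=\amat^{\dag}\sigma_x .
\end{align}
The ABC relation gives \(\amat^{*}=\sigma_x\amat\sigma_x\). Taking the transpose and using \(\sigma_x^{T}=\sigma_x\) gives \(\amat^{\dag}=\sigma_x\amat^{T}\sigma_x\). Substituting, \(\sigma_x\amat=\sigma_x\amat^{T}\), so \(\amat=\amat^{T}\).

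\emph{Strict contraction.} The disk condition requires \(1-\amat^{\dag}\amat>0\). Since \(\amat\) is symmetric, \(\amat^{\dag}=\amat^{*}\). By the ABC relation,
\begin{align}
	\amat^{*}\amat=\sigma_x\amat\sigma_x\amat=(\sigma_x\amat)^{2}=f(\tau)^{2}.
\end{align}
In the eigenbasis this is \(U\,\mathrm{diag}(f(t_j)^2)\,U^{\dag}\) with every \(f(t_j)^2<1\), so \(1-\amat^{\dag}\amat>0\). Therefore \(\amat\in\siegeldiskN\), and together with the ABC relation, \(\amat\in\gammaspace\).

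The only step needing care is this last one. One must check that the quantity \(\amat^{*}\amat\) used in \cref{eq:ainsiegel} is really \(\amat^{\dag}\amat\) (which needs symmetry first) and is Hermitian (which needs the ABC identity). One must also justify the strict matrix inequality. The scalar bound \(|f(t)|<1\) on \(t>0\), transported through a unitary diagonalization, handles the inequality; strictness holds because the spectrum of \(\tau\) is finite and strictly positive.
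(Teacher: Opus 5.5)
Your proof is correct and takes essentially the same route as the paper's: obtain the ABC relation by conjugating with \(\sigma_x\), diagonalize \(\tau\) to see that \(\sigma_x\amat=f(\tau)\) is Hermitian with spectrum in \((-1,1)\), get symmetry by combining \(\sigma_x\amat=\amat^{\dag}\sigma_x\) with \(\amat^{\dag}=\sigma_x\amat^{T}\sigma_x\), and conclude \(\amat^{*}\amat=(\sigma_x\amat)^{2}<1\). Your extra remarks go slightly beyond what the paper states: you check that \(\tau+\tfrac12\) is invertible, and you prove symmetry first so that \(\amat^{*}\amat\) can be identified with \(\amat^{\dag}\amat\). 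Both points are left implicit in the paper.
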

\begin{proof}
	By the ABC property of \(\tau\), \cref{eq:aofsigma} implies \(\sigma_x \amat \sigma_x=\amat^{*}\), i.e. \(\amat\) satisfies the ABC relation \(\amat\sigma_{x}=\sigma_{x}\amat^{*}\).
	To show disk membership, first note that \(\amat^{*}\amat\) is Hermitian, and using \(\sigma_x^2=1\) we have \(\amat^{*}\amat=(\sigma_x\amat)^2\).
	Next, since \(\tau\) is Hermitian and positive definite, there exists a unitary \(U\) such that \(\tau=U\,\mathrm{diag}(\lambda_1,\dots,\lambda_{2n})U^{\dag}\) with \(\lambda_j>0\). Define the scalar function \(h(t)\coloneqq (t-\tfrac12)/(t+\tfrac12)\). Then
	\[
		\sigma_x\amat
		=
		(\tau-\tfrac12)(\tau+\tfrac12)^{-1}
		=
		U\,\mathrm{diag}\bigl(h(\lambda_1),\dots,h(\lambda_{2n})\bigr)U^{\dag},
	\]
	so \(\sigma_x\amat\) is Hermitian and its eigenvalues satisfy \(\abs{h(\lambda_j)}<1\). Therefore \((\sigma_x\amat)^2<1\), i.e. \(\amat^{*}\amat<1\).
	Finally, Hermiticity of \(\sigma_x\amat\) means \(\sigma_x\amat=(\sigma_x\amat)^{\dag}=\amat^{\dag}\sigma_x\). Using \(\sigma_x\amat\sigma_x=\amat^{*}\), we also have \(\amat^{\dag}=(\amat^{*})^{T}=(\sigma_x\amat\sigma_x)^{T}=\sigma_x\amat^{T}\sigma_x\). Combining these identities gives \(\sigma_x\amat=\sigma_x\amat^{T}\), hence \(\amat=\amat^{T}\). Therefore \(\amat\in\siegeldiskN\) by \cref{eq:siegeldiskdef} (with \(n\mapsto 2n\)).
	Together with the ABC relation shown at the start, this implies \(\amat\in\gammaspace\) by \cref{eq:gammaspaceeq}.
\end{proof}

The remaining physical constraint is the uncertainty principle (UP). Our goal is now to (i) transport the UP to a condition on \(\amat\in\gammaspace\), (ii) use that condition to define the double-disk \emph{state space}, and (iii) show that the image of a physical covariance matrix under the coordinate change \(\sigma\mapsto \amat\) lies in that set.

States are the subset of \(\gammaspace\) that satisfies the uncertainty principle. In terms of the complex covariance matrix \(\sigma\), the UP reads
\begin{align}\label{eq:robscr}
	\sigma - \half \sigma_{z} \ge 0
	\,.
\end{align}
To express this in disk variables, we invert the coordinate change. Writing \(X\coloneqq\sigma_{x}\amat\), we have
\begin{align}
	X
	&=
	(\sigma-\half)(\sigma+\half)^{-1},
	\\
	\sigma
	&=
	\half(1+X)(1-X)^{-1}
	\,.
\end{align}
For \(\amat\in\gammaspace\subset\siegeldiskN\), \(X\) is Hermitian and \(\operatorname{spec}(X)\subset(-1,1)\), hence \(1-X>0\) is strictly positive definite and invertible. Substituting \(\sigma=\sigma(\amat)\) into the UP \cref{eq:robscr} yields the disk-picture uncertainty principle:
\begin{align}
	\label{eq:amatupshort}
	\pqty{ 1 + \sigma_{x} \amat }
	\pqty{ 1 - \sigma_{x} \amat }^{-1}
	- \sigma_{z} \ge 0
	\,.
\end{align}
\begin{definition}[\(\sspace\) (state space)]\label{def:sspace}
	\begin{align}
		\label{eq:sspaceeq}
		\sspace \coloneqq \{
		\amat \in \gammaspace\ |\ \pqty{ 1 + \sigma_{x} \amat }
		\pqty{ 1 - \sigma_{x} \amat }^{-1}
		- \sigma_{z} \ge 0
		\}
		\,.
	\end{align}
\end{definition}
\begin{theorem}[Covariance matrices map into the state space]\label{thm:mixed-state-conditions-disk}
	Let \(\sigma\) be a (complex) covariance matrix of a Gaussian state and let \(\amat\) be its image under the coordinate change \(\sigma\mapsto\amat\) (as in \cref{eq:accm}). Then \(\amat\in\sspace\).
\end{theorem}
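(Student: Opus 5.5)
The argument has two parts. First, apply \cref{lem:tau-positive-implies-disk} with \(\tau=\sigma\) to land in \(\gammaspace\). Second, verify the disk-picture uncertainty principle by inverting the coordinate change.

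The lemma requires \(\tau\) to be Hermitian, positive definite and ABC, so these three properties of \(\sigma\) are checked first. From \cref{eq:complex_covariance_matrix}, \(\sigma=\Gamma\Sigma\Gamma^{\dag}\) with \(\Sigma\) real symmetric, so \(\sigma\) is Hermitian. Because \(\Gamma\) is unitary and \(\Sigma=\tfrac12 S_rDS_r^T\) with \(D\ge 1\), we get \(\Sigma>0\) and hence \(\sigma>0\). For the ABC property, compute \(\Gamma^{*}=\sigma_x\Gamma\) directly from \cref{def:cayley-matrix}. Since \(\Sigma\) is real, this gives
\begin{align}
\sigma^{*}=\Gamma^{*}\Sigma\Gamma^{T}=\sigma_x\Gamma\Sigma\Gamma^{\dag}\sigma_x=\sigma_x\sigma\sigma_x .
\end{align}
This is exactly \(\sigma_x\sigma=\sigma^{*}\sigma_x\). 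The lemma then yields \(\amat\in\gammaspace\subset\siegeldiskN\), which also makes \(X=\sigma_x\amat\) Hermitian with spectrum in \((-1,1)\).

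For the uncertainty principle, write \(X=(\sigma-\tfrac12)(\sigma+\tfrac12)^{-1}\) as in the text. This is the scalar function \(h(t)=(t-\tfrac12)/(t+\tfrac12)\) applied to \(\sigma\) via its spectral decomposition. The Möbius inverse of \(h\) is \(t=\tfrac12(1+x)/(1-x)\), and \(1-X\) is invertible because \(h<1\) on \((0,\infty)\). By functional calculus, \(\tfrac12(1+X)(1-X)^{-1}=\sigma\) exactly. Substituting this into the defining inequality of \cref{def:sspace} gives
\begin{align}
(1+\sigma_x\amat)(1-\sigma_x\amat)^{-1}-\sigma_z = 2\sigma-\sigma_z = 2\bigl(\sigma-\tfrac12\sigma_z\bigr),
\end{align}
which is positive semidefinite by the uncertainty principle \cref{eq:robscr}.

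It remains to justify \cref{eq:robscr} for a physical \(\sigma\), and I expect this to be the main obstacle, because it depends on conventions. Using the Williamson form \cref{eq:complex_covariance_matrix}, \(\sigma=\tfrac12 S(\nu\oplus\nu)S^{\dag}\) with \(S\in\symcayn\), so \(S\sigma_zS^{\dag}=\sigma_z\). This follows from \cref{eq:defsympcomp}, since \(S^{\dag}\in\symcayn\) as well (the group is closed under adjoints, via \(S^{-1}=\sigma_zS^{\dag}\sigma_z\)). Then
\begin{align}
\sigma-\tfrac12\sigma_z=\tfrac12 S\bigl(\nu\oplus\nu-\sigma_z\bigr)S^{\dag}=\tfrac12 S\bigl((\nu-1)\oplus(\nu+1)\bigr)S^{\dag}\ge 0,
\end{align}
using \(\nu\ge1\). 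The sign convention of \(\sigma_z\) relative to \(\Gamma\) has to be checked against the one-mode vacuum, \(\sigma=\tfrac12\). Either sign gives a positive semidefinite matrix, because \(\nu\pm1\ge0\), so the result holds regardless. This completes the proof that \(\amat\in\sspace\).
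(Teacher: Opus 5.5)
Your proposal is correct and follows the same route as the paper: apply the lemma with \(\tau=\sigma\) to land in \(\gammaspace\), then substitute the inverse coordinate map \(\sigma=\tfrac12(1+\sigma_x\amat)(1-\sigma_x\amat)^{-1}\) into the uncertainty principle \(\sigma-\tfrac12\sigma_z\ge0\) to obtain the defining inequality of \(\sspace\). Your extra checks are correct and supply hypotheses the paper's proof takes for granted: Hermiticity, positivity and the ABC property of \(\sigma\) (via \(\Gamma^{*}=\sigma_x\Gamma\)), and the derivation of the uncertainty principle from the Williamson form, which holds for either sign of \(\sigma_z\). The paper's defining relation for \(\symcayn\) is already \(S\sigma_zS^{\dag}=\sigma_z\), so your detour through closure under adjoints is unnecessary but harmless.
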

\begin{proof}
	Apply \cref{lem:tau-positive-implies-disk} with \(\tau=\sigma\) to obtain \(\amat\in\gammaspace\subset\siegeldiskN\), so \(1-\sigma_{x}\amat\) is invertible.
	Since \(\sigma\) is a covariance matrix, it satisfies the UP \cref{eq:robscr}. Substituting \(\sigma=\half(1+\sigma_x\amat)(1-\sigma_x\amat)^{-1}\) (the inverse of the coordinate change) into \cref{eq:robscr} gives \cref{eq:amatupshort}. Therefore \(\amat\in\sspace\) by \cref{eq:sspaceeq}.
\end{proof}
In practice, the condition \cref{eq:amatupshort} admits a simple block characterization. Write \(\amat\) in blocks as
\begin{align}
	\label{eq:ablocks}
	\amat=
	\begin{pmatrix}
		K & W \\
		W^{*} & K^{*}
	\end{pmatrix}
	\,.
\end{align}
\begin{theorem}[Mixed-state characterization theorem]\label{thm:mixed-state-characterization}
	A mixed Gaussian state corresponds to a double-disk representative \(\amat\in\sspace\). Equivalently,
	\[
		\sspace
		=
		\Bqty{
			\amat\in\gammaspace\ \big|\ W\ge 0
		},
	\]
	where \(W\) is the upper-right block in the decomposition \cref{eq:ablocks}.
\end{theorem}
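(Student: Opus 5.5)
The plan is to reduce the disk-picture uncertainty principle \cref{eq:amatupshort} to a statement about a single $n\times n$ block, using only congruences, which preserve positive semidefiniteness. Let $\amat\in\gammaspace$ have blocks as in \cref{eq:ablocks}, and set $X\coloneqq\sigma_x\amat$, so that
\begin{align}
X=\begin{pmatrix} W^{*} & K^{*}\\ K & W\end{pmatrix}.
\end{align}
As noted before \cref{def:sspace}, membership in $\gammaspace\subset\siegeldiskN$ makes $X$ Hermitian with $\operatorname{spec}(X)\subset(-1,1)$. Hermiticity of $X$ forces $W=W^{\dag}$. Moreover $Y\coloneqq 1-X$ is positive definite, and $Y$ commutes with $1+X$.

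The first step is to conjugate the left-hand side of \cref{eq:amatupshort} by the invertible Hermitian matrix $Y$. Write $1+X=2-Y$ and $P\coloneqq\tfrac12(1+\sigma_z)=1\oplus 0$. Then
\begin{align}
Y\bigl[(1+X)Y^{-1}-\sigma_z\bigr]Y
&=(2-Y)Y-Y\sigma_z Y\\
&=2Y-Y(1+\sigma_z)Y\\
&=2\bigl(Y-YPY\bigr).
\end{align}
So \cref{eq:amatupshort} is equivalent to $Y-YPY\ge0$.

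The second step is a further congruence by $Y^{-1/2}$, which turns the condition into $1-Y^{1/2}PY^{1/2}\ge0$. This says $\|PY^{1/2}\|\le1$. Since $(PY^{1/2})^{\dag}(PY^{1/2})=Y^{1/2}PY^{1/2}$ and $(PY^{1/2})(PY^{1/2})^{\dag}=PYP$ have the same nonzero spectrum, the condition is equivalent to $PYP\le P$. In other words, the upper-left $n\times n$ block of $Y$ must satisfy $\le 1$. That block is $1-W^{*}$, so the condition becomes $W^{*}\ge0$. Finally, $W$ is Hermitian and $W^{*}=W^{T}$ has the same spectrum as $W$, so this is equivalent to $W\ge0$. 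This gives $\sspace=\{\amat\in\gammaspace\mid W\ge0\}$.

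The first sentence of the statement, that every mixed Gaussian state corresponds to some $\amat\in\sspace$, is exactly \cref{thm:mixed-state-conditions-disk}. For the reverse direction, given $\amat\in\sspace$, the matrix $\sigma=\tfrac12(1+X)(1-X)^{-1}$ is Hermitian and ABC (by the same $\sigma_x$-conjugation computation used in \cref{lem:tau-positive-implies-disk}, run backwards). It is positive definite, since its eigenvalues $\tfrac12(1+x)/(1-x)$ are positive for $x\in(-1,1)$, and it satisfies \cref{eq:robscr}. Hence it is a valid complex covariance matrix.

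The main point needing care is the chain of congruences in the second step. One must check that $Y$ and $Y^{1/2}$ are genuinely invertible, which follows from $\operatorname{spec}(X)\subset(-1,1)$. One must also justify that $\|PY^{1/2}\|\le1$ is equivalent to the block inequality $PYP\le P$, which follows by comparing $M^{\dag}M$ with $MM^{\dag}$. The remaining steps are routine bookkeeping.
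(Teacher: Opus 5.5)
Your proposal is correct and is essentially the paper's argument. Your two congruences, first by $1-X$ and then by $(1-X)^{-1/2}$, compose to the paper's single congruence by $B=(1-X)^{1/2}$, giving the same condition $1-BP_{+}B\ge 0$, and your comparison of $M^{\dagger}M$ with $MM^{\dagger}$, which reduces everything to the block $P(1-X)P=(1-W^{*})\oplus 0$, is exactly the paper's $T=P_{+}B$ step. Your converse via $\sigma=\tfrac12(1+X)(1-X)^{-1}$ matches the paper's Step 2, and you additionally check the ABC property of $\sigma$, which the paper leaves implicit.
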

\begin{proof}[Sketch of algebra]
	Write \(\amat\in\gammaspace\) in blocks as in \cref{eq:ablocks} and set \(X\coloneqq\sigma_{x}\amat\). Since \(\amat\in\siegeldiskN\), we have \(\operatorname{spec}(X)\subset(-1,1)\), so \(B\coloneqq(1-X)^{1/2}\) exists and is invertible. By a congruence with \(B\), the disk UP inequality \cref{eq:amatupshort} reduces to the positivity of an operator whose upper-block compression is \(W\), giving the equivalence \(\amat\in\sspace\iff(\amat\in\gammaspace\ \text{and}\ W\ge0)\). Finally, using the inverse coordinate map \(\sigma=\half(1+\sigma_{x}\amat)(1-\sigma_{x}\amat)^{-1}\), one checks that \(\amat\in\gammaspace\) and \(W\ge0\) imply that \(\sigma\) is a physical complex covariance matrix, hence \(\amat\) represents a Gaussian state.

	A complete derivation is given in Appendix \cref{app:thm-mixed-state-characterization}.
\end{proof}
The defining disk constraint, \(1-\amat^{\dag}\amat>0\), is geometric and is preserved by the full disk action. By contrast, the UP filter \cref{eq:amatupshort} (equivalently \(W\ge0\) inside \(\gammaspace\)) selects a distinguished cone in these coordinates. A generic fractional transformation can preserve \(\siegeldiskN\) while failing to preserve that cone, so UP is not covariant under the ambient disk action. Covariance is recovered on the unitary/symplectic subgroup, where in covariance variables one has the congruence law \(\sigma-\half\sigma_{z}\mapsto S(\sigma-\half\sigma_{z})S^{\dag}\).

\subsection{Gaussian unitaries on mixed states}

We would now like to investigate the decomposition of states as unitaries acting on thermal states, i.e., Williamson's form. What prevents us from computing this immediately is the fact that we still need to check the behavior of gaussian unitaries when they evolve mixed states, which we now do.

Under a Gaussian unitary with symplectic matrix $S \in \symcayn$, the complex covariance matrix updates as
$ \sigma'= S \sigma S^{\dag}$, inducing the transformation law on the adjacency matrix
\begin{align}
	\label{eq:atransf1}
	\amat'
	 & =
	\sigma_x
	\pqty{
		S \sigma - \half \kmat S \kmat
	}
	\pqty{
		S \sigma + \half \kmat S \kmat
	}^{-1}
	\,,
\end{align}
where we have used the symplecticity of the transformation matrix to write its inverse as $S^{-\dag} = \sigma_{z} S \sigma_{z}$. We invert~\cref{eq:aofsigma} to obtain a transformation law in terms of the initial adjacency matrix, $\amat' = \amat'(\amat)$. Writing \(\sigma\) in terms of \(\amat\),
\begin{align}
	\sigma
	 & =
	\half
	\pqty{ \sigma_x - \amat }^{-1}
	\pqty{ \sigma_x + \amat }
	\\
	& =
	\half
	\pqty{ \sigma_x + \amat^{*}}
	\pqty{ \sigma_x - \amat^{*} }^{-1}
	\,,
\end{align}
(the second equation follows from $\sigma^{\dag} = \sigma$, and is more useful here),
and substituting in~\cref{eq:atransf1}, we arrive at
\begin{align}
	\amat'
	 & =
	\sigma_x
	\pqty{
		\pqty{ S + \sigma_{z} S \sigma_{z}  }\amat^*
		+
		\pqty{ S - \sigma_{z} S \sigma_{z} } \sigma_x
	} \nonumber \\ &\quad \times
	\pqty{
		\pqty{ S - \sigma_{z} S \sigma_{z}  }\amat^*
		+
		\pqty{ S + \sigma_{z} S \sigma_{z} } \sigma_x
	}^{-1}
	\\
	 & =
	\pqty{
		\pqty{ S^{*} + \sigma_{z} S^{*} \sigma_{z}  }\amat
		+
		\pqty{ S^{*} - \sigma_{z} S^{*} \sigma_{z} } \sigma_{x}
	} \nonumber \\ &\quad \times
	\pqty{
		\pqty{ S - \sigma_{z} S \sigma_{z}  } \sigma_{x} \amat
		+
		\pqty{ S + \sigma_{z} S \sigma_{z} }
	}^{-1}
	\,,
\end{align}
	where we used $ S \sigma_{x} = \sigma_{x} S^{*}$ (because \(S\in\symcayn\)) and \( \amat \sigma_{x} = \sigma_{x} \amat^{*}\) (the defining relation of \(\gammaspace\)).
The right-hand side is a fractional transformation with respect to a \(4n\times4n\) matrix built from \(S\), which we can identify explicitly.

\begin{theorem}[Gaussian unitaries act by fractional transformations]\label{thm:unitary-action-mixed}
	Let \(S\in\symcayn\) and \(\amat\in\gammaspace\). Then the transformed adjacency matrix \(\amat'\) is given by the fractional transformation
	\begin{align}
		\label{eq:fractionalonmixed}
		\amat' & = \phi_{S \bcc} (\amat)
		\,.
	\end{align}
	In particular, \(\symcayn\) preserves \(\gammaspace\) and restricts to an action on \(\sspace\).
\end{theorem}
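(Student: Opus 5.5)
The plan is to start from the last displayed formula before the statement, which already writes \(\amat'\) as \((\mathcal{D}\amat+\mathcal{C})(\mathcal{B}\amat+\mathcal{A})^{-1}\). Here
\[
\mathcal{D}=S^{*}+\sigma_z S^{*}\sigma_z,\quad
\mathcal{C}=(S^{*}-\sigma_z S^{*}\sigma_z)\sigma_x,\quad
\mathcal{B}=(S-\sigma_z S\sigma_z)\sigma_x,\quad
\mathcal{A}=S+\sigma_z S\sigma_z .
\]
It then suffices to identify these four \(2n\times2n\) blocks with the blocks of \(S\bcc\), up to an overall scalar, which does not change \(\phi\).

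\textbf{Block identification.} Write \(S=\alpha\pcc+\beta\pcc\sigma_x\). Since \(\sigma_z\) commutes with \(\alpha\pcc\) and anticommutes with \(\sigma_x\), we get
\[
S+\sigma_zS\sigma_z=2\alpha\pcc,\qquad
S-\sigma_zS\sigma_z=2\beta\pcc\sigma_x .
\]
Hence \(\mathcal{A}=2\alpha\pcc\) and \(\mathcal{B}=2\beta\pcc\). The conjugated expressions give \(\mathcal{D}=2(\alpha\pcc)^{*}=2\alpha^{*}\oplus\alpha\) and \(\mathcal{C}=2(\beta\pcc)^{*}=2\beta^{*}\oplus\beta\). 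Dividing by \(2\), we must compare with \(S\bcc=(\alpha\pcc)\pcc+(\beta\pcc)\pcc\sigma_x\), using the DCBA convention of \cref{eq:linearfractional}, where \(\phi_S(K)=(\alpha^{*}K+\beta^{*})(\beta K+\alpha)^{-1}\). Here the blocks of \(S\bcc\) are \(\alpha\pcc\), \(\beta\pcc\), \((\beta\pcc)^{*}\), \((\alpha\pcc)^{*}\), and by \cref{def:double-pcc} \((\alpha\pcc)\pcc=\alpha\oplus\alpha^{*}\oplus\alpha^{*}\oplus\alpha\). The lower blocks are therefore exactly \((\alpha\pcc)^{*}\) and \((\beta\pcc)^{*}\), matching \(\mathcal{D}/2\) and \(\mathcal{C}/2\). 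This proves \cref{eq:fractionalonmixed}, provided the preceding substitution is valid.

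\textbf{Justifying the substitution.} The inverses used must exist. Since \(\amat\in\gammaspace\), \(X=\sigma_x\amat\) is Hermitian with spectrum in \((-1,1)\), so \(\sigma=\tfrac12(1+X)(1-X)^{-1}\) is Hermitian, positive definite and ABC. The map \(\sigma'=S\sigma S^{\dag}\) preserves these properties because \(S\in\symcayn\) satisfies \(S\sigma_x=\sigma_x S^{*}\). Hence \(\sigma'+\tfrac12\) is invertible and \cref{lem:tau-positive-implies-disk} applies to \(\tau=\sigma'\). This shows both that \(\amat'\in\gammaspace\) and that the denominator \(\beta\pcc\amat+\alpha\pcc\), being the product of invertible factors appearing in the derivation, is invertible.

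\textbf{Restriction to \(\sspace\).} If \(\amat\in\sspace\), then \(\sigma\) satisfies the uncertainty principle \cref{eq:robscr}. Since
\[
\sigma'-\tfrac12\sigma_z=S\bigl(\sigma-\tfrac12\sigma_z\bigr)S^{\dag},
\]
using \(S\sigma_zS^{\dag}=\sigma_z\), the transformed matrix still satisfies it. \cref{thm:mixed-state-conditions-disk} then gives \(\amat'\in\sspace\).

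\textbf{Main obstacle.} The block identification is mechanical once the \(\sigma_z\)-commutation trick is used. The point needing care is the ordering and conjugation conventions: the pattern \(\zeta\oplus\zeta^{*}\) versus \(\zeta^{*}\oplus\zeta\) in \cref{def:double-pcc}, and where \(\sigma_x\) sits in each block. I would verify these against the pure case, \cref{eq:double_pure2} with \(\amat=K\pcc\), as a consistency check.
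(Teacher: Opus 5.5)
Your proposal is correct and follows essentially the same route as the paper's proof. Both arguments proceed by coordinate transport of \(\sigma'=S\sigma S^{\dag}\) and block identification via \(S\pm\sigma_z S\sigma_z\); your blocks \(2\alpha\pcc\), \(2\beta\pcc\), \(2(\beta\pcc)^{*}\), \(2(\alpha\pcc)^{*}\) are exactly the \(A,B,C,D\) blocks of \(2S\bcc\). Preservation of \(\gammaspace\) and \(\sspace\) then follows, as in the paper, by transporting the ABC relation and the congruence \(\sigma'-\tfrac12\sigma_z=S(\sigma-\tfrac12\sigma_z)S^{\dag}\).

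The only difference is well-definedness. The paper invokes the Siegel-disk invertibility theorem for \(S\bcc\in\symcayN\). You instead read invertibility off the identity \((\beta\pcc\amat+\alpha\pcc)\sigma_x=(\sigma'+\tfrac12)S^{-\dag}(\sigma_x-\amat^{*})\), which you should state explicitly rather than only allude to. This is more self-contained, and it cleanly covers \(\amat\in\gammaspace\) that violate the UP, where \(\sigma\) is only positive definite.
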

\begin{proof}
	The displayed computation preceding the theorem shows that the induced coordinate update (transporting \(\sigma'=S\sigma S^{\dag}\) through \(\sigma\mapsto\amat(\sigma)\)) can be rewritten as
	\[
		\amat'=(D\amat+C)(B\amat+A)^{-1}
	\]
	for an appropriate \(4n\times4n\) block matrix \(\begin{psmallmatrix}A&B\\ C&D\end{psmallmatrix}\) built from \(S\). Writing \(S=\alpha\pcc+\beta\pcc\sigma_{x}\) and using that \(\sigma_{z}\) commutes with \(\alpha\pcc\) and anticommutes with \(\beta\pcc\sigma_x\), one finds \(S\bcc=(\alpha\pcc)\pcc+(\beta\pcc)\pcc\sigma_{x}\) as in \cref{eq:rep1}, and the fraction above is precisely \(\phi_{S\bcc}(\amat)\) in the convention \cref{def:fractional-transformation}.

	Well-definedness follows because \(\amat\in\siegeldiskN\) implies \(\|\amat\|<1\), hence every denominator appearing in the derivation is invertible (equivalently, the standard Siegel-disk invertibility theorem applies since \(S\bcc\in\symcayN\)).

	To see that \(\gammaspace\) is preserved, note that \(\amat\in\gammaspace\) corresponds to an ABC covariance matrix \(\sigma\), and \(S\sigma_{x}=\sigma_{x}S^{*}\) implies \(\sigma'=S\sigma S^{\dag}\) is again ABC; the coordinate change \(\sigma\mapsto \amat(\sigma)\) preserves the ABC relation, so \(\amat'=\amat(\sigma')\in\gammaspace\).

	Finally, if \(\amat\in\sspace\), then the corresponding covariance matrix satisfies \(\sigma-\tfrac12\sigma_{z}\ge0\). Using \(S\sigma_{z}S^{\dag}=\sigma_{z}\), we get
	\[
		\sigma'-\tfrac12\sigma_{z}
		=
		S(\sigma-\tfrac12\sigma_{z})S^{\dag}\ge0,
	\]
	so \(\sigma'\) satisfies the UP and therefore \(\amat'\in\sspace\).
\end{proof}
Because to each mixed state corresponds to a double-disk element \(\amat\), the fractional transformation above describes Gaussian-unitary evolution directly on \(\siegeldiskN\). Equation \cref{eq:double_pure2} showed that, for the pure-state embedding \(\amat=K\pcc\), the pure-state update \(K\mapsto\phi_{T}(K)\) induced by \(T\in\sympluscayn\) lifts to \(\amat\mapsto\phi_{T\bcc}(\amat)\). Theorem~\ref{thm:unitary-action-mixed} shows that the same fractional form persists for mixed-state representatives \(\amat\in\sspace\) under Gaussian unitaries \(S\in\symcayn\), namely \(\amat'=\phi_{S\bcc}(\amat)\).

\subsection{Normal-mode decomposition in the disk}%
\label{sec:normalmodedec}

To write down a normal-mode decomposition of a state in the disk, in principle, one would need to consider a state $\amat \in \sspace$ and use its spectral properties to show that it is decomposable in some notable way. Since fractional transformations are the natural automorphism of the disk, it is reasonable to have the expectation that whatever the explicit form of such a disk-normal-mode decomposition is, it should be a fractional transformation.
Geometrically, \(\sspace\) carries an action of \(\symcayn\) by disk automorphisms, so a ``normal form'' amounts to moving any \(\amat\) along its group orbit to a simple representative (thermal) and interpreting the remaining freedom via the stabilizer.

There is a shortcut, however. Because for complex covariance matrices we already have the Williamson form, we can simply inspect its decomposition in the Siegel disk by using the parametrization of \(\sigma\) by one temperature matrix, $\nu\pcc$, and an ABC, symplectic matrix $S \in \symcayn$ [\cref{eq:complex_covariance_matrix}].
The normal-mode decomposition in the disk is then defined through the map between covariance matrices and adjacency matrices in \cref{eq:aofsigma}. In practice, we parametrize the covariance matrix as $\sigma = \half S \nu_{\pcc} S^{\dag}$ and check the image in the hope that the decomposed nature survives in a suggestive way. Indeed it does.

The adjacency matrix of a thermal state is obtained from~\cref{eq:aofsigma},
\begin{align}\label{eq:thermalesdr}
	\amat_{\xi} = \xi\pcc \sigma_{x}
	\,,
\end{align}
where we set
	\begin{align}
		\label{eq:normalmodedec}
		\xi & = (\nu-1)(\nu + 1)^ {-1}
		\,.
	\end{align}
	Since \(\nu\) is diagonal with \(\nu_i>0\), this fractional coordinate can equivalently be written entrywise as
	\(\xi=\tanh\!\pqty{\tfrac12 \log \nu}\).
	In particular, \(-1<\xi<1\), and for physical temperatures \(\nu\ge 1\) we have \(0\le \xi<1\).
	The matrix $\sigma_{x} \amat_{\xi}$ is diagonal, and \cref{eq:normalmodedec} makes it clear that with no further restrictions, its spectrum is $[-1,1]$.
	The uncertainty principle further narrows it down.
We know that states have \(\nu \ge 1\), but let us check with the disk version of the UP instead, \cref{eq:amatupshort}. For thermal states, \(W = \xi\), and we obtain \(\xi \ge 0\), which is equivalent to \(\nu \ge 1\).

Since we have shown that gaussian unitaries act on any state via a fractional transformation with respect to a symplectic matrix \(S\), then \cref{eq:fractionalonmixed} holds, and the normal mode decomposition of the state is simply \(\phi_{S}\) on \(\amat_{\xi}\), i.e.,
\begin{align}
	\label{eq:normalmodedec2}
	\amat = \phi_{S\bcc}(\amat_{\xi})
	\,,
\end{align}
with $\amat_{\xi}$ given by~\cref{eq:thermalesdr}, and $S \in \symcayn$.
The embedding $S \rightarrow S \bcc$ is defined in \cref{eq:blockmap}.
A more direct way to obtain the same answer (that involves more algebra) is to simply plug \(\sigma = \tfrac{1}{2} S \nu\pcc S^{\dag}\) into \cref{eq:aofsigma} and check block by block.

\begin{corollary}[Williamson decomposition in the disk]\label{cor:williamson-disk}
	For any adjacency matrix \(\amat\in\sspace\), there exists \(S\in\symcayn\) and a thermal representative \(\amat_{\xi}\) of the form \cref{eq:thermalesdr} such that
	\(\amat=\phi_{S\bcc}(\amat_{\xi})\).
\end{corollary}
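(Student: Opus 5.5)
The plan is to pull the decomposition back from covariance-matrix space, where Williamson's theorem is already available, and push it forward through the coordinate change \cref{eq:aofsigma} using \cref{thm:unitary-action-mixed}.

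Start from \(\amat\in\sspace\) and recover its covariance matrix with the inverse coordinate map
\[
\sigma=\half(1+\sigma_x\amat)(1-\sigma_x\amat)^{-1}.
\]
This is well defined because \(\amat\in\siegeldiskN\) gives \(\operatorname{spec}(\sigma_x\amat)\subset(-1,1)\).

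By \cref{thm:mixed-state-characterization}, together with the characterization used in its proof, \(\sigma\) is a bona fide complex covariance matrix. It is Hermitian and positive definite, it is ABC, and it satisfies the uncertainty principle \cref{eq:robscr}. Hermiticity comes from that of \(\sigma_x\amat\), which holds because \(\amat\) is symmetric and ABC. The ABC property of \(\sigma\) is inherited from \(\amat\sigma_x=\sigma_x\amat^{*}\) by the same conjugation argument used for the forward map.

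Williamson's decomposition \cref{eq:complex_covariance_matrix} then supplies \(S\in\symcayn\) and a diagonal \(\nu\ge1\) with \(\sigma=\half S\nu\pcc S^{\dag}\). Set \(\sigma_\nu\coloneqq\half\nu\pcc\), the thermal covariance matrix, so that \(\sigma=S\sigma_\nu S^{\dag}\).

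The first computation is that the image of \(\sigma_\nu\) under \cref{eq:aofsigma} equals \(\amat_\xi\) as in \cref{eq:thermalesdr}. Since \(\nu\) is real diagonal, \(\nu\pcc=\nu\oplus\nu\) commutes with everything in sight, and
\[
\sigma_x(\sigma_\nu-\half)(\sigma_\nu+\half)^{-1}=\sigma_x\,(\xi\oplus\xi)=\xi\pcc\sigma_x,
\]
with \(\xi=(\nu-1)(\nu+1)^{-1}\). The last equality holds because \(\xi\) is real, so \(\sigma_x\) commutes with \(\xi\oplus\xi\). Since \(\nu\ge1\), this also gives \(\xi\ge0\), so \(\amat_\xi\in\sspace\).

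Next, apply \cref{thm:unitary-action-mixed} to \(\amat_\xi\in\gammaspace\) with this \(S\). The theorem says the image of \(S\sigma_\nu S^{\dag}=\sigma\) under the coordinate change is \(\phi_{S\bcc}(\amat_\xi)\).

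It remains to see that the image of \(\sigma\) is \(\amat\) itself. The coordinate change \(\sigma\mapsto\amat\) is a bijection between Hermitian positive-definite ABC matrices and \(\gammaspace\), with inverse given by the formula above. Hence the image of \(\sigma\) is \(\amat\), and \(\amat=\phi_{S\bcc}(\amat_\xi)\).

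The main obstacle I expect is the bookkeeping that the inverse map really lands back on a physical covariance matrix. This requires positivity of \(\sigma\), which is immediate since the eigenvalues of \((1+X)/(2(1-X))\) are positive for \(\operatorname{spec}X\subset(-1,1)\). It also requires the ABC relation and the uncertainty principle, which \cref{thm:mixed-state-characterization} and \cref{def:sspace} supply. Everything else is a direct appeal to Williamson's theorem and to \cref{thm:unitary-action-mixed}.
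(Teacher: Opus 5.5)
Your proposal is correct and follows the same route as the paper. You pull \(\amat\) back to its covariance matrix, apply Williamson's theorem, evaluate the coordinate map on the thermal covariance \(\tfrac12\nu\pcc\) to obtain \(\amat_\xi\), and invoke the mixed-state unitary-action theorem; your extra bookkeeping (Hermiticity, ABC property and positivity of \(\sigma(\amat)\), and the round trip back to \(\amat\)) only makes explicit what the paper's short proof leaves implicit.
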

\begin{proof}
	Let \(\amat\in\sspace\) and let \(\sigma=\sigma(\amat)\) be its physical covariance matrix. By Williamson's theorem there exist \(S\in\symcayn\) and \(\nu\ge 1\) such that \(\sigma=\tfrac12 S\nu\pcc S^{\dag}\). Setting \(\sigma_{\mathrm{th}}\coloneqq\tfrac12\nu\pcc\) and \(\amat_{\xi}\coloneqq \amat(\sigma_{\mathrm{th}})\), the mixed-state unitary action \cref{thm:unitary-action-mixed} gives \(\amat=\phi_{S\bcc}(\amat_{\xi})\), and \(\amat_{\xi}\) has the thermal form \cref{eq:thermalesdr} by direct evaluation of \cref{eq:aofsigma} on \(\sigma_{\mathrm{th}}\). See, e.g., Ref.~\cite{serafini_quantum_2017} for Williamson's theorem.
\end{proof}

\section{Gaussian operations and channels in the Siegel Disk}
\label{sec:gaussian-channels-disk}
We now turn from single-Kraus Gaussian operations (semigroup elements) to general Gaussian operations and, in particular, deterministic Gaussian channels. In the disk picture, a deterministic Gaussian channel should act on the double-disk representatives \(\amat\) by a linear fractional transformation while preserving the physical state space \(\sspace\subset\siegeldiskN\). The inclusions and embeddings used in this section are summarized in Fig.~\ref{fig:big-picture-disk}.

	The natural transformations acting on \(\siegeldiskN\) are the disk-preserving fractional transformations in dimension \(4n\), i.e.\ the Siegel-disk preserving semigroup \(\sympluscayN\) defined exactly as in \cref{def:sympluscayn} but with \(n\mapsto 2n\) (so elements are \(4n\times4n\) matrices acting on the \(2n\)-mode disk by linear fractional transformations).
	\begin{center}
		\begin{minipage}{\linewidth}
			\centering
			\begin{tikzpicture}[>=stealth,x=1cm,y=1cm,
				box/.style={draw=black,rounded corners=8pt,line width=0.35pt,fill=gray!6,fill opacity=0.40,text opacity=1,inner sep=7pt,font=\scriptsize,align=left,text width=7.6cm,minimum height=1.15cm},
				arrow/.style={->,line width=0.35pt}
			]
				\node[box] (coord) at (0,3.05) {\textbf{Coordinates:} \(\sigma\ \xleftrightarrow{\ \phi_{\Lambda}\ }\ \amat\in\siegeldiskN\).};
				\node[box] (phys) at (0,1.65) {\textbf{Physical subset:} \(\gammaspace\) (ABC) and UP \(\Rightarrow\ \sspace\subset\siegeldiskN\).};
				\node[box] (dyn) at (0,0.10) {\textbf{Dynamics:} updates are fractional maps \(\amat'=\phi_{E}(\amat)\).\\[-1pt]
					Unitaries: \(S\in\symcayn\mapsto S\bcc\) and \(\amat'=\phi_{S\bcc}(\amat)\).\\[-1pt]
					Single-Kraus: \(T\in\sympluscayn\mapsto T\bcc\) and \(\amat'=\phi_{T\bcc}(\amat)\).\\[-1pt]
					Channels: \((X,Y)\mapsto\bar E(X,Y)\) and \(\amat'=\phi_{\bar E}(\amat)\).};
				\node[box] (comp) at (0,-2.05) {\textbf{Composition:} \(\phi_{E_2}\circ\phi_{E_1}=\phi_{E_2E_1}\) (matrix multiplication).};

				\draw[arrow] (coord.south) -- (phys.north);
				\draw[arrow] (phys.south) -- (dyn.north);
				\draw[arrow] (dyn.south) -- (comp.north);
			\end{tikzpicture}
			\captionof{figure}{Program summary (disk viewpoint). We use a fixed fractional coordinate change \(\sigma\xleftrightarrow{\phi_{\Lambda}}\amat\) to represent mixed Gaussian states as \(\amat\in\siegeldiskN\), then restrict to the physical state space \(\sspace\subset\siegeldiskN\) by imposing ABC symmetry and the UP. Gaussian dynamics is represented by fractional transformations \(\amat'=\phi_{E}(\amat)\) with acting matrices \(E\) obtained by embedding (i) unitaries \(S\in\symcayn\), (ii) single-Kraus operations \(T\in\sympluscayn\), or (iii) deterministic channels via \((X,Y)\mapsto\bar E(X,Y)\). Composition corresponds to multiplication of the acting matrices.}
			\label{fig:big-picture-disk}
		\end{minipage}
	\end{center}
	The remainder of this section explains how far one can get by viewing channels abstractly as transformations on disk representatives, and then connects that viewpoint to the standard covariance-matrix parametrization \((X,Y)\), which provides the concrete description we will use for channel dynamics.
			
\subsection{A geometric viewpoint on channels}\label{sec:channels-fractional-maps}%
		
	In the Fock--Bargmann picture, a Gaussian operation/channel can be represented by a Gaussian super-kernel acting on kernels by composition/integration. In the disk language developed here, this amounts to specifying a transformation acting on mixed-state representatives \(\amat\in\sspace\subset\siegeldiskN\). At this level of abstraction, this is an ``identification'' in the same sense that one may say ``a (quantum) channel is a rank-4 tensor'': it locates operations/channels inside a very large space of transformations, without yet giving an intrinsic, practical characterization of which ones are physical.

	\begin{definition}[Gaussian operations and channels]\label{def:gchannels}
		We write \(\gchannels\) for the set of \(n\)-mode \emph{Gaussian operations}, i.e.\ completely positive, trace-nonincreasing linear maps on trace-class operators that map Gaussian states to Gaussian states. Following~\cite{serafini_quantum_2017}, we refer to \(\gchannels\) as \emph{non-deterministic} in general, and we write \(\gchannelsTP\subseteq \gchannels\) for the trace-preserving (deterministic) subclass (Gaussian channels).
	\end{definition}
	In \cref{sec:channel-embedding} we use the standard \((X,Y)\) parametrization of Gaussian channels to associate to each \(\Phi\in\gchannelsTP\) a matrix \(\bar E(\Phi)\in\sympluscayN\) whose disk fractional action reproduces the channel update on representatives. By (mild) abuse of language, we will often identify \(\Phi\) with \(\bar E(\Phi)\) and thus view the embedded channel class as a concrete subset \(\gchannelsTP\subseteq\sympluscayN\). In particular, the embedded class obeys the two coarse geometric constraints introduced below (state-space preservation on \(\sspace\) and ABC preservation on \(\gammaspace\)); see Fig.~\ref{fig:channel-constraints}. The embedded single-Kraus semigroup \(\mathcal T\coloneqq(\sympluscayn)\bcc\) is a distinguished subclass of Gaussian operations, \(\mathcal T\subseteq\gchannels\), but in general it is not trace preserving; correspondingly, for endomaps on \(n\) modes one has \(\mathcal T\cap\gchannelsTP=\mathcal U\coloneqq(\symcayn)\bcc\) (embedded unitaries).

		A natural necessary condition for physicality in this setting is that the induced update on representatives preserve the state space, i.e.\ \(\sspace\to\sspace\) (Fig.~\ref{fig:big-picture-disk}, top-right arrow). This motivates the following \emph{geometric} class:
			
		\begin{definition}[State-space preserving class \(\mathcal E\)]\label{def:channel-set}
		\begin{align}\label{eq:gauschangeneral}
			\mathcal E = \Bqty{E \in \sympluscayN \ \text{s.t.} \ \phi_{E}(\amat) \in \sspace, \ \forall \amat \in \sspace}
		\,.
	\end{align}
	\end{definition}
	This definition is intentionally coarse: it does not, by itself, encode complete positivity/linearity at the operator level, and turning the condition \(\phi_E(\sspace)\subseteq\sspace\) into useful constraints on the blocks of \(E\) is not straightforward. For our purposes, it is therefore not worth pursuing a full intrinsic characterization of physical channels directly inside \(\sympluscayN\). Instead, in \cref{sec:channel-embedding} we use the standard \((X,Y)\) parametrization of Gaussian channels to obtain an explicit embedding that yields the disk-picture update rule we need.

	That said, it is still useful to understand some \emph{structural} geometric constraints that are compatible with our coordinates. The simplest is to require that \(E\) preserve the ABC subset \(\gammaspace\), which admits a clean block-level characterization and provides a natural intermediate class (Fig.~\ref{fig:channel-constraints}).
	Consider the subset of ABC-preserving fractional transformations of \(\sympluscayN\),
	\begin{definition}[ABC-preserving maps \(\mathcal E^{\Gamma}\)]\label{def:abc-preserving-maps}
	\begin{align}\label{eq:gauschangeneralGamma}
		\mathcal E^{\Gamma} = \Bqty{E \in \sympluscayN \ \text{s.t.} \ \phi_{E}(\amat) \in \gammaspace, \ \forall \amat \in \gammaspace}
	\,.
\end{align}
\end{definition}
Because \(\sspace\subset\gammaspace\), any channel must preserve the ABC condition \emph{at least on the state space}: if \(\amat\in\sspace\), then \(\phi_{E}(\amat)\in\sspace\subset\gammaspace\). It is therefore natural to also consider the stronger requirement that \(E\) preserve \(\gammaspace\) on all of \(\gammaspace\), which is exactly the set \(\mathcal E^{\Gamma}\) above. The relationship between \(\mathcal E\) and \(\mathcal E^{\Gamma}\) is not automatic in general, and we will mostly work within the ABC-preserving class because it admits a clean block-level characterization.
	With the notation of \cref{def:channel-set,def:abc-preserving-maps}, the embedded deterministic channel class satisfies \(\gchannelsTP\subseteq \mathcal E\cap \mathcal E^{\Gamma}\) in the schematic sense of Fig.~\ref{fig:channel-constraints}.

	\begin{center}
		\begin{minipage}{\linewidth}
				\centering
				\begin{tikzpicture}[x=1cm,y=1cm,
					every node/.style={font=\scriptsize,align=center},
					frame/.style={draw=black,rounded corners=8pt,line width=0.35pt},
					slabA/.style={draw=black,rounded corners=10pt,line width=0.35pt,fill=gray!10,fill opacity=0.35},
					slabB/.style={draw=black,rounded corners=10pt,line width=0.35pt,fill=gray!20,fill opacity=0.35},
					sq/.style={draw=black,rounded corners=8pt,line width=0.35pt}
				]
					\draw[frame] (-3.35,-2.95) rectangle (3.35,2.95);
					\node[anchor=south] at (0,2.95) {\(\sympluscayN\)};

					\filldraw[slabA] (-3.10,-1.45) rectangle (3.10,2.65);
					\filldraw[slabB] (-3.10,-2.65) rectangle (3.10,1.45);
					\node[anchor=north west] at (-3.02,2.58) {\(\mathcal E\)};
					\node[anchor=south west] at (-3.02,-2.58) {\(\mathcal E^{\Gamma}\)};
					\node[anchor=south] at (0,1.45) {\(\mathcal E\cap\mathcal E^{\Gamma}\)};

					\filldraw[sq,fill=gray!26,fill opacity=0.55]
						(-1.55,-1.25) rectangle (1.55,1.15);
					\node at (0,0.92) {\(\gchannels\)};

					\filldraw[sq,fill=gray!38,fill opacity=0.55]
						(-1.22,-0.20) rectangle (1.22,0.78);
					\filldraw[sq,fill=gray!38,fill opacity=0.55]
						(-1.22,-0.78) rectangle (1.22,0.20);
					\begin{scope}
						\path[clip,rounded corners=8pt] (-1.22,-0.20) rectangle (1.22,0.78);
						\filldraw[sq,fill=gray!62,fill opacity=0.60]
							(-1.22,-0.78) rectangle (1.22,0.20);
					\end{scope}
					\node at (0,0.42) {\(\mathcal T\)};
					\node at (0,-0.42) {\(\gchannelsTP\)};
					\node at (0,0.00) {\(\mathcal U\)};
				\end{tikzpicture}
			\captionof{figure}{Constraint hierarchy for fractional transformations on the double disk (schematic). The ambient semigroup \(\sympluscayN\) acts on \(\siegeldiskN\) by linear fractional transformations. Two coarse geometric classes are \(\mathcal E\) (state-space preserving on \(\sspace\)) and \(\mathcal E^{\Gamma}\) (ABC-preserving on \(\gammaspace\)). Inside their overlap we indicate the (schematic) physically meaningful class \(\gchannels\) of Gaussian operations, together with two notable subclasses that we explicitly realize in this paper: \(\mathcal T\coloneqq(\sympluscayn)\bcc\) (embedded single-Kraus/oscillator semigroup) and \(\gchannelsTP\) (embedded deterministic Gaussian channels). Their intersection is \(\mathcal U\coloneqq(\symcayn)\bcc=\mathcal T\cap\gchannelsTP\) (embedded Gaussian unitaries).}
			\label{fig:channel-constraints}
		\end{minipage}
	\end{center}

		\begin{center}
			\begin{minipage}{\linewidth}
					\centering
					\begin{tikzpicture}[>=stealth,
						box/.style={inner sep=1pt,minimum height=1.05cm,font=\scriptsize,align=center},
						arrow/.style={->,line width=0.35pt},
						lbl/.style={font=\scriptsize,fill=white,inner sep=1.2pt,rounded corners=2pt}
						]
						\matrix (m) [matrix of nodes,row sep=1.30cm,column sep=0.95cm] {
							\node[box,text width=2.95cm] (sigma) {\(\sigma\)}; &
							\node[box,text width=3.20cm] (sigmap) {\(\sigma'\)}; \\
							\node[box,text width=2.95cm] (A) {\(\amat=\phi_{\Lambda}(\sigma)\in\sspace\)}; &
							\node[box,text width=3.20cm] (Ap) {\(\amat'=\phi_{\Lambda}(\sigma')\in\sspace\)}; \\
						};

					\node[font=\scriptsize,align=center] at ($(sigma.north)!0.5!(sigmap.north)+(0,0.62)$)
						{\(\Phi\in\gchannelsTP\)\\\((X,Y)\)};

						\draw[arrow] (sigma) -- node[midway,below=5pt,lbl]{\(\sigma\mapsto X\sigma X^{\dag}+Y\)} (sigmap);
						\draw[arrow] (A) -- node[midway,below=5pt,lbl]{\(\substack{\amat\mapsto\\ \phi_{\bar E(X,Y)}(\amat)}\)} (Ap);
						\draw[arrow] (sigma) -- node[midway,left,lbl]{\(\phi_{\Lambda}\)} (A);
						\draw[arrow] (sigmap) -- node[midway,right,lbl]{\(\phi_{\Lambda}\)} (Ap);
					\end{tikzpicture}
					\captionof{figure}{Deterministic channels act on \(\sspace\) by disk fractional transformations (schematic). The embedding theorem asserts that this square commutes on the state space: for \(\Phi\in\gchannelsTP\) with parameters \((X,Y)\) and embedded acting matrix \(\bar E=\bar E(X,Y)\), one has \(\phi_{\Lambda}(X\sigma X^{\dag}+Y)=\phi_{\bar E}(\phi_{\Lambda}(\sigma))\) for \(\sigma\) representing a state (equivalently, \(\amat=\phi_{\Lambda}(\sigma)\in\sspace\)).}
					\label{fig:gtp-action-sspace}
				\end{minipage}
			\end{center}

	In particular, for \(\amat\in\gammaspace\) we want $\phi_{E}(\amat)$ to satisfy~\cref{eq:gammaspaceeq}, with $E \in \sympluscayN$:
	\begin{align}\label{eq:physicalmaps}
		\sigma_{x} \phi^{*}_{E}(\amat) \sigma_{x} = \phi_{E}(\amat)
	\,.
\end{align}
We can identify these maps by inspecting the transformation explicitly in terms of its blocks. Let
\begin{align}\label{eq:sabcd}
	E =
	\begin{pmatrix}
		\gamma & \delta \\
		\zeta  & \omega
	\end{pmatrix}
	\,,
\end{align}
where the blocks are $2 n \times 2 n$ complex dimensional matrices. The condition in \cref{eq:physicalmaps}, in terms of the blocks of $E$, reads
\begin{align}\label{eq:physicalsderivation}
	\nonumber
	\sigma_{x} \phi^{*}_{E} (\amat) \sigma_{x}
	 & =
	\pqty{\sigma_{x} \gamma^{*} \sigma_{x} \amat + \sigma_{x} \delta^{*} \sigma_{x}}
	\\ &\quad\times
	\pqty{ \sigma_{x} \zeta^{*} \sigma_{x} \amat + \sigma_{x} \omega^{*} \sigma_{x} }^{-1}
	\\
	&=
	\phi_{E} (\amat)
	\,,
\end{align}
and the equals sign holds if each block, $\gamma$, $\delta$, $\zeta$, and $\omega$ is an ABC matrix, or, in other words, if it belongs to
\begin{align}\label{eq:blockscay}
	\mathcal{E}^{\Gamma} =
	\Bqty{
		E \in \sympluscayN\ |\
		E ( \sigma_{x} )\pcc
		=
		( \sigma_{x} )\pcc E^{*}
	}
	\,.
\end{align}
Explicitly, these are \(4n\)-dimensional matrices parametrized by \(8\) \(n\)-dimensional matrices as
\begin{align} \label{eq:chanparametrizationgeneral}
	E & = 	\begin{pmatrix}
			       \mu_{1}       & \mu_{2}       & \nu_{1}        & \nu_{2}        \\
		       \mu_{2}^{*}   & \mu^{*}_{1}   & \nu_{2}^{*}    & \nu_{1}^{*}    \\
		       \zeta_{1}^{*} & \zeta_{2}^{*} & \omega_{1}^{*} & \omega_{2}^{*} \\
		       \zeta_{2}     & \zeta_{1}     & \omega_{2}     & \omega_{1}     \\
	       \end{pmatrix}
	\,,
\end{align}
All deterministic Gaussian channels, as well as a large set of unphysical transformations, have this form.

Belonging to this set does not imply that \(E\) itself is ABC. If, in addition, \(E\) satisfies the ABC relation \(E(\sigma_{x})^{[2n]}=(\sigma_{x})^{[2n]}E^{*}\) (equivalently: \(E\) is an ABC-symplectic matrix), then the disk-preserving inequality \(E^{\dag}\sigma_{z}E\ge\sigma_{z}\) saturates to \(E^{\dag}\sigma_{z}E=\sigma_{z}\), hence \(E\in\symcayN\), the boundary group of \(\sympluscayN\). This is another set of fractional transformations of interest, which still include unphysical transformations. The elements of \(\symcayN\) can be parametrized by four $n \times n$ matrices as
\begin{align}\label{eq:chanparametrization}
	\bar S & =
	\pqty{(\mu_{1})_{\pcc} + (\mu_{2})_{\pcc} \sigma_{x}}\pcc
	\nonumber        \\
	       & \quad +
	\pqty{
		(\nu_{1})\pcc + (\nu_{2})\pcc \sigma_{x})
	}\pcc \sigma_{x}
	\,,
\end{align}
or, more explicitly,
\begin{align} \label{eq:chanparametrizationexplicit}
	\bar S & =
	\begin{pmatrix}
		\mu_{1}     & \mu_{2}     & \nu_{1}     & \nu_{2}     \\
		\mu_{2}^{*} & \mu^{*}_{1} & \nu_{2}^{*} & \nu_{1}^{*} \\
		\nu_{1}^{*} & \nu_{2}^{*} & \mu_{1}^{*} & \mu_{2}^{*} \\
		\nu_{2}     & \nu_{1}     & \mu_{2}     & \mu_{1}     \\
	\end{pmatrix}
	\,,
\end{align}
where the \(2n\)-dimensional blocks
$\alpha = (\mu_{1})\pcc + (\mu_{2})\pcc \sigma_{x}$
and
$\beta = (\nu_{1})\pcc + (\nu_{2}) \sigma_{x}$
satisfy~\cref{eq:alphabetaone,eq:alphabetatwo}.
This set is a genuine group (as opposed to a semigroup): the defining ABC relation \(\bar S(\sigma_x)^{[2n]}=(\sigma_x)^{[2n]}\bar S^{*}\) is preserved under products,
\[
(\bar S\bar S')(\sigma_x)^{[2n]}
=\bar S(\sigma_x)^{[2n]}\bar S'^{*}
=(\sigma_x)^{[2n]}(\bar S\bar S')^{*},
\]
and taking inverses and complex conjugates shows \(\bar S^{-1}\) obeys the same relation.
This parametrization makes it easier to inspect subsets where some of the blocks are zero.

Gaussian unitaries, $\symcayn$, are their subgroup obtained with the choice $\mu_{2} = \nu_{2} = 0$. The symplectic conditions then give $\mu_{1} = \alpha$ and $\nu_{1} = \beta$, where $\alpha$ and $\beta$ are a symplectic pair of \(n\)-dimensional blocks, and the parametrization above gives exactly the matrix $S\bcc$, with \(S \in \symcayn\), which we already showed is a representation of Gaussian unitaries.

We can single out the embedding of single-Kraus Gaussian operations by looking for those automorphisms of $\sspace$ that map all pure states to pure states. Since the adjacency matrix of a pure state is diagonal, $K\pcc$, we need to look for those fractional transformations that preserve block-diagonal form. Requiring this property on \cref{eq:chanparametrizationgeneral} implies that the elements indexed by \(2\) are zero. That is,
$\phi_{E} (K\pcc) = K'\pcc$ identifies the semigroup parametrized as
\begin{align} \label{eq:physicalgroup2}
	E & =
	\pqty{\gamma\pcc \oplus \omega\pcc}
	+
	\pqty{\delta\pcc \oplus \zeta\pcc} \sigma_{x}
	\,,
\end{align}
i.e., \(E = T\bcc\), where \(T \in \sympluscayn\), which is precisely the embedding of these single-Kraus Gaussian operations in \(\siegeldiskN\) that we described in \cref{eq:double_pure2}.

\subsection{Channel embedding for deterministic Gaussian channels}\label{sec:channel-embedding}

So far, we have sidestepped the issue of requiring that deterministic channels preserve the uncertainty principle. We did so by first defining an interesting set of (not necessarily physical) operations that preserve \(\gammaspace\) instead. We now complete the picture by taking a well-known explicit parametrization of all deterministic Gaussian channels and map it to the double disk. This will provide a disk parametrization of all possible deterministic Gaussian channels, as well as a relationship between the traditional picture of Gaussian channels and its double-disk counterpart.

The \((X,Y)\) parametrization we use here is the standard description of \emph{deterministic} (trace-preserving) Gaussian channels~\cite{serafini_quantum_2017}. The most general gaussian channel is characterized by its action on the real covariance matrix, $\Sigma$, through the map~\cite{serafini_quantum_2017}
\begin{align}\label{eq:gauschansreal}
	\Sigma' = X_{r} \Sigma X_{r}^{T} + Y_{r}
	\,,
\end{align}
where $X_{r}$ and $Y_{r}$ are real $2 n \times 2 n$ matrices satisfying
\begin{align}\label{eq:gauschanrealconditions}
	Y_{r} - \sigma_{y} \geq - X_{r} \sigma_{y} X_{r}^{T}
	\,.
\end{align}
We find the equivalent statement for complex covariance matrices by multiplying the above by $\Gamma$ and its adjoint, which gives
\begin{align}\label{eq:chaneq}
	\sigma' = X \sigma X^{\dag} + Y
	\,,
\end{align}
where $X$ and $Y$ are $2n \times 2n$ ABC real matrices. In addition, $Y$ can be chosen such that $Y = Y^{\dag}$.
The inequality ensuring that the channel does not violate the uncertainty principle is (using $\Gamma \sigma_{y} \Gamma^{\dag} = - \sigma_{z}$)
\begin{align}\label{eq:chanineq}
	X \sigma_{z} X^{\dag} + Y \ge \sigma_{z}
	\,.
\end{align}
Let us verify that \cref{eq:chaneq,eq:chanineq} define a fractional transformation on adjacency matrices representing states in the double disk $\amat \in \sspace$. The adjacency matrix after the channel in \cref{eq:gauschansreal} is obtained by applying the coordinate change \(\sigma\mapsto\amat(\sigma)\) to \cref{eq:chaneq},
	\begin{align}
		\amat'
		 & =
		\sigma_x
		\pqty{X \sigma X^{\dag} + Y - \half }
		\pqty{ X \sigma X^{\dag} + Y + \half }^{-1}
		\,.
	\end{align}
		Transporting \(\sigma\mapsto X\sigma X^{\dag}+Y\) to an update on \(\amat\) proceeds exactly as in the unitary special case \cref{thm:unitary-action-mixed}: substitute the inverse coordinate map \(\sigma=\sigma(\amat)\), use the ABC relations to eliminate starred quantities, and collect the result into a single \(4n\times4n\) block matrix acting by our DCBA fractional convention \cref{def:fractional-transformation}. Since the final expression involves \(X^{-T}\) and \(X^{-\dag}\), we assume throughout that \(X\) is invertible. The algebraic reduction is given in \cref{app:thm-channel-embedding}; the resulting embedded matrix can be written in \(2\times2\) block form as
	\begin{align}\label{eq:sympexp}
		\bar E
		 &=
		\half
		\begin{pmatrix}
			E_{11} & E_{12} \\
			E_{21} & E_{22}
		\end{pmatrix}
		\,,
		\\
		E_{11}
		 &\coloneqq
		X + X^{-\dag} + 2 Y X^{-\dag}
		\,,
		\nonumber\\
		E_{12}
		 &\coloneqq
		\pqty{X - X^{-\dag} - 2 Y X^{-\dag}} \sigma_{x}
		\,,
		\nonumber\\
		E_{21}
		 &\coloneqq
		\pqty{X^{*} - X^{-T} + 2 Y^{*} X^{-T}} \sigma_{x}
		\,,
		\nonumber\\
		E_{22}
		 &\coloneqq
		X^{*} + X^{-T} - 2 Y^{*} X^{-T}
		\,.
	\end{align}
		For comparison with the recursive/Fock-space formulation of Gaussian channels, Ref.~\cite{yao_recursive_2022} packages a channel \((X,Y)\) as a (Gaussian) Choi state. In their notation, Eq.~(B18) defines the Husimi covariance matrix \(\xi\) (the \(Q\)-covariance of the vacuum output), while Eq.~(B12) gives the corresponding Choi-state \(A\)-matrix \(A_{\Phi}=P_{2M}R\!\begin{psmallmatrix}1-\xi^{-1}&\xi^{-1}X\\ X^{T}\xi^{-1}&1-X^{T}\xi^{-1}X\end{psmallmatrix}\!R^{\dag}\). In our embedding, the same \(\xi\) combination is already encoded in the \(\bar E\) blocks: one may rewrite \(E_{11}=2\xi X^{-\dag}\) (equivalently \(\xi=\half E_{11}X^{\dag}\)). Thus \(A_{\Phi}\) is a point in the doubled disk (channel-as-state), whereas \(\bar E\) is an acting matrix whose fractional action gives the disk update \(\amat\mapsto\phi_{\bar E}(\amat)\) (channel-as-dynamics).
	The matrix \(\bar E\) is the double-disk representative of the channel: it encodes the full \((X,Y)\) data in a single \(4n\times4n\) matrix whose fractional action reproduces the channel on \(\sspace\).

	\begin{theorem}[Channel embedding theorem for deterministic Gaussian channels (disk fractional transform)]\label{thm:channel-embedding}
		Let \((X,Y)\) satisfy \cref{eq:chanineq} with \(X\) invertible, and define \(\bar E\) by \cref{eq:sympexp}. Then the induced action on double-disk representatives \(\amat\in\sspace\) is the fractional transformation \(\amat'=\phi_{\bar E}(\amat)\), and this map is well-defined on \(\sspace\) and preserves \(\sspace\).
		Moreover, composition of channels corresponds to multiplication of the corresponding \(\bar E\) matrices.
	\end{theorem}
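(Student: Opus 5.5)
The proof has three parts: the conjugation identity that defines \(\bar E\), well-definedness together with preservation of \(\sspace\), and the composition law.

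\textbf{Conjugation identity.} We write the channel update as a fractional map on covariance matrices and then transport it through the coordinate change \(\phi_{\Lambda}\). The affine map \(\sigma\mapsto X\sigma X^{\dag}+Y\) is not a fractional map in \(\sigma\) alone. Since \(X\) is invertible, however, it equals \((X\sigma+YX^{-\dag})(X^{-\dag})^{-1}\), which is \(\phi_{F}(\sigma)\) in the DCBA convention with
\[
F=\begin{pmatrix} X^{-\dag} & 0\\ YX^{-\dag} & X\end{pmatrix}.
\]
By \cref{lem:fractional-composition-law}, whenever the inverses exist we have \(\phi_{\Lambda}\circ\phi_F\circ\phi_{\Lambda}^{-1}=\phi_{\Lambda F\Lambda^{-1}}\). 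I would compute \(\Lambda F\Lambda^{-1}\) explicitly using \cref{eq:cov-to-disk-matrix}. The result will have starred blocks missing, and some blocks will carry \(\sigma_x\) on the wrong side. To fix this, I would use the ABC relations \(X\sigma_x=\sigma_x X^{*}\), \(Y\sigma_x=\sigma_x Y^{*}\) and \(\amat\sigma_x=\sigma_x\amat^{*}\) on \(\gammaspace\), as in the derivation preceding \cref{thm:unitary-action-mixed}. This lets us replace the factor \(\sigma_x\) in the numerator and denominator and rewrite the fraction as \(\phi_{\bar E}(\amat)\) with \(\bar E\) given by \cref{eq:sympexp}. The overall factor \(\half\) is irrelevant because \(\phi_{cE}=\phi_E\). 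The stacked notation of \cref{eq:stackednotation} is the cleanest bookkeeping device here, because right-multiplication by invertible matrices such as \(\sigma_x\) and conjugated blocks is a change of representative.

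\textbf{Well-definedness and preservation of \(\sspace\).} Take \(\amat\in\sspace\). By \cref{thm:mixed-state-characterization} and the inverse coordinate map, \(\sigma=\sigma(\amat)\) is a physical covariance matrix, with \(\sigma>0\) and \(\sigma-\half\sigma_z\ge0\). Then \(\sigma'=X\sigma X^{\dag}+Y\) satisfies
\[
\sigma'-\half\sigma_z=X(\sigma-\half\sigma_z)X^{\dag}+\half\bigl(X\sigma_zX^{\dag}+Y-\sigma_z\bigr)\ge0
\]
by \cref{eq:chanineq}, and it is ABC and Hermitian. Hence \(\sigma'\) is a covariance matrix. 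In particular \(\sigma'>0\), so \(\sigma'+\half\) is invertible. By \cref{lem:tau-positive-implies-disk} and \cref{thm:mixed-state-conditions-disk}, \(\phi_\Lambda(\sigma')\in\sspace\).

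The remaining point is that the denominator \(B\amat+A\) of \(\phi_{\bar E}\) is itself invertible, and not merely the composite. I would show this by tracking the stacked representative. The denominator equals a product of invertible factors: \((\sigma+\half)^{-1}\)-type factors from \(\phi_\Lambda^{-1}\), \(X^{-\dag}\), and \((\sigma'+\half)\). Each of these is invertible by positivity. Alternatively, one can prove \(\bar E\in\sympluscayN\) and then invoke \cref{thm:sympplus-preserves-disk}, but that route requires the inequality \(\bar E^{\dag}\sigma_z\bar E\ge\sigma_z\), which I would avoid.

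\textbf{Composition.} For two channels we have \(F_2F_1=\begin{pmatrix}(X_2X_1)^{-\dag}&0\\ Y'(X_2X_1)^{-\dag}&X_2X_1\end{pmatrix}\) with \(Y'=X_2Y_1X_2^{\dag}+Y_2\). These are exactly the composed channel parameters, so \(F\) is a semigroup homomorphism. Conjugation by \(\Lambda\) preserves products, so \(\bar E(X_2,Y_2)\bar E(X_1,Y_1)=\bar E(X_2X_1,Y')\), at least up to the starred-to-unstarred rewriting. I expect the main obstacle to be right here: the \(\sigma_x\)/conjugation manipulations that turn \(\Lambda F\Lambda^{-1}\) into \(\bar E\) are valid only on \(\gammaspace\), and they might not commute with multiplication. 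If they fail to, I would prove directly that \(\bar E=\half M\,\Lambda F\Lambda^{-1}N\), with fixed matrices \(M,N\) built from \((\sigma_x)\pcc\), and check whether \(NM\) acts trivially. Failing that, I would verify the product formula directly from the blocks in \cref{eq:sympexp} by multiplying \(2\times2\) block matrices.
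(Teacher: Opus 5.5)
You follow the paper's alternative proof: your $F$ is exactly the paper's $L_YB_X$, and $\bar E=\Lambda F\Lambda^{-1}$. Your stacked-representative argument for the denominator is more careful than the paper's. It gives $B\amat+A=(\sigma'+\tfrac12)X^{-\dag}(\sigma+\tfrac12)^{-1}$, since $1-\sigma_x\amat=(\sigma+\tfrac12)^{-1}$.

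\textbf{The gap: the uncertainty-principle identity is wrong.} Expanding your right-hand side gives $X\sigma X^{\dag}+\tfrac12Y-\tfrac12\sigma_z$, so you have dropped a term $\tfrac12 Y$. The correct decomposition is
\[
\sigma'-\tfrac12\sigma_z=X(\sigma-\tfrac12\sigma_z)X^{\dag}+\tfrac12 M+\tfrac12 Y,
\qquad
M\coloneqq X\sigma_zX^{\dag}+Y-\sigma_z\ge0 .
\]
Concluding positivity therefore needs $Y\ge0$ as well as \cref{eq:chanineq}. This is true but needs an argument. The relations $\sigma_xX=X^{*}\sigma_x$ (hence $X^{T}\sigma_x=\sigma_xX^{\dag}$), $\sigma_xY=Y^{*}\sigma_x$ and $\sigma_x\sigma_z\sigma_x=-\sigma_z$ give $\sigma_xM^{*}\sigma_x=Y-X\sigma_zX^{\dag}+\sigma_z\ge0$. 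Adding this to $M\ge0$ yields $2Y\ge0$.

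The mismatch arises because \cref{eq:chanineq} is written for the normalization in which the vacuum covariance is $1$, not $\tfrac12$. It is therefore a stronger hypothesis than the uncertainty principle for $\sigma'$ requires, which is why the conclusion survives. The paper's proof only asserts this step, so you need to supply the argument yourself.

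\textbf{The composition obstacle you anticipate does not exist.} Computing $\Lambda F\Lambda^{-1}$ gives \cref{eq:sympexp} exactly, including the prefactor $\tfrac12$. The lower blocks are rewritten using only $\sigma_xX=X^{*}\sigma_x$, $\sigma_xX^{-\dag}=X^{-T}\sigma_x$ and $\sigma_xY=Y^{*}\sigma_x$. These are identities of the channel data; the relation $\amat\sigma_x=\sigma_x\amat^{*}$ is never used. Products, adjoints and sums of ABC matrices are again ABC. Hence your identity $F_2F_1=F(X_2X_1,X_2Y_1X_2^{\dag}+Y_2)$ gives directly
\[
\bar E(X_2,Y_2)\,\bar E(X_1,Y_1)=\bar E(X_2X_1,X_2Y_1X_2^{\dag}+Y_2)
\]
as an exact matrix identity, and none of your fallback routes are needed. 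For the same reason, keep the $\tfrac12$ rather than discarding it as projective: without it the product law holds only up to a scalar factor.
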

	\begin{proof}
		The identification of the coordinate update \(\amat(\sigma')\), with \(\sigma'=X\sigma X^{\dag}+Y\), as the fractional transformation \(\phi_{\bar E}\) (in our DCBA convention \cref{def:fractional-transformation}) is a direct algebraic transport through the coordinate change \(\sigma\leftrightarrow \amat\), completely analogous to the unitary case \cref{thm:unitary-action-mixed}; we give the reduction in \cref{app:thm-channel-embedding}.

		To see that the map is well-defined on \(\sspace\) and preserves \(\sspace\), let \(\amat\in\sspace\) and let \(\sigma=\sigma(\amat)\) be its (physical) covariance matrix. Then \(\sigma'=X\sigma X^{\dag}+Y\) is again a physical covariance matrix by \cref{eq:chanineq}. In particular, \(\sigma'+\half>0\), so the coordinate map defining \(\amat(\sigma')\) is well-defined. Since \(\amat'=\amat(\sigma')=\phi_{\bar E}(\amat)\), this also implies that \(\phi_{\bar E}(\amat)\) is defined and \(\amat'\in\sspace\) (by \cref{thm:mixed-state-conditions-disk}).

		Finally, composition is inherited from composition of covariance updates and the coordinate transport, and equivalently from the general composition law \(\phi_{E_2}\circ\phi_{E_1}=\phi_{E_2E_1}\) whenever both sides are defined.
	\end{proof}

	\begin{proof}[Alternative proof (fractional-composition)]
				The key observation is that both (i) the covariance update \(\sigma\mapsto X\sigma X^{\dag}+Y\) and (ii) the coordinate map \(\sigma\mapsto \amat(\sigma)\) are themselves linear fractional transformations, so the disk-picture channel update is obtained by conjugation and the composition law \cref{eq:compositionlaw}.

				First note that the coordinate map \cref{eq:aofsigma} is a linear fractional transformation: in our DCBA convention,
				\[
					\amat=\phi_{\Lambda}(\sigma),
					\]
					where \(\Lambda\) (and \(\Lambda^{-1}\)) are the fixed \(4n\times4n\) matrices from \cref{eq:cov-to-disk-matrix}.

				Next, in our DCBA convention \cref{def:fractional-transformation}, the affine covariance update is the fractional transformation
				\begin{align}
					X\sigma X^{\dag}+Y
				&=
				\phi_{L_{Y}B_{X}}(\sigma),
				\\
				L_{Y}
				&\coloneqq
				\begin{pmatrix}
					1 & 0 \\
					Y & 1
				\end{pmatrix},
				\\
				B_{X}
				&\coloneqq
				\begin{pmatrix}
					X^{-\dag} & 0 \\
					0 & X
				\end{pmatrix},
			\end{align}
		as is verified by direct substitution into \(\phi_T(Z)=(DZ+C)(BZ+A)^{-1}\).

			Combining these two observations and using \cref{eq:compositionlaw} gives
			\begin{align}
				\amat'
				=
					\phi_{\Lambda}(\sigma')
					=
					\phi_{\Lambda}\circ\phi_{L_{Y}B_{X}}(\sigma)
					=
					\phi_{\Lambda L_{Y}B_{X}\Lambda^{-1}}(\amat).
				\end{align}
				Defining \(\bar E\coloneqq \Lambda L_{Y}B_{X}\Lambda^{-1}\), a short block multiplication yields \(\bar E\) in the form \cref{eq:sympexp} (using only the ABC relations \(\sigma_xX=X^{*}\sigma_x\) and \(\sigma_xY=Y^{*}\sigma_x\) to rewrite the lower blocks), so \(\amat'=\phi_{\bar E}(\amat)\) as claimed.

			In the unitary (Gaussian) special case \(Y=0\) and \(X=S\in\symcayn\), this reduces to \(\bar E=\Lambda B_{S}\Lambda^{-1}\). Using the symplectic identity \(S^{-\dag}=\sigma_z S\sigma_z\) in \cref{eq:sympexp} recovers exactly the previously-defined embedding \(S\mapsto S\bcc\) and hence \cref{thm:unitary-action-mixed}.
		\end{proof}

	For reference, we present the disk-picture channel update in compact form:
	\begin{align}\label{eq:channelfractransf}
		\amat' = \phi_{\bar E}(\amat)
		\,.
	\end{align}

		The factorization \(\sigma'=\phi_{L_YB_X}(\sigma)=\phi_{L_Y}\circ\phi_{B_X}(\sigma)\) from the alternative proof makes two simple special cases particularly transparent: \(Y=0\) (pure multiplicative update) and \(X=1\) (pure additive noise).

			\paragraph{Case \(Y=0\) (multiplicative update).}
			Here \(\sigma' = X \sigma X^{\dag}=\phi_{B_X}(\sigma)\), and \cref{eq:chanineq} becomes \(X \sigma_{z} X^{\dag} \ge \sigma_{z}\). The corresponding embedded matrix is \(\bar E=\bar B_X\coloneqq \Lambda B_X\Lambda^{-1}\), i.e.
		\begin{align} \label{eq:sympexp_X}
			\bar E_X & =
		\half
		\begin{pmatrix}
			\pqty{X + X^{- \dag}}
		 &
		\pqty{X - X^{- \dag}} \sigma_{x}
		\\
		\pqty{X^{*} - X^{- T}} \sigma_{x}
		 &
		\pqty{X^{*} + X^{- T}}
			\end{pmatrix}
		\end{align}
		(which is \cref{eq:sympexp} with \(Y=0\)).
	If in addition \(X\) is ABC-symplectic, \(X\coloneqq S\in\symcayn\), then the inequality saturates and \(\bar E_X\in\symcayN\). In this symplectic case one may use \(S^{-\dag} = \sigma_{z} S \sigma_{z}\) and write
	\begin{align}
		X \coloneqq S =
		\begin{pmatrix}
			\alpha    & \beta      \\
		\beta^{*} & \alpha^{*}
		\end{pmatrix}
		\,,
	\end{align}
			in which case \(\bar E_X=\Lambda B_S\Lambda^{-1}\) coincides with the previously-defined embedding \(S\mapsto S\bcc\), recovering \cref{thm:unitary-action-mixed} as the symplectic subcase of the channel embedding theorem.

			\paragraph{Case \(X=1\) (additive noise).}
				Here \(\sigma'=\sigma+Y=\phi_{L_Y}(\sigma)\), and the channel condition \cref{eq:chanineq} reduces to \(Y\ge 0\) (with \(Y=Y^{\dag}\)). In the disk picture the embedded matrix is simply
				\(\bar E=\bar L_Y\coloneqq \Lambda L_Y\Lambda^{-1}\), i.e.
			\begin{align}
				\bar L_Y
				=
				\begin{pmatrix}
				1+Y & -Y\sigma_x \\
				Y^{*}\sigma_x & 1-Y^{*}
			\end{pmatrix},
			\end{align}
			which is exactly \cref{eq:sympexp} with \(X=1\).

			Note that \(X\) need not be symplectic. Vacuum preservation also has a simple characterization in the disk picture. Since the vacuum corresponds to the origin \(\amat_{0}=0\), and \(\phi_{\bar E}(0)=CA^{-1}\) in the DCBA convention \cref{def:fractional-transformation}, a channel is vacuum-preserving if and only if the lower-left block of \(\bar E\) vanishes, i.e.\ \(\bar E\) is block upper-triangular. Substituting \(E_{21}=0\) into \cref{eq:sympexp} gives
	\begin{align}\label{eq:vacuum-invariance-xy}
		Y=\half\pqty{1-XX^{\dag}}.
	\end{align}
	In the unitary special case \(Y=0\), this reduces to \(XX^{\dag}=1\), i.e.\ the passive (\(U(n)\)) subgroup, for which \(\bar E\) is block diagonal.

\subsection{Graphical calculus}
\label{sec:graphical-calculus-sketch}
The graphical calculus for Gaussian pure states~\cite{menicucci_graphical_2011} can be summarized by a simple principle: a symmetric-matrix state space, together with explicit transformation rules, is already a graphical calculus. Indeed, both the Siegel disk and upper half-plane represent pure Gaussian states by symmetric matrices that can be read as weighted adjacency matrices, and Gaussian dynamics acts on these representatives by linear fractional transformations; this turns state updates into algebraic graph rewrite rules (with composition given by matrix multiplication; \cref{lem:fractional-composition-law}).

From this viewpoint, the main conceptual work needed to extend the graphical calculus beyond pure states is to show that mixed states admit a compatible symmetric-matrix representative \(\amat\) in an enlarged domain and that physical Gaussian dynamics still acts by closed-form fractional updates on those representatives. This is precisely what we establish here: mixed Gaussian states are identified as the physical subset \(\sspace\subset\siegeldiskN\), and deterministic Gaussian channels act on \(\sspace\) by \(\amat\mapsto\phi_{\bar E}(\amat)\) (Theorem~\ref{thm:channel-embedding}), extending the unitary rule \(K\mapsto\phi_S(K)\) from the pure-state disk~\cite{gabay_passive_2016}.

Finally, since the Cayley transform provides a biholomorphic isomorphism between disk and half-plane pictures (and extends directly to the doubled setting), the same mixed-state graphical calculus can be transported to the double upper half-plane as well, yielding a half-plane formulation that extends the pure-state graphical calculus of Ref.~\cite{menicucci_graphical_2011}. Developing a full diagrammatic language, simplification moves, and worked examples for mixed states and channels is substantial work, but the results above remove the main conceptual obstacle: the requisite state space and closed update rules are already in place.

\section{Conclusion and outlook}\label{sec:conclusion-outlook}
In conclusion, we have shown how to extend the description of Gaussian dynamics for pure states to include mixed states and deterministic Gaussian channels. The procedure consists of doubling the disk by embedding the pair \(\siegeldiskn\), \(\sympluscayn\) into \(\siegeldiskN\), \(\sympluscayN\). Since \(\siegeldiskN\) parametrizes Gaussian kernels (oscillator-semigroup elements) and contains the state space \(\sspace\) as a proper subset, it was necessary to identify the states within \(\siegeldiskN\). At the same time, much like \(\siegeldiskN\) is too large a space for Gaussian states, \(\sympluscayN\) is too large a semigroup for deterministic channels. We identified a simple subset of \(\sympluscayN\), corresponding to the normalized oscillator semigroup in the disk (matrix semigroup \(\sympluscayn\)), by parametrizing it directly in the disk. Its action on the state space \(\sspace\) gives an embedding of \(\sympluscayn\) into \(\sympluscayN\). We have also identified how Gaussian unitaries embed in the disk picture, \(\symcayn \subset \sympluscayn\). The entire set of deterministic channels is harder to identify within the disk framework, and in doing so we resorted to the traditional description of them in terms of affine transformations on the covariance matrix. This approach has the benefit that, given the \(X\)-\(Y\) parametrization of a deterministic channel, we can immediately write the corresponding fractional transformation. Although this characterization is complete, it would be useful to describe these as a more abstract, geometric subset of \(\sympluscayN\). We highlight this as an open problem beyond the scope of the present work, but note that a way to proceed in order to resolve the question would entail working exclusively in the larger disk, and asking that the Fock-Bargmann kernel of operations in \(\sympluscayN\) are deterministic channels. This approach is more cumbersome and does not say anything about the relationship between the disk and the covariance matrix description of Gaussian dynamics. 

Lastly, the fractional-update rules on disk representatives suggest an immediate graphical-calculus interpretation: read the symmetric matrices \(K\) (pure) and \(\amat\) (mixed) as weighted adjacency matrices and read the fractional maps as rewrite rules; see \cref{sec:graphical-calculus-sketch}. Developing a full diagrammatic calculus for mixed states and channels is an interesting direction for future work.

\acknowledgments
We thank Filippo Miatto and Rafael N.\ Alexander for helpful discussions.
The Commonwealth of Australia (represented by the Advanced Strategic
Capabilities Accelerator) supports this research through a Defence Science
Partnerships agreement. We acknowledge support from the Australian Research
Council Centre of Excellence for Quantum Computation and Communication
Technology (Project No. CE170100012). N.C.M.\ was supported by an ARC Future
Fellowship (Project No. FT230100571).

\bibliography{references}
\clearpage

\appendix
\onecolumngrid
\section{Auxiliary proofs and derivations}\label{sec:appendix}

\subsection{Proof of \cref{thm:sympplus-preserves-disk}}\label{app:thm-sympplus-preserves-disk}
We prove that the Siegel-disk preserving semigroup \(\sympluscayn\) preserves the Siegel disk \(\siegeldiskn\) under linear fractional transformations.

\begin{proof}
	The proof combines two stacked-notation invariants with two different involutions: symmetry uses the symplectic condition \(T^{T}\sigma_{y}T=\sigma_{y}\) (transpose and \(\sigma_y^{T}=-\sigma_y\)), while disk membership uses the semigroup inequality \(T^{\dag}\sigma_{z}T\ge\sigma_{z}\) (adjoint and \(\sigma_z=\sigma_z^{\dag}=\mathrm{diag}(1,-1)\)); see \cref{def:lifted-paulis}.

	Fix \(T\in\sympluscayn\) and write it in \(n\times n\) blocks as \(T=\begin{psmallmatrix}A&B\\ C&D\end{psmallmatrix}\). Let \(K\in\siegeldiskn\) and define the stacked vector
	\(
		X\coloneqq\begin{bmatrix}1\\ K\end{bmatrix}
		\),
		so that \(TX=\begin{bmatrix}A+BK\\ C+DK\end{bmatrix}\).

	\paragraph{Invertibility of the denominator.}
	For \(K\in\siegeldiskn\), the disk inequality \(K^{\dag}K<1\) is equivalent to
	\begin{align}\label{eq:diskpos-stacked}
		X^{\dag}\sigma_z X = 1-K^{\dag}K>0
		\,.
	\end{align}
	Since \(T\in\sympluscayn\), we have \(T^{\dag}\sigma_z T\ge\sigma_z\), and therefore
	\begin{align}\label{eq:disk-congruence-ineq}
		(TX)^{\dag}\sigma_z(TX)
		=
		X^{\dag}T^{\dag}\sigma_z T X
		\ge
		X^{\dag}\sigma_z X
		>
		0
		\,.
	\end{align}
	Suppose, for contradiction, that \(BK+A\) is not invertible. Then there exists \(v\neq 0\) such that \((BK+A)v=0\). Setting \(w\coloneqq Xv\neq 0\), we have
	\(
		Tw=(TX)v=\begin{bmatrix}0\\ (DK+C)v\end{bmatrix}
	\),
	so
	\(
	w^{\dag}T^{\dag}\sigma_z T w=(Tw)^{\dag}\sigma_z(Tw)=-\|(DK+C)v\|^{2}\le 0,
	\),
		where we used that \(\sigma_z=\begin{psmallmatrix}1&0\\ 0&-1\end{psmallmatrix}\) (with \(n\times n\) blocks) so that \(\begin{bmatrix}0\\ y\end{bmatrix}^{\dag}\sigma_z\begin{bmatrix}0\\ y\end{bmatrix}=-\|y\|^{2}\).
	contradicting \cref{eq:disk-congruence-ineq} together with \cref{eq:diskpos-stacked}. Hence \(BK+A\) is invertible and \(K'=\phi_T(K)\) is well-defined.

	\paragraph{Preservation of symmetry.}
	Symmetry of \(K\) is equivalent to the isotropy condition \cref{eq:siegelreq1}, i.e. \(X^{T}\sigma_y X=0\). Since \(T\in\symplecticcomplexn\), we have \(T^{T}\sigma_y T=\sigma_y\), hence
	\(
	(TX)^{T}\sigma_y(TX)=X^{T}T^{T}\sigma_y T X=X^{T}\sigma_y X=0
	\).
	With \(BK+A\) invertible, define \(K'\coloneqq\phi_T(K)=(DK+C)(BK+A)^{-1}\) and note the stacked identity
	\begin{align}\label{eq:stacked-action-disk}
			T\begin{bmatrix}1\\ K\end{bmatrix}
			=
			\begin{bmatrix}1\\ K'\end{bmatrix}(BK+A)
			\,,
		\end{align}
		which is obtained by multiplying out the block matrix. This implies that \(X'=\begin{bmatrix}1\\ K'\end{bmatrix}\) is also isotropic:
	\(
	0=(TX)^{T}\sigma_y(TX)=(BK+A)^{T}X'^{T}\sigma_y X'(BK+A)
	\),
	so \(X'^{T}\sigma_y X'=0\) and therefore \(K'^{T}=K'\).

	\paragraph{Preservation of \(K^{\dag}K<1\).}
		Define \(K'=\phi_T(K)\) and \(X'=\begin{bmatrix}1\\ K'\end{bmatrix}\). From \cref{eq:stacked-action-disk} we have \(TX=X'(BK+A)\). Taking the Hermitian quadratic form with \(\sigma_z\) gives
	\begin{align}\label{eq:diskposprime-stacked}
		(TX)^{\dag}\sigma_z(TX)
		=
		(BK+A)^{\dag}\,X'^{\dag}\sigma_z X'\,(BK+A),
	\end{align}
	and therefore
	\begin{align}\label{eq:diskposprime-congruence}
		X'^{\dag}\sigma_z X'
		=
		(BK+A)^{-\dag}X^{\dag}T^{\dag}\sigma_z T X(BK+A)^{-1}
		\ge
		(BK+A)^{-\dag}X^{\dag}\sigma_z X(BK+A)^{-1}
		>
		0,
	\end{align}
	where the inequality uses \(T^{\dag}\sigma_z T\ge\sigma_z\) and the strict positivity uses \cref{eq:diskpos-stacked}. By \cref{eq:diskpos-stacked} applied to \(K'\), the condition \(X'^{\dag}\sigma_z X'>0\) is exactly \(1-K'^{\dag}K'>0\), i.e. \(K'^{\dag}K'<1\).

	Therefore \(K'\in\siegeldiskn\). Since \(K\in\siegeldiskn\) was arbitrary, this shows \(\phi_T(\siegeldiskn)\subseteq\siegeldiskn\) for all \(T\in\sympluscayn\).
\end{proof}

\subsection{Proof of \cref{thm:symcayn-preserves-disk-transitive}}\label{app:thm-symcayn-preserves-disk-transitive}
We give a self-contained proof that the ABC symplectic group \(\symcayn\) preserves the Siegel disk \(\siegeldiskn\) and acts transitively on it. The main stacked-notation identities used below are sketched in the main text around \cref{eq:siegelreqproof1,eq:siegelreqproof2}.

\begin{proof}
\par\noindent\emph{Preservation of the disk.}
If \(S\in\symcayn\), then by \cref{eq:defsympcomp} we have \(S^{\dag}\sigma_{z}S=\sigma_{z}\), hence in particular \(S\in\sympluscayn\) (the semigroup inequality holds with equality). Therefore disk invariance is an immediate corollary of \cref{thm:sympplus-preserves-disk}.

\par\noindent\emph{Transitivity (explicit vacuum-to-\(K\) map).}
Fix \(K\in\siegeldiskn\). Write \(\siegeldiskn=\{K\in M_n(\complex)\,|\,K^{T}=K,\ K^{\dag}K<1\}\), so \(I-K^{\dag}K\) is positive definite. Define
\begin{align}
	\alpha \coloneqq (I-K^{\dag}K)^{-1/2},
	\qquad
	\beta^{*} \coloneqq K\alpha,
	\qquad
	\beta \coloneqq (\beta^{*})^{*}=K^{*}\alpha^{*},
\end{align}
and set
\begin{align}
	S_{K}\coloneqq
	\begin{pmatrix}
		\alpha & \beta \\
		\beta^{*} & \alpha^{*}
	\end{pmatrix}
	\,.
\end{align}
We claim \(S_{K}\in\symcayn\) and \(\phi_{S_{K}}(0)=K\).

First, \(\alpha\) is invertible and \(\beta^{*}\alpha^{-1}=K\), so \(\phi_{S_K}(0)=\beta^{*}\alpha^{-1}=K\) (using \cref{eq:linearfractional} at \(K=0\)).

Second, \(S_{K}\in\symcayn\): it is in ABC block form by construction, and the symplectic relations \cref{eq:alphabetaone,eq:alphabetatwo} hold. Indeed,
\begin{align}
	\alpha^{\dag}\alpha-\beta^{T}\beta^{*}
	&=
	\alpha^{\dag}\alpha-(K^{*}\alpha^{*})^{T}(K\alpha)
	=
	\alpha^{\dag}\alpha-\alpha^{\dag}(K^{*})^{T}K\alpha
	\\
	&=
	\alpha^{\dag}(I-K^{\dag}K)\alpha
	=
	I,
\end{align}
where we used \(K^{T}=K\) so \((K^{*})^{T}=K^{\dag}\), and \(\alpha=(I-K^{\dag}K)^{-1/2}\). Similarly,
\begin{align}
	\alpha^{T}\beta^{*}
	=
	\alpha^{T}K\alpha
	=
	(K^{*}\alpha^{*})^{\dag}\alpha
	=
	\beta^{\dag}\alpha,
\end{align}
where \(K^{\dag}=K^{*T}\) and \(K^{T}=K\) imply \((K^{*}\alpha^{*})^{\dag}=\alpha^{T}K\). Thus \(S_{K}\in\symcayn\) and reaches \(K\) from the vacuum.

\par\noindent\emph{Transitivity (arbitrary \(K_{1}\to K_{2}\)).}
Given \(K_{1},K_{2}\in\siegeldiskn\), choose \(S_{1},S_{2}\in\symcayn\) such that \(\phi_{S_{j}}(0)=K_{j}\) as above. Then \(S\coloneqq S_{2}S_{1}^{-1}\in\symcayn\) and, using the composition law \cref{eq:compositionlaw},
\begin{align}
	\phi_{S}(K_{1})
	=
	\phi_{S_{2}}\!\circ\phi_{S_{1}^{-1}}(K_{1})
	=
	\phi_{S_{2}}(0)
	=
	K_{2},
\end{align}
so the action is transitive.
\end{proof}
\subsection{Proof of \cref{thm:mixed-state-characterization}}\label{app:thm-mixed-state-characterization}

\begin{proof}
	\noindent\emph{Step 1 (block reduction of the UP).}
	Let \(\amat\in\gammaspace\) and write it in blocks as in \cref{eq:ablocks}. Set \(X\coloneqq\sigma_{x}\amat\). Since \(\amat\in\gammaspace\subset\siegeldiskN\), \(X\) is Hermitian with \(\operatorname{spec}(X)\subset(-1,1)\), hence \(B\coloneqq(1-X)^{1/2}\) exists, is Hermitian, and is invertible.

	The matrix on the left-hand side of \cref{eq:amatupshort} is Hermitian, so \cref{eq:amatupshort} is equivalent to
	\begin{align}\label{eq:upcong}
		B\pqty{(1+X)(1-X)^{-1}-\sigma_{z}}B \ge 0
		\,,
	\end{align}
	by congruence with the invertible matrix \(B\). Using that \(B\) commutes with \(X\) (functional calculus for Hermitian matrices) and that \(B^2=1-X\), we get
	\begin{align}
		B\pqty{(1+X)(1-X)^{-1}-\sigma_{z}}B
		&=
		(1+X)-B\sigma_{z}B
		\nonumber\\
		&=
		(1+X)-B(2P_{+}-1)B
		\nonumber\\
		&=
		2\pqty{1-BP_{+}B}
		\,,
	\end{align}
	where \(P_{+}\coloneqq(1+\sigma_{z})/2\) is the projector onto the upper block in the \(\sigma_{z}\) decomposition.
	Thus \cref{eq:upcong} is equivalent to \(1-BP_{+}B\ge0\).

	Set \(T\coloneqq P_{+}B\). Then \(BP_{+}B=T^{\dag}T\) and \(P_{+}B^{2}P_{+}=TT^{\dag}\), so \(BP_{+}B\) and \(P_{+}B^{2}P_{+}\) have the same nonzero eigenvalues. Since \(B^{2}=1-X\), we have
	\[
		P_{+}B^{2}P_{+}
		=
		P_{+}(1-X)P_{+}
		\,.
	\]
	In blocks, \(X=\sigma_{x}\amat=\begin{psmallmatrix}W^{*}&K^{*}\\ K&W\end{psmallmatrix}\), so \(P_{+}(1-X)P_{+}=1-W^{*}\).
	Therefore the eigenvalues of \(BP_{+}B\) are the eigenvalues of \(1-W^{*}\) together with zeros, and the eigenvalues of \(1-BP_{+}B\) are the eigenvalues of \(W^{*}\) together with ones.

	Finally, since \(X\) is Hermitian, its lower-right block \(W\) is Hermitian, so \(W^{*}\) and \(W\) have the same (real) spectrum.
	Hence \(1-BP_{+}B\ge0\) if and only if \(W\ge0\), which is equivalent to \cref{eq:amatupshort}.

	\par\noindent\emph{Step 2 (state characterization).}
	If \(\sigma\) is a physical covariance matrix, then \(\amat\in\sspace\) by \cref{thm:mixed-state-conditions-disk}. In particular, \(\amat\in\gammaspace\) and \cref{eq:amatupshort} holds, hence \(W\ge0\) by Step 1.

	Conversely, let \(\amat\in\gammaspace\) with \(W\ge0\). By Step 1, \(\amat\) satisfies \cref{eq:amatupshort}. Define
	\[
		\sigma \coloneqq \half(1+\sigma_{x}\amat)(1-\sigma_{x}\amat)^{-1}
		\,.
	\]
	Since \(\amat\in\siegeldiskN\) and \(\sigma_x\amat\) is Hermitian, we have \(1-\sigma_x\amat>0\), so \(\sigma\) is well-defined, Hermitian, and positive definite. Moreover, \cref{eq:amatupshort} is exactly \(\sigma-\half\sigma_{z}\ge0\), i.e. the UP. Thus \(\sigma\) satisfies the defining algebraic conditions of a (complex) Gaussian covariance matrix, and \(\amat\) represents a Gaussian state.
\end{proof}

\subsection{Proof of \cref{thm:unitary-action-mixed}}\label{app:thm-unitary-action-mixed}

We prove that the Gaussian unitary update \(\sigma'=S\sigma S^{\dag}\), with \(S\in\symcayn\), becomes the fractional transformation \(\amat'=\phi_{S\bcc}(\amat)\) on the adjacency matrix, and that this action preserves \(\gammaspace\) and \(\sspace\).

Throughout this proof, \(\sigma_{x}\) and \(\sigma_{z}\) denote the \(2n\times 2n\) lifted Pauli matrices unless stated otherwise; in the \(4n\times 4n\) context (for \(S\bcc\)) they denote \(\sigma_{x}\otimes I_{2n}\) and \(\sigma_{z}\otimes I_{2n}\), respectively. The notation follows the convention established in the main text.

\begin{proof}
\noindent\emph{Symplecticity identities.}
Since \(S\in\symcayn\), \cref{eq:defsympcomp} gives \(S\sigma_{z}S^{\dag}=\sigma_{z}\), hence \(S^{-\dag}=\sigma_{z}S\sigma_{z}\). Additionally, \cref{eq:defblockym} gives \(S\sigma_{x}=\sigma_{x}S^{*}\).

\par\noindent\emph{Coordinate transport.}
The adjacency matrix corresponding to \(\sigma\) is \(\amat=\sigma_{x}(\sigma-\tfrac12)(\sigma+\tfrac12)^{-1}\), with inverse
\begin{align}
	\sigma
	=
	\tfrac12
	(\sigma_{x}+\amat^{*})
	(\sigma_{x}-\amat^{*})^{-1}
	\,,
\end{align}
where we used the Hermiticity form \(\sigma=\sigma^{\dag}\) and the ABC property \(\amat\sigma_{x}=\sigma_{x}\amat^{*}\) (valid because \(\amat\in\gammaspace\)).

The invertibility of \((\sigma_{x}-\amat^{*})\) is guaranteed by \(\amat\in\siegeldiskN\): the disk condition \(\amat^{*}\amat<1\) implies \(\|\amat\|<1\) (operator norm), so \(\sigma_{x}-\amat^{*}=\sigma_{x}(1-\sigma_{x}\amat^{*})\) is invertible because \(\|\sigma_{x}\amat^{*}\|=\|\amat\|<1\).

\par\noindent\emph{Derivation of the fractional-transform formula.}
The updated adjacency matrix is
\begin{align}
	\amat'
	=
	\sigma_{x}
	(S\sigma S^{\dag}-\tfrac12)
	(S\sigma S^{\dag}+\tfrac12)^{-1}
	\,.
\end{align}
Using \(S^{-\dag}=\sigma_{z}S\sigma_{z}\), this becomes \cref{eq:atransf1}. Substituting \(\sigma=\tfrac12(\sigma_{x}+\amat^{*})(\sigma_{x}-\amat^{*})^{-1}\) into \cref{eq:atransf1} and simplifying:
\begin{align}
	\nonumber
	S\sigma -\tfrac12\sigma_{z}S\sigma_{z}
	&=
	\tfrac12\bigl[
		(S+\sigma_{z}S\sigma_{z})\amat^{*}
		+
		(S-\sigma_{z}S\sigma_{z})\sigma_{x}
	\bigr]
	(\sigma_{x}-\amat^{*})^{-1}
	\,,
\end{align}
and similarly for the denominator, yielding
\begin{align}
	\amat'
	&=
	\sigma_{x}
	\bigl[
		(S+\sigma_{z}S\sigma_{z})\amat^{*}
		+
		(S-\sigma_{z}S\sigma_{z})\sigma_{x}
	\bigr]
	\nonumber \\ &\quad \times
	\bigl[
		(S-\sigma_{z}S\sigma_{z})\amat^{*}
		+
		(S+\sigma_{z}S\sigma_{z})\sigma_{x}
	\bigr]^{-1}
	\,.
\end{align}
Now apply \(S\sigma_{x}=\sigma_{x}S^{*}\) and \(\amat\sigma_{x}=\sigma_{x}\amat^{*}\) (justified by \(S\in\symcayn\) and \(\amat\in\gammaspace\), respectively) to move all starred quantities to unstarred ones. Concretely,
\(\sigma_{x}\cdot S\cdot\amat^{*}=\sigma_{x}S\sigma_{x}\cdot\sigma_{x}\amat^{*}=S^{*}\cdot\amat\sigma_{x}\);
in the full numerator/denominator expression this rightmost \(\sigma_{x}\) is cancelled by the corresponding \(\sigma_{x}\) arising from the same manipulation on the denominator side (using \(\sigma_x^2=1\)). After commuting \(\sigma_{x}\) through, we obtain
\begin{align}
	\amat'
	&=
	\bigl[
		(S^{*}+\sigma_{z}S^{*}\sigma_{z})\amat
		+
		(S^{*}-\sigma_{z}S^{*}\sigma_{z})\sigma_{x}
	\bigr]
	\nonumber \\ &\quad \times
	\bigl[
		(S-\sigma_{z}S\sigma_{z})\sigma_{x}\amat
		+
		(S+\sigma_{z}S\sigma_{z})
	\bigr]^{-1}
	\,.
\end{align}
To identify this as \(\phi_{S\bcc}(\amat)\), write \(S=\alpha\pcc+\beta\pcc\sigma_{x}\) in \(n\times n\) blocks, and observe that \(\sigma_{z}\) commutes with the block-diagonal part and anticommutes with the off-diagonal part:
\begin{align}
	S+\sigma_{z}S\sigma_{z}=2\alpha\pcc
	\,,\qquad
	S-\sigma_{z}S\sigma_{z}=2\beta\pcc\sigma_{x}
	\,.
\end{align}
(Here we used \(\sigma_{z}(\alpha\pcc)\sigma_{z}=\alpha\pcc\) and \(\sigma_{z}(\beta\pcc\sigma_{x})\sigma_{z}=-\beta\pcc\sigma_{x}\), both of which follow from \(\sigma_{z}\sigma_{x}=-\sigma_{x}\sigma_{z}\).)
Similarly, \(S^{*}+\sigma_{z}S^{*}\sigma_{z}=2\alpha^{*}\pcc\) and \(S^{*}-\sigma_{z}S^{*}\sigma_{z}=2\beta^{*}\pcc\sigma_{x}\). Substituting and canceling factors of \(2\):
\begin{align}
		\amat'
		=
		((\alpha^{*}\pcc)\pcc\,\amat + (\beta^{*}\pcc)\pcc\,\sigma_{x})
		((\beta\pcc)\pcc\,\sigma_{x}\amat + (\alpha\pcc)\pcc)^{-1}
		=
		\phi_{S\bcc}(\amat)
		\,,
\end{align}
where the last equality uses \cref{eq:rep1} for the block structure of \(S\bcc\). This is \cref{eq:fractionalonmixed}.

\par\noindent\emph{Well-definedness (invertibility of the denominator).}
We must show that \((\beta\pcc)\pcc\,\sigma_{x}\amat+(\alpha\pcc)\pcc\) is invertible whenever \(\amat\in\siegeldiskN\) and \(S\in\symcayn\).

Since \(S\bcc\in\symcayN\) (the symplectic relations \cref{eq:alphabetaone,eq:alphabetatwo} lift block-by-block under \(\bullet\pcc\)), this follows from the standard invertibility theorem for fractional transformations on Siegel disks: if \(T\in\symcayn\) and \(K\in\siegeldiskn\), then the denominator of \(\phi_{T}(K)\) is invertible. Applied with \(n\mapsto 2n\), \(T=S\bcc\in\symcayN\), and \(K=\amat\in\siegeldiskN\), we obtain the result.

Alternatively, one can verify this directly: \cref{eq:alphabetaone} gives \(\alpha^{\dag}\alpha=1+\beta^{T}\beta^{*}\ge 1\), so \(\alpha\) is invertible with \(\|\alpha^{-1}\beta\|\le\|\alpha^{-1}\|\|\beta\|<1\). The same bound holds block-by-block after lifting, so \(\|(\alpha\pcc)\pcc^{-1}(\beta\pcc)\pcc\|<1\). Since \(\|\amat\|<1\), the denominator \((\alpha\pcc)\pcc(1+(\alpha\pcc)\pcc^{-1}(\beta\pcc)\pcc\sigma_{x}\amat)\) is invertible by a Neumann-series argument.

\par\noindent\emph{Preservation of \(\gammaspace\).}
We need \(\amat'\sigma_{x}=\sigma_{x}\amat'^{*}\), i.e., \(\amat'\) is ABC. This follows by transporting through covariance space. If \(\sigma\) is ABC (\(\sigma_{x}\sigma\sigma_{x}=\sigma^{*}\)), then using \(S\sigma_{x}=\sigma_{x}S^{*}\):
\begin{align}
	\sigma_{x}\sigma'\sigma_{x}
	=\sigma_{x}S\sigma S^{\dag}\sigma_{x}
	=S^{*}\sigma_{x}\sigma\sigma_{x}S^{T}
	=S^{*}\sigma^{*}S^{T}
	=(S\sigma S^{\dag})^{*}
	=\sigma'^{*}
	\,,
\end{align}
so \(\sigma'\) is ABC. Since the coordinate change \(\sigma\mapsto\amat(\sigma)\) preserves the ABC property (as shown in the derivation following \cref{eq:aofsigma}), \(\amat'=\amat(\sigma')\in\gammaspace\).

\par\noindent\emph{Preservation of \(\sspace\) (the uncertainty principle).}
We need \(\amat'\in\sspace\), i.e., the block \(W'\ge 0\) in \(\amat'=\begin{psmallmatrix}K'&W'\\ W'^{*}&K'^{*}\end{psmallmatrix}\).

This follows by transporting the uncertainty principle through the congruence. If \(\amat\in\sspace\), then \(\sigma\) satisfies \(\sigma-\tfrac12\sigma_{z}\ge 0\). Since \(S\sigma_{z}S^{\dag}=\sigma_{z}\) (\cref{eq:defsympcomp}):
\begin{align}
	\sigma'-\tfrac12\sigma_{z}
	=
	S\sigma S^{\dag}-\tfrac12 S\sigma_{z}S^{\dag}
	=
	S(\sigma-\tfrac12\sigma_{z})S^{\dag}
	\ge 0
	\,,
\end{align}
where positivity is preserved because \(X\mapsto SXS^{\dag}\) is a congruence (\(S\) is invertible). Therefore \(\sigma'\) satisfies the uncertainty principle, so \(\amat'\in\sspace\) by \cref{thm:mixed-state-conditions-disk}.

\par\noindent\emph{Projective ambiguity.}
Since the fractional transformation \(\phi_{S\bcc}\) depends only on the ratio of the blocks, \(\phi_{S\bcc}=\phi_{(-S)\bcc}\). Thus the induced action \(S\mapsto \phi_{S\bcc}\) is well-defined on the quotient \(\symcayn/\{\pm 1\}\). Independently, the Hilbert-space realization is only defined projectively (global phases/signs), and this projective ambiguity is invisible at the level of the fractional action on \(\amat\).
\end{proof}

\subsection{Proof of \cref{thm:channel-embedding}}\label{app:thm-channel-embedding}
We derive the explicit disk-picture channel update \(\amat'=\phi_{\bar E}(\amat)\) and the embedded matrix \(\bar E\) in \cref{eq:sympexp} by transporting the affine covariance update \(\sigma'=X\sigma X^{\dag}+Y\) through the coordinate change \(\sigma\leftrightarrow \amat\). This is algebraically analogous to the unitary case \cref{thm:unitary-action-mixed}, except that the channel has an affine term \(Y\).

\begin{proof}
\noindent\emph{Start from the coordinate update.}
By definition of the disk representative \(\amat(\sigma)\), the post-channel adjacency matrix is
\begin{align}
	\amat'
	&=
	\sigma_x
	\pqty{X \sigma X^{\dag} + Y - \half }
	\pqty{ X \sigma X^{\dag} + Y + \half }^{-1}
	\,.
\end{align}
Since \cref{eq:sympexp} involves \(X^{-T}\) and \(X^{-\dag}\), we assume throughout that \(X\) is invertible.

\par\noindent\emph{Substitute the inverse coordinate map.}
For \(\amat\in\gammaspace\) (so \(\amat\sigma_x=\sigma_x\amat^{*}\)), it is convenient to use the inverse of the coordinate map in the form
\begin{align}
	\label{eq:sigmainvAstar}
	\sigma
	=
	\tfrac12
	(\sigma_x+\amat^{*})
	(\sigma_x-\amat^{*})^{-1}
	\,.
\end{align}
Substituting \cref{eq:sigmainvAstar} into \(\sigma'=X\sigma X^{\dag}+Y\) and right-multiplying numerator and denominator by the invertible matrix \(X^{-\dag}(\sigma_x-\amat^{*})\) clears \((\sigma_x-\amat^{*})^{-1}\) and yields
\begin{align}
	\label{eq:channel-cayley-cleared}
	\amat'
	=
	\sigma_x M_{-}\,M_{+}^{-1},
	\qquad
	M_{\pm}\coloneqq X(\sigma_x+\amat^{*})+(2Y\pm 1)X^{-\dag}(\sigma_x-\amat^{*})
	\,.
\end{align}

\par\noindent\emph{Expand and factor the numerator.}
Using the ABC relations \(\sigma_x X=X^{*}\sigma_x\), \(\sigma_x Y=Y^{*}\sigma_x\), \(\sigma_x\amat^{*}=\amat\sigma_x\), and \(\sigma_x X^{-\dag}=X^{-T}\sigma_x\), we find
\begin{align}
	\sigma_x M_{-}
	&=
	X^{*}(1+\amat\sigma_x)+(2Y^{*}-1)X^{-T}(1-\amat\sigma_x)
	\nonumber\\
	&=
	\Bqty{X^{*}-X^{-T}+2Y^{*}X^{-T}}
	+
	\Bqty{X^{*}+X^{-T}-2Y^{*}X^{-T}}\,\amat\sigma_x
	\,,
\end{align}
and hence, right-factoring \(\sigma_x\) (so the \(\sigma_x\) is moved to the right rather than dropped),
\begin{align}
	\label{eq:channel-num-factor}
	\sigma_x M_{-}
	=
	\Bqty{
		\Bqty{X^{*}+X^{-T}-2Y^{*}X^{-T}}\,\amat
		+
		\Bqty{X^{*}-X^{-T}+2Y^{*}X^{-T}}\,\sigma_x
	}\sigma_x
	\,.
\end{align}

\par\noindent\emph{Expand and factor the denominator.}
Similarly, expanding \(M_{+}\) and using \(\amat^{*}=\sigma_x\amat\sigma_x\) gives
\begin{align}
	\label{eq:channel-den-factor}
	M_{+}\sigma_x
	&=
	\Bqty{X-X^{-\dag}-2Y X^{-\dag}}\,\sigma_x\amat
	+
	\Bqty{X+X^{-\dag}+2Y X^{-\dag}}
	\,.
\end{align}

\par\noindent\emph{Identify the fractional transformation.}
Combining \cref{eq:channel-num-factor} with \((M_{+}\sigma_x)^{-1}=\sigma_x M_{+}^{-1}\), we obtain
\begin{align}
	\amat'
	&=
	\pqty{
		\begin{aligned}
			&\pqty{X^{*} + X^{- T} - 2 Y^{*} X^{- T}} \amat
			\\
			&\quad+
			\pqty{X^{*} - X^{- T} + 2 Y^{*} X^{- T}} \sigma_{x}
		\end{aligned}
	}
	\pqty{
		\begin{aligned}
			&\pqty{X - X^{- \dag} - 2 Y X^{- \dag}} \sigma_{x} \amat
			\\
			&\quad+
			\pqty{X + X^{- \dag} + 2 Y X^{- \dag}}
		\end{aligned}
	}^{- 1}
	\,.
\end{align}
This is exactly the linear fractional transformation \(\amat'=\phi_{\bar E}(\amat)\) in our DCBA convention \cref{def:fractional-transformation}, with \(\bar E\) as in \cref{eq:sympexp}.
\end{proof}

\end{document}